    \newtheorem{theorem}{Theorem}[section]
    \newtheorem{lemma}[theorem]{Lemma}
    \newtheorem{proposition}[theorem]{Proposition}
    \newtheorem{corollary}[theorem]{Corollary}
    \newtheorem{claim}[theorem]{Claim}
    \newtheorem{fact}[theorem]{Fact}
    \newtheorem{example}[theorem]{Example}
    \newtheorem{open}[theorem]{Open Problem}
    \newtheorem{observation}[theorem]{Observation}    
    \theoremstyle{definition}
    \newtheorem{definition}[theorem]{Definition}
    \newtheorem{construction}[theorem]{Construction}
    \newtheorem{remark}[theorem]{Remark}
    \newtheorem{notation}[theorem]{Notation} 
    \newtheorem{assumption}[theorem]{Assumption} 
\newcommand{\cS}{\mathcal{S}}
\newcommand{\cU}{\mathcal{U}}
\newcommand{\cV}{\mathcal{V}}
\newcommand{\NN}{\mathbb{N}}
\newcommand{\ZZ}{\mathbb{Z}}
\newcommand{\sF}{\mathsf{F}}
\newcommand{\sK}{\mathsf{K}}
\newcommand{\sL}{\mathsf{L}}
\newcommand{\bg}{\mathbf{b}}
\newcommand{\bsigma}{\bm{\sigma}}
\newcommand{\ind}[2]{\lvert #1 : #2 \rvert}
\DeclarePairedDelimiter{\abs}{\lvert}{\rvert}
\DeclarePairedDelimiter{\norm}{\lVert}{\rVert}
\newcommand{\defn}[1]{\textbf{#1}}
\DeclareMathOperator{\supp}{supp}
\DeclareMathOperator{\Aut}{Aut}
\DeclareMathOperator{\argmax}{argmax}
\DeclareMathOperator{\argmin}{argmin}
\DeclareMathOperator{\Hom}{Hom}
\DeclareMathOperator{\aHom}{aHom}
\DeclareMathOperator{\poly}{poly}
\DeclareMathOperator{\Alt}{Alt}
\DeclareMathOperator{\Sym}{Sym}
\DeclareMathOperator{\dom}{dom}
\DeclareMathOperator{\sgn}{sgn}
\DeclareMathOperator{\agr}{agr}
\DeclareMathOperator{\actson}{\curvearrowright}
\DeclareMathOperator{\dotcup}{\dot{\cup}}
\DeclareMathOperator{\emptymultiset}{\{\{\}\}}
\DeclareMathOperator{\SubSum}{SubSum}
\DeclareMathOperator{\Conj}{Conj}
\DeclareMathOperator{\Sub}{Sub}
\DeclareMathOperator{\SubLeq}{Sub^{\leq m}}
\DeclareMathOperator{\ConjLeq}{Conj^{\leq m}}
\newcommand{\SSF}{\mathfrak{F}}
\DeclareMathOperator{\HExt}{HExt} 
\newcommand{\exthom}{\varphi}
\newcommand{\vf}{\varphi}
\newcommand{\partialto}{\rightharpoonup}
\DeclareMathOperator{\val}{\texttt{val}}
\DeclareMathOperator{\out}{\texttt{out}}
\DeclareMathOperator{\True}{\texttt{True}}
\DeclareMathOperator{\Null}{\texttt{Null}}
\DeclareMathOperator{\Error}{\texttt{Error}}
\newcommand{\HomExt}{\textsc{HomExt}}
\newcommand{\HomExtPerm}{\textsc{HomExtPerm}}
\newcommand{\HomExtSym}{\textsc{HomExtSym}}
\newcommand{\HomExtThreshold}{\textsc{HomExtThreshold}}
\newcommand{\MultiSS}{\textsc{MultiSSR}}
\newcommand{\OrMultiSS}{\textsc{OrMultiSSR}}
\newcommand{\TriOrMultiSS}{\textsc{TriOrMultiSSR}}
\DeclareMathOperator{\OMS}{OMS}
\newcommand{\Consolidate}{\textsc{Consolidate}}
\newcommand{\ILP}{\textsc{ILP}}
\newcommand{\MoveCoset}{\textsc{MoveCoset}}
\newcommand{\LexFirst}{\textsc{LexFirst}}
\DeclareMathOperator{\Trioracle}{\triangle}
\DeclareMathOperator{\minoracle}{\textsc{min}}
\DeclareMathOperator{\Foracle}{\mathfrak{F}\textsf{-oracle}}
\newcommand{\pcom}[1]{{\footnotesize \texttt{ \parbox[t]{\dimexpr\linewidth-\algorithmicindent*5}{\quad $\blacktriangleright$ #1 }}}}
\newcommand{\ignore}[1]{}    
\newcommand{\margincomment}[1]{}
\begin{document}
\title{\LARGE  Homomorphism Extension}
\author{Angela Wuu\thanks{\tt wu@math.uchicago.edu}}
\affil{University of Chicago}
\maketitle

\vspace{.2in}

\begin{abstract}
We define the \textsc{Homomorphism Extension (HomExt)} problem: given a partial map $\gamma: G \partialto H$, decide whether  or not 
there exists a homomorphism $\vf:  G\to H$ extending $\gamma$, i.e., $\vf|_{\dom \gamma} = \gamma$. This problem arose in the context of list-decoding homomorphism codes but is also of independent interest, both as a problem in computational group theory and as a new and natural problem in NP of unsettled complexity status.

We consider the case $H=S_m$ (the symmetric group of degree $m$), i.e., $\gamma : G \partialto H$ gives a group action by the subgroup generated by the domain of $\gamma$. 
We assume $G\le S_n$ is given as a permutation group by a list of generators. We characterize the equivalence classes of extensions in terms of a multi-dimensional oracle subset-sum problem.  From this we infer that for bounded $G$ the HomExt problem can be solved in polynomial time.

Our main result concerns the case $G=A_n$ (the alternating group of degree $n$) for variable $n$ under the assumption that the index of $M$ in $G$ is bounded by poly$(n)$.  We solve this case in polynomial time for all $m < 2^{n-1}/\sqrt{n}$. This is the case with direct relevance to list-decoding homomorphism codes (Babai, Black, and Wuu, arXiv 2018); it is a necessary component in one of the main algorithms of that paper. 

\end{abstract}


\section{Introduction}
\label{section:intro}

\textsc{Homomorphism Extension} asks whether a group homomorphism from a subgroup can be extended to a homomorphism from the entire group. We consider the case that the groups are represented as permutation groups. The complexity of this natural problem within NP is unresolved. 

\subsection{Connection to list-decoding homomorphism codes}
Our study is partly motivated by our recent work on local list-decoding homomorphism codes from alternating groups~\cite{homcodes}. For groups $G$ and $H$, the set
of {$G\to H$} (affine) homomorphisms can be viewed as a code.   The study of list-decoding such codes originates with the celebrated paper by Goldreich and Levin~\cite{GL89} and has more recently been championed by Madhu Sudan and his coauthors~\cite{GKS06, DGKS08, GS14}. While this body of work pertains to groups that are ``close to abelian'' (abelian, nilpotent, some classes of solvable groups), in~\cite{homcodes} we began the study of the case when the group $G$ is not solvable.  As a test case, we have studied the alternating groups and plan to study other classes of simple groups.

For homomorphism codes, the ``code distance'' corresponds to the maximum agreement $\Lambda$ between two homomorphisms. The list-decoding efforts described in Babai, Black, and Wuu \cite{homcodes} only guarantee returning $M\partialto H$ partial homomorphisms, defined on subgroups $M\le G$ of order $\abs{M} >\Lambda\abs{G}$. In the case of solvable groups (all previously studied cases fall in this category), maximum agreement sets are subgroups of smallest index\footnote{ Strictly speaking, this statement requires the ``irrelevant kernel'' to be trivial. The irrelevant kernel is the intersection of the kernels of all $G \to H$ homomorphisms, cf.~\cite[Section 4]{homcodes}. The $\{$solvable$\to$nilpotent$\}$ case appears in~\cite{G15}.}, so $G$ is the only subgroup of $G$ of order greater than $\Lambda$.  This is not the case, however, for groups in general; in particular, it fails for the alternating groups $A_n$ where a maximum agreement set can be a subgroup of index $\binom{n}{2}$ (but not smaller). To solve the list-decoding problem, we need to extend these partial homomorphisms to full homomorphisms, i.e., we need to solve the Homomorphism Extension Search Problem for subgroups $M$ of order $\abs{M} > \Lambda\abs{G}$ (and therefore, of small index). Indeed, a special case of the main result here (Theorem~\ref{thm:main}) is used, and is credited to this paper, in Babai, Black, and Wuu~\cite{homcodes} to complete the proof of one of the main results of that paper. For a more detailed explanation, see part (b) of Section~\ref{section:appendix-motivation}, especially Remark~\ref{rmk:roadblock-to-LD}.

\subsection{Definition and results}
\label{section:intro-results}
	
We define the \textsc{Homomorphism Extension} problem. Denote by $\Hom(G,H)$ the set of homomorphisms from group $G$ to group $H$. 

\begin{definition}
\label{def:homext-problem}
\textsc{Homomorphism Extension}  \\
\indent \textbf{Instance:} Groups $G$ and $H$ and a partial map $\gamma: G \partialto H$. \\
\indent \textbf{Solution:} A homomorphism $\vf \in \Hom(G,H)$ that extends $\gamma$, i.e., $\vf|_M = \gamma$. 
\end{definition}

The \textsc{Homomorphism Extension} Decision Problem ($\HomExt$) asks whether a solution exists. 

\begin{remark}
Our algorithmic results for $\HomExt$ solve the \textsc{Homomorphism Extension} Search Problem as well, which asks whether a solution exists and, if so, to find one.
\end{remark}

The problems as stated above are not fully specified. Representation choices of the groups $G$ and $H$ affect the complexity of the problem. For example, $G$ may be given as a permutation group, a black-box group, or a group given by a generator-relator presentation. 

For the rest of this paper we restrict the problem to permutation groups. 

\begin{definition}
$\HomExtPerm$ is the version of {\HomExt} where the groups are permutation groups \emph{given by a list of generators.} $\HomExtSym$ is the subcase of $\HomExtPerm$ where the codomain $H$ is a symmetric group.  
\end{definition} 
Membership in permutation groups is polynomial-time testable. Our standard reference for permutation group algorithms is~\cite{SeressPGA}. Section~\ref{section:pga} summarizes the results we need, including material not easily found in the literature.  Our standard reference for permutation group theory is~\cite{DM}. 

Partial maps are represented by listing their domain and values on the domain. 
Homomorphisms in $\Hom(G,H)$ are represented by their values on a set of generators of $G$. 

For a partial map $\gamma: G \partialto H$, we denote by $M_\gamma := \langle \dom \gamma \rangle$ the subgroup of $G$ generated by the domaim $\dom \gamma$ of $\gamma$. 
\begin{remark}
\label{rmk:promise}
Whether the input map $\gamma: G \partialto H$ extends as a homomorphism in $\Hom(M_\gamma, H)$ is a polynomial-time testable condition in permutation groups. See  Section~\ref{section:promise}. 
\end{remark}

Since extending to $M_\gamma \leq G$ is easy, this  paper is primarily concerned with extending a homomorphism from a subgroup to  a homomorphism from the whole group.
\begin{assumption}[Given partial map defines a homomorphism on subgroup]
Unless otherwise stated, in our analysis we assume without loss of generality that the input partial map $\gamma: G \partialto H$ extends to a homomorphism in $\Hom(M_\gamma, H)$. This is possible due to Remark~\ref{rmk:promise}. In this case, the homomorphism $\psi$ is represented by $\gamma$, as a partial map on generators of $M_\gamma$. We will think of $\psi$ as the input to $\HomExt$. We often drop the subscript on $M_\gamma$. 
\end{assumption}

Since a minimal set of generators of a permutation group of degree $n$ has no more than $2n$ elements~\cite{Bab_subgroupchain} and any set of generators can be reduced to a minimal set in polynomial time, we shall assume our permutation groups are always given by at most $2n$ generators.

We note that the decision problem {\HomExtPerm} is in NP.
\begin{open}
Is $\HomExtPerm$ NP-complete?
\end{open}
This paper considers the important subcase of the problem when $H=S_m$, the symmetric group of degree $m$.  A homomorphism $G\to S_m$ is called a \textbf{group action} (more specifically, a \textbf{$G$-action}) on the set $[m]=\{1,\dots,m\}$.

The {\HomExtSym} problem seems nontrivial even
for bounded $G$ (and variable $m$). 

\begin{theorem}[Bounded domain] \label{thm:bounded}
If $G$ has bounded order, then $\HomExtSym$ can be solved in polynomial time. 
\end{theorem}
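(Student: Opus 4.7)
The plan is to reformulate the problem in terms of group actions. A homomorphism $\varphi\colon G\to S_m$ is the same data as a $G$-action on $[m]$, which decomposes as a disjoint union of transitive $G$-sets; each transitive $G$-set is (up to $S_m$-conjugacy) of the form $G/K$ for some subgroup $K\le G$, classified by the $G$-conjugacy class of $K$. Since $|G|=O(1)$, the number of conjugacy classes of subgroups of $G$ is bounded; let $K_1,\dots,K_r$ be representatives, and similarly let $L_1,\dots,L_s$ be representatives of the $M$-conjugacy classes of subgroups of $M=M_\gamma$. I encode a $G$-action on $[m]$ up to $S_m$-conjugacy by the orbit-type vector $(a_1,\dots,a_r)\in\ZZ_{\ge 0}^r$ counting orbits of each type, subject to $\sum_i a_i\,[G:K_i]=m$. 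For each $i,j$, I precompute in $O(1)$ time an integer $d_{ij}$ equal to the number of $M$-orbits of type $L_j$ appearing in the restriction of $G/K_i$ to $M$, using the double-coset decomposition $M\backslash G/K_i$ together with the identification $MgK_i\cong M/(M\cap gK_ig^{-1})$ of $M$-sets.

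First I would compute, in polynomial time via standard permutation group algorithms, the $M$-orbit decomposition of $[m]$ under the given homomorphism $\psi$, and identify the type of each orbit by computing a point stabilizer in $M$ and matching it up to $M$-conjugacy against $L_1,\dots,L_s$. This yields the $M$-orbit type vector $(b_1,\dots,b_s)$. The key claim is that $\psi$ extends to some $\varphi\in\Hom(G,S_m)$ if and only if there exist non-negative integers $a_1,\dots,a_r$ satisfying
\[ \sum_i a_i\, d_{ij}=b_j \ \text{for all } j,\qquad \sum_i a_i\, [G:K_i]=m. \]
The ``only if'' direction is immediate from the restriction analysis. For the ``if'' direction, I would argue that extendability of a given $M$-action on $[m]$ depends only on its $S_m$-conjugacy class (if $c\in S_m$ conjugates $M$-action $\mu_1$ to $\mu_2$, then $c(\cdot)c^{-1}$ carries extensions of $\mu_1$ to extensions of $\mu_2$), and an $M$-action on $[m]$ is determined up to $S_m$-conjugacy by its type vector. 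Hence if $(a_i)$ exists, the canonical $G$-set $\bigsqcup_i a_i(G/K_i)$ realizes the type vector $(b_j)$ and simultaneously provides an explicit $G$-action extending some $M$-action of that type, which suffices.

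The problem thus reduces to the feasibility of a system of $O(1)$ linear equations in $O(1)$ non-negative integer unknowns whose coefficients and right-hand sides have bit-length $O(\log m)$; this is solvable in polynomial time by integer programming in fixed dimension. For the \emph{search} version, once a witness $(a_i)$ is found, I would build the canonical disjoint union $\bigsqcup_i a_i(G/K_i)$ together with its $G$-action, compute an $S_m$-element $c$ intertwining its restricted $M$-action with $\psi$ orbit by orbit (pair up orbits of matching type, pick a base point in each, and read off the bijection $mL\mapsto m\cdot x_0$ from the regular representation), and return the conjugated extension.

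The main obstacle is the ``if'' direction of the key claim together with the orbit-by-orbit intertwiner construction in the search variant; both rest on the principle that $M$-actions on a finite set are classified up to $S_m$-conjugacy by the multiset of isomorphism types of their orbits. Granting this, the remaining work is fixed-dimension integer programming and bookkeeping inside bounded groups, all handled in polynomial time.
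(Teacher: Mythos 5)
Your proposal is correct and follows essentially the same route as the paper: your orbit-type vectors and double-coset coefficients $d_{ij}$ are exactly the paper's reduction to multi-dimensional subset sum with repetition (the multisets $\sF_L$ and Equation~\eqref{eqn:HE comb char}), with the ``if'' direction resting on the same classification of actions up to $S_m$-conjugacy by orbit types and the same gluing of a canonical coset model. Solving the resulting bounded-dimension integer program via Lenstra/Kannan, and recovering an explicit extension by an orbit-by-orbit intertwiner, is also how the paper handles the bounded-$G$ case.
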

The degree of the polynomial in the polynomial running time is exponential in $\log^2\abs{G}$. 
\begin{open}
Can $\HomExtSym$ be replaced by $\HomExtPerm$ in Theorem~\ref{thm:bounded}, i.e., can $H = S_m$ be replaced by $H \leq S_m$?
\end{open}

Our main result, the one used in our work on homomorphism codes, concerns variable $n$ and is stated next.

In the results below, ``polynomial time'' refers to $\poly(n,m)$ time. 
\margincomment{What about $G = S_n$.}
\begin{theorem}[Main]  \label{thm:main}
If $G=A_n$ (alternating group of degree $n$),  $\HomExtSym$ can be solved in polynomial time under the following assumptions. 
\begin{itemize}
 \item[(i)] The index of $M$ in $A_n$ is bounded by $\poly(n)$, and
 \item[(ii)] $m < 2^{n-1}/\sqrt{n}$, where $H = S_m$.
\end{itemize}
\end{theorem}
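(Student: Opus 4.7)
My plan is to invoke the classification of small-index subgroups of $A_n$ to constrain the orbit structure of any candidate extension $\varphi$, and then reduce the existence of an extension to a combinatorial matching problem on $M$-orbits that can be solved in time $\poly(n,m)$. By classical results (Bochert, later refined by Liebeck--Praeger--Saxl), for $n$ sufficiently large every subgroup $K \leq A_n$ with $[A_n:K] < 2^{n-1}/\sqrt{n}$ is intransitive on $[n]$: equivalently, $K$ is contained in some ``Young subgroup'' $(\Sym(S_1) \times \cdots \times \Sym(S_r)) \cap A_n$ for a nontrivial set partition $\{S_1,\ldots,S_r\}$ of $[n]$. The bound $m < 2^{n-1}/\sqrt{n}$ ensures that every $A_n$-orbit of a candidate $\varphi$ has degree within this classification range, so its point-stabilizer $K_\theta$ lies in a polynomial-size family of ``types'' $\theta$ parameterized by a composition of $[n]$ plus a sub-Young structure; correspondingly, $A_n/K_\theta$ is a pullback of a natural $A_n$-action on cosets of a Young subgroup.

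Given this, I would first compute the $M$-orbit decomposition $[m] = X_1 \sqcup \cdots \sqcup X_t$ under $\psi$ together with the $M$-conjugacy class of each point-stabilizer, using the permutation-group algorithms recorded in Section~\ref{section:pga}. Next, for each admissible type $\theta$, I would compute the $M$-orbit decomposition of $A_n/K_\theta$ and the $M$-stabilizer type of each piece; since $[A_n : M] \leq \poly(n)$, each $A_n/K_\theta$ has at most $\poly(n)$ many $M$-orbits, all computable in time $\poly(n,m)$. Any extension $\varphi$ then corresponds to a grouping of $X_1, \ldots, X_t$ into $A_n$-orbits so that each group matches, as a multiset of $M$-stabilizer types, the $M$-orbit structure of some $A_n/K_\theta$. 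Existence of such a grouping is equivalent to the existence of nonnegative integers $k_\theta$ with
\[
    c_\tau \;=\; \sum_\theta k_\theta\, d_\tau^{(\theta)} \qquad \text{for every $M$-stabilizer type } \tau,
\]
where $c_\tau$ is the count of $M$-orbits of type $\tau$ in $[m]$ and $d_\tau^{(\theta)}$ the analogous count in $A_n/K_\theta$. This is the ``multi-dimensional oracle subset-sum'' alluded to in the abstract.

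Once a solution is found, $\varphi$ can be assembled explicitly: for each group of matched $M$-orbits, fix any $M$-equivariant bijection from that group onto the chosen $A_n/K_\theta$ (such a bijection exists because transitive $M$-sets of the same stabilizer type are $M$-isomorphic, and the type-multisets match by hypothesis), and define $\varphi$ on that group as the pullback of the natural $A_n$-action; the resulting $\varphi$ is a homomorphism extending $\psi$ by construction. I expect the main obstacle to lie in solving the matching system in $\poly(n,m)$ time: the system may have too many variables to simply invoke Lenstra's fixed-dimension ILP, so one must exploit structural features of the natural $A_n$-actions on Young-subgroup quotients---for instance, that a ``signature'' stabilizer type pins down $\theta$ up to a small ambiguity---to resolve the system greedily or by triangular elimination. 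The sharp form of the bound $m < 2^{n-1}/\sqrt{n}$ (as opposed to a coarser one) is likely essential precisely here, to keep both the classification of types $\theta$ and the resulting matching problem polynomially describable.
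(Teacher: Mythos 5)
Your overall shape is the right one---decompose $[m]$ into $M$-orbits with stabilizer types, and ask whether the multiset of types can be written as a nonnegative integer combination of the type-multisets arising from transitive $A_n$-actions restricted to $M$---and this is indeed the paper's reduction (the multisets you call $d^{(\theta)}_\tau$ are the paper's $\sF_L$, computed from double coset representatives of $L\backslash A_n/M$). But there is a genuine gap at the step you wave at with ``a polynomial-size family of types $\theta$.'' That claim is false. Under the hypothesis $m<2^{n-1}/\sqrt{n}$, Jordan--Liebeck gives the sandwich $(A_n)_{(\Gamma)}\le L\le (A_n)_{\Gamma}$ with $\abs{\Gamma}<n/2$ for every candidate point stabilizer $L$, so up to conjugacy $L$ is determined by $\abs{\Gamma}$ together with an arbitrary subgroup $L^{\Gamma}\le\Sym(\Gamma)$ (subject to a parity condition); since $\abs{\Gamma}$ can be as large as roughly $n/2$, the number of candidate types is governed by the number of conjugacy classes of subgroups of $S_{\lfloor n/2\rfloor}$, which is $2^{\Theta(n^2)}$ (Pyber), not $\poly(n)$. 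Consequently your plan ``for each admissible type $\theta$, compute the $M$-orbit decomposition of $A_n/K_\theta$'' cannot be carried out, and the resulting linear system has superpolynomially many variables, so neither brute enumeration nor Lenstra's fixed-dimension ILP applies. (Note also that mere intransitivity/containment in a Young subgroup, which is all you extract from the classification, is too weak; the full sandwich is what is used below.)

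The missing idea---which you correctly guess at in your last sentences but neither state nor prove---is precisely the heart of the paper's argument: a \emph{triangularity} statement that lets one solve the system greedily without ever listing the types. Concretely, writing $\sF_L=\{\{\sigma_i^{-1}L\sigma_i\cap M\}\}$ over double coset representatives, one shows (using the sandwich for both $L$ and $M$, with $(A_n)_{(\Sigma)}\le M\le (A_n)_{\Sigma}$, $\abs{\Sigma}=O(1)$ since $\ind{A_n}{M}=\poly(n)$): (a) a subgroup sandwiched over $\bar\Gamma$ is uniquely determined by the pair $(\bar\Gamma, L^{\bar\Gamma})$; and (b) choosing $g$ with $\Gamma^{g}\cap\Sigma=\emptyset$, the stabilizer of the \emph{longest} $M$-orbit inside a $(G,L)$-orbit, namely $K_0=g^{-1}Lg\cap M$ of maximal index in $M$, satisfies $(g^{-1}Lg)^{\bar\Gamma}=K_0^{\bar\Gamma}$, so $K_0$ determines $[L]_{G}$ via an explicitly computable map $\triangle$. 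Ordering $M$-stabilizer types by index, every $\sF_L$ has a unique extreme element $\triangle^{-1}$-compatible with $[L]$, the resulting (implicit, exponentially large) integer program is upper triangular, and a greedy elimination---repeatedly pick an extreme type $u$ in the current $\sK$, query $\triangle(u)$ to get the unique possible $G$-orbit type, subtract the right multiple of $\sF_{\triangle(u)}$---solves it in $\poly(n,m)$ time with only polynomially many oracle computations (double cosets, stabilizers, conjugacy tests are all polynomial here because $\ind{A_n}{M}=\poly(n)$), and the solution is unique when it exists. Without proving (a), (b), and the correctness of this greedy elimination, the proposal does not yield a polynomial-time algorithm; your ``assembly'' step at the end is fine and matches the paper's gluing construction, but it only kicks in after the subset-sum instance has been solved.
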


Under the assumptions above, counting the number of extensions is also polynomial-time. 
\begin{theorem}[Main, counting]
Under the assumption of Theorem~\ref{thm:main}, the number of solutions to the instance of $\HomExtSym$ can be found in polynomial time. 
\end{theorem}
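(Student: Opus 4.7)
The plan is to upgrade the algorithm behind Theorem~\ref{thm:main} from search to counting. Recall that an extension $\vf: A_n \to S_m$ of $\psi$ is nothing but an $A_n$-action on $[m]$ extending the prescribed action of $M$, and the proof of Theorem~\ref{thm:main} characterizes the set of such extensions in terms of feasible configurations of a multi-dimensional oracle subset-sum problem, each feasible configuration being an equivalence class under conjugation by $C := C_{S_m}(\psi(M))$ (the subgroup of $S_m$ that preserves the partial data $\psi$). Counting therefore splits naturally into two subtasks: (a) count the equivalence classes, and (b) for each class, count its size.

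For (a), the feasible configurations are multisets of subgroups $K_1,\ldots,K_r \le A_n$ (one per orbit of the $A_n$-action), with $\sum_i [A_n:K_i] = m$ and a compatibility constraint with $\psi$. Under the hypotheses $[A_n:M] = \poly(n)$ and $m < 2^{n-1}/\sqrt{n}$, the subgroups $K$ that can appear as orbit stabilizers have index $\le m < 2^{n-1}/\sqrt{n}$ in $A_n$, so by the classification of low-index subgroups of $A_n$ underlying Theorem~\ref{thm:main}, the pool of orbit types is polynomially sized, and the algorithm of Theorem~\ref{thm:main} already examines each feasible configuration. We adapt that algorithm so that, rather than halting at the first feasible point, it enumerates all of them (there are only polynomially many under our hypotheses, so this costs only a polynomial factor). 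For (b), the $C$-orbit of a representative $\vf$ has size $|C|/|C_{S_m}(\vf(A_n))|$ by the orbit--stabilizer theorem, since $\sigma\in C$ stabilizes $\vf$ under conjugation iff $\sigma$ centralizes $\vf(A_n) \supseteq \psi(M)$. Both $|C| = |C_{S_m}(\psi(M))|$ and $|C_{S_m}(\vf(A_n))|$ are centralizer orders of explicitly given subgroups of $S_m$ and are computable in polynomial time by the permutation-group algorithms collected in Section~\ref{section:pga}. Summing $|C|/|C_{S_m}(\vf(A_n))|$ over the enumerated equivalence classes yields the total number of extensions.

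The main obstacle is confirming that the search algorithm of Theorem~\ref{thm:main} really does produce, or can be modestly modified to produce, a \emph{complete} list of feasible configurations in polynomial time, not merely a single one. Everything else---the orbit-stabilizer formula, the centralizer computations, and the final summation---is routine, and the polynomial bound on the number of classes (inherited from the polynomial bounds on the number of admissible orbit-type multisets under assumptions (i) and (ii) of Theorem~\ref{thm:main}) guarantees that the total running time remains polynomial in $n$ and $m$.
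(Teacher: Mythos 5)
Your step (b) coincides with the paper's: a class of equivalent extensions with representative $\exthom$ has size $\abs{C_{S_m}(\psi(M))}/\abs{C_{S_m}(\exthom(G))}$ by orbit--stabilizer, and both centralizers (hence the index) are computable in $\poly(n,m)$ time (Sections~\ref{section:within eq class} and~\ref{section:appendix-centralizer}). The genuine gap is in your step (a). You propose to count the equivalence classes by modifying the search algorithm so that it ``enumerates all feasible configurations,'' and you justify this by an asserted polynomial bound on the number of admissible orbit-type multisets; neither claim is substantiated, and you yourself flag completeness of the enumeration as the main obstacle. A polynomially sized pool of admissible orbit stabilizers does not by itself bound the number of multisets of them whose associated $[\sF_L]$'s sum to $[\sK]$, and the algorithm behind Theorem~\ref{thm:main} is not a search over configurations that becomes exhaustive by ``not halting at the first feasible point'': it is the triangular row-reduction procedure of Algorithm~\ref{alg:MultiSS}, which constructs a single forced candidate.

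The paper closes exactly this gap with a uniqueness statement you do not use: under hypotheses (i) and (ii), the instance $\OMS_\psi$ is an instance of $\TriOrMultiSS$ (Theorem~\ref{thm:main-methods-trireduction}), and a solution of $\TriOrMultiSS$, if it exists, is unique (Proposition~\ref{prop:main-methods-triunique}); combined with the bijection of Theorem~\ref{thm:main-reduction}(b) between solutions of $\OMS_\psi$ and classes of equivalent extensions, there is at most \emph{one} equivalence class of extensions of $\psi$. Hence no enumeration is needed: run the polynomial-time search of Proposition~\ref{prop:main-methods-trisearch}; if it reports no solution the count is $0$, and otherwise the count is the single centralizer index from your step (b). If you replace your step (a) by this uniqueness argument, the rest of your proposal is correct and is the paper's argument.
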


Note the rather generous upper bound on $m$ in item (ii). Whether an instance of $\HomExtSym$ satisfies the conditions of Theorem~\ref{thm:main} can be verified in $\poly(n)$ time (see Section~\ref{section:promise}). 

We state a polynomial-time result for very large $m$ (Theorem~\ref{thm:large-m}, of which Theorem~\ref{thm:bounded} is a special case). 

\begin{theorem}[Large range] \label{thm:large-m}
If $G \leq S_n$ and $m > 2^{1.7^{n^2}}$, then $\HomExtSym$ can be solved in polynomial time. 
\end{theorem}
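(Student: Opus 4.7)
}
My approach is to reduce the problem to counting orbit multiplicities and solving a bounded-dimensional integer linear program. Recall that a homomorphism $\varphi: G \to S_m$, viewed as a $G$-action on $[m]$, decomposes uniquely into transitive $G$-actions on its orbits, each of which is equivalent to a coset action $G/H_i$ for some subgroup $H_i \le G$. Up to $S_m$-conjugacy, $\varphi$ is therefore determined by the multiset of conjugacy classes $[H_i]$ with $\sum_i [G:H_i] = m$. The restriction $\varphi|_M$ further decomposes each $G/H_i$ into $M$-orbits indexed by the double cosets $M \backslash G / H_i$, with the stabilizer of the double coset $M g H_i$ in $M$ being $M \cap g H_i g^{-1}$. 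This is the structural backbone of the reduction.

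First I would enumerate all conjugacy classes $[H_1],\dots,[H_r]$ of subgroups of $G$ using standard permutation-group-algorithm machinery (\cite{SeressPGA}, Section~\ref{section:pga}). Since $G \le S_n$, both $|G|$ and $r$ are bounded by an $n$-only function; in particular one expects $r \le c^{n^2}$ by Pyber-type bounds, which is where the constant $1.7$ in the theorem hypothesis plausibly originates. For each $[H_j]$ I would compute the ``restriction profile'' $c_{[K],[H_j]}$ counting how many $M$-orbits of isomorphism type $M/K$ appear in the coset action $G/H_j$ restricted to $M$, using double-coset enumeration in $G$.

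Next I would compute the $M$-orbit profile of the given $\psi = \gamma$: for each conjugacy class $[K]$ of subgroups of $M$, let $b_{[K]}$ be the number of $M$-orbits of type $M/K$ under $\psi$. This can be read off from the actual partition of $[m]$ under $M$ (orbit computation is polynomial time in $n,m$). The existence of a homomorphism $\varphi: G \to S_m$ whose restriction to $M$ has the \emph{isomorphism type} of $\psi$ is then equivalent to feasibility of the integer linear program
\begin{equation*}
\sum_{j=1}^r c_{[K],[H_j]}\, x_{[H_j]} \;=\; b_{[K]} \quad \text{for every } [K], \qquad \sum_{j=1}^r [G:H_j]\, x_{[H_j]} = m, \qquad x_{[H_j]} \in \ZZ_{\ge 0}.
\end{equation*}
Since the number of variables $r$ depends only on $n$, Lenstra's algorithm for ILP in fixed dimension solves feasibility and produces a witness in time polynomial in the bit-length of the input, i.e., $\poly(n,\log m)$.

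The main obstacle — and the step where the enormous lower bound on $m$ enters — is promoting an abstract ILP solution to an actual extension $\varphi$ that literally agrees with $\psi$ (not merely with the isomorphism type of $\psi$). To do this I would, for each $G$-orbit class $[H_j]$ slated to appear $x_{[H_j]}$ times, select a matching family of $M$-orbits of $\psi$ whose types align with the double-coset profile of $G/H_j$, and then stitch each selected family together into a single $G$-orbit by defining the action of a chosen set of coset representatives for $M \backslash G$. Because $m > 2^{1.7^{n^2}}$ dominates every relevant quantity (number of subgroup classes, coefficients in the ILP, number of required orbits of any type), we expect the matching to be realizable greedily and the reconstruction to go through with no additional combinatorial obstruction. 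Verifying this ``abundance implies realizability'' statement — and showing that the specific permutations required to stitch orbits together can be chosen consistently across different $[H_j]$ — is the delicate part of the argument.
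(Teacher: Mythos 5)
Your overall route is the same as the paper's: decompose an extension into transitive constituents indexed by conjugacy classes of subgroups of $G$, record how each coset action restricts to $M$ via double cosets, and decide whether the $M$-orbit profile of $\psi$ is a nonnegative integral combination of these restriction profiles by solving the resulting integer program with Lenstra--Kannan (this is precisely the reduction of Theorem~\ref{thm:main-reduction} combined with Section~\ref{section:large}). The gap is in how you account for the cost and, consequently, where you think the hypothesis $m>2^{1.7^{n^2}}$ is used. Your ILP is \emph{not} fixed-dimensional: the number of variables is the number of conjugacy classes of subgroups of $G$, which by Pyber's bound can be of order $1.69^{n^2}$ and grows with $n$. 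Lenstra--Kannan then gives time $N^{O(N)}\cdot s$ (Theorem~\ref{thm:prelim KannanIP}) with $N$ up to $1.69^{n^2}$, i.e.\ roughly $2^{O(n^2\,1.69^{n^2})}$, not $\poly(n,\log m)$; your stated $\poly(n,\log m)$ bound is also impossible on its face, since merely reading $\psi$ and computing the orbit profile costs $\Omega(m)$. The entire point of the hypothesis on $m$ is that this $N^{O(N)}$ term (and the brute-force enumeration of subgroup classes) is dominated by $m$, so the algorithm is $\poly(n,m)$ --- you note the Pyber connection but never carry it into the complexity analysis.

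Correspondingly, the step you flag as ``the delicate part'' and for which you invoke abundance of orbits needs no largeness assumption at all, and it would be a genuine missing proof if it did. Upgrading an ILP solution about isomorphism types to an extension that literally restricts to $\psi$ is handled unconditionally by the paper's gluing argument: by Lemma~\ref{lemma:FsubL-Mactions} and Construction~\ref{def:extension map}, once the multiset identity $[\sK]=\sum_{L\in\sL}[\sF_L]$ of Theorem~\ref{thm:HE comb main} holds, one matches the $M$-orbits of $\psi$ to the summands by conjugacy of point stabilizers, relabels each matched family by cosets, and lets $G$ act naturally on each glued family (Proposition~\ref{prop:gluing}); each $G$-orbit is assembled independently, so there is no consistency condition ``across different $[H_j]$,'' and the resulting homomorphism extends $\psi$ exactly, not merely up to equivalence (Theorem~\ref{thm:main-reduction}(b),(c)). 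So the ``abundance implies realizability'' claim should be replaced by this constructive lemma, and the hypothesis on $m$ should be moved to where it actually does the work, namely absorbing the superexponential Lenstra cost.
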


\subsection{Methods}

We prove  the results stated above by reducing $\HomExtSym$ to a polynomial-time solvable case of a multi-dimensional oracle version of Subset Sum with Repetition  (SSR). 
SSR asks to represent a target number as a non-negative \textit{integral linear combination} of given numbers, whereas the classical Subset Sum problem asks for a\textit{ 0-1 combination}. SSR is NP-complete by easy reduction from Subset Sum. 

We call the multi-dimensional version of the SSR problem $\MultiSS$. 
The reduction from homomorphism extension to $\MultiSS$ is the main technical contribution of the paper (Theorem~\ref{thm:main-reduction} below). 

The reduction is polynomial time; therefore, the complexity of our solutions to $\HomExtSym$ will be the complexity of special cases of $\MultiSS$ that arise. The principal case of $\MultiSS$ is one we call ``triangular'' ; this case can be solved in polynomial time. The difficulty is aggravated by exponentially large input to $\MultiSS$, to which we assume oracle access ($\OrMultiSS$ Problem). Implementing oracles calls will amount to solving certain problems in computational group theory, addressed in Section 8 of the Appendix.

The $\MultiSS$ problem takes as input a multiset $\sK $ in universe $\cU$ (viewed as a non-negative integral function $\sK: \cU \to \ZZ^{\geq 0}$ ) and a set $\SSF$ of multisets in $\cU$ . $\MultiSS$ asks if $\sK$ is a nonnegative integral linear combination of multisets in $\SSF$ (see Section~\ref{section:reduction}). The set $\SSF$ will be too large to be explicitly given (it will contain one member per conjugacy class of subgroups of $G$).  Instead, we contend with oracle access to the set $\SSF$. For a more formal presentation of $\MultiSS$ and $\OrMultiSS$, see Section~\ref{sec:subset-sum}.

From every instance $\psi$ of $\HomExtSym$ describing a group action, we will construct an $\OrMultiSS$ instance $\OMS_\psi$ (see Section~\ref{section:reduction}). In the next result, we describe the merits of this translation. 

Two permutation actions $\vf_1, \vf_2: G \to S_m$ are \defn{permutation equivalent} if there exists $h \in S_m$ such that $\vf_1(g) = h^{-1} \vf_2(g) h$ for all $g \in G$. 
\begin{theorem}[Translation] 
\label{thm:main-reduction}
For every instance $\psi \in \Hom(M, S_m)$, the instance $\OMS_\psi$ of $\OrMultiSS$  satisfies the following. 
\begin{enumerate}[(a)]
\item $\OMS_\psi$ can be efficiently computed from $\psi$. For what this means, see Section~\ref{section:reduction}. 
\item There exists a bijection between the set of non-empty classes  of equivalent (under permutation equivalence) extensions $\exthom : G \to S_m$ and the set of solutions to $\OMS_\psi$.
\item Given a solution to $\OMS_\psi$, a representative $\exthom$ of the equivalence class of extensions can be computed efficiently. 
\end{enumerate}
\end{theorem}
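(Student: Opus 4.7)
The plan is to encode every permutation action $\vf \in \Hom(G, S_m)$ by the multiset of conjugacy classes of its point stabilizers, which reduces the existence of extensions to a combinatorial equation. The key fact from permutation group theory is that $\vf_1, \vf_2 \in \Hom(G, S_m)$ are permutation equivalent if and only if their orbit decompositions give the same multiset of $G$-conjugacy classes of point stabilizers; equivalently, each equivalence class corresponds uniquely to a multiset $\{[K_1]_G, \dots, [K_t]_G\}$ of conjugacy classes of subgroups of $G$ with $\sum_i \ind{G}{K_i} = m$, realized as the disjoint union of coset actions on $G/K_i$. For each $K \le G$, the restriction to $M$ of the coset action on $G/K$ decomposes into $M$-orbits indexed by $M$-$K$ double cosets, where the orbit of $gK$ has stabilizer $M \cap gKg^{-1}$; collecting these yields a multiset $\sK_K$ of $M$-conjugacy classes of subgroups of $M$, depending only on the $G$-conjugacy class of $K$.

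Given $\psi \in \Hom(M, S_m)$, I take the universe $\cU$ to be the set of $M$-conjugacy classes of subgroups of $M$, and define $\sK_\psi$ as the multiset of stabilizers (up to $M$-conjugacy) of the $M$-orbits of $\psi$ on $[m]$. Set $\SSF := \{\sK_K : [K]_G \text{ a conjugacy class of subgroups of } G\}$, and let $\OMS_\psi$ be the $\OrMultiSS$ instance asking for nonnegative integers $(n_K)_{[K]_G}$ with $\sum_{[K]_G} n_K \sK_K = \sK_\psi$.

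For (b), let $\vf$ be an extension of $\psi$. Its $G$-orbit decomposition yields a multiset of $G$-conjugacy classes of stabilizers $\{[K_1]_G, \dots, [K_t]_G\}$, and restricting to $M$ gives precisely the $M$-orbit decomposition of $\psi$; by the construction of $\sK_{K_i}$ we obtain $\sum_i \sK_{K_i} = \sK_\psi$, a solution of $\OMS_\psi$. Conversely, a solution $(n_K)$ determines a $G$-action $\vf_0$ as the disjoint union of $n_K$ copies of the coset action on $G/K$ for each $[K]_G$; its $M$-restriction has multiset of stabilizers $\sK_\psi$, so it is permutation equivalent to $\psi$ as an $M$-action, and conjugating $\vf_0$ by a suitable $h \in S_m$ produces a bona fide extension of $\psi$. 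Permutation-equivalent extensions give the same multiset of $G$-stabilizers and equal solutions give (up to permutation equivalence) the same $\vf_0$, so this descends to a bijection between nonempty equivalence classes of extensions and solutions to $\OMS_\psi$.

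For (a), computing $\sK_\psi$ reduces to standard orbit-and-stabilizer algorithms on $\psi(M) \le S_m$; oracle access to $\SSF$ amounts to, on input generators of $K \le G$, enumerating $M$-$K$ double coset representatives in $G$ and computing the intersections $M \cap gKg^{-1}$ up to $M$-conjugacy. For (c), from a solution $(n_K)$ one constructs the concrete coset actions $G \to \Sym(G/K)$ (transferred to disjoint copies of $[m]$) and then computes the conjugating permutation $h \in S_m$ that matches the resulting $M$-action to $\psi$ orbit by orbit. The main technical obstacle is making the oracle queries tractable—double-coset enumeration is hard in general—and constructing the explicit conjugating $h$ in (c) in polynomial time; both are deferred to the paper's computational group theory section, where they are solved under the structural assumptions on $G$ and $M$ that will appear in later theorems.
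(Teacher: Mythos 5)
Your part (b) is essentially the paper's argument: equivalence classes of $G$-actions are encoded by multisets of conjugacy classes of point stabilizers, the restriction of the coset action of $G$ on $L$-cosets decomposes along $L\backslash G/M$ double cosets with stabilizers $\sigma^{-1}L\sigma\cap M$ (the paper's multiset $\sF_L$), and the converse direction is exactly the paper's ``gluing'' step, realized there by an explicit relabeling $\zeta: F_i\tau\mapsto L\sigma_i\tau$ rather than by an abstract conjugating $h\in S_m$. So the mathematical core of the translation is correct and matches the paper.

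The gap is in how you dispose of parts (a) and (c). You write that double-coset enumeration ``is hard in general'' and that tractability of the oracle queries and of the conjugating permutation is ``deferred \ldots under the structural assumptions on $G$ and $M$ that will appear in later theorems.'' That is not what the theorem asserts and not how the paper proves it: Theorem~\ref{thm:main-reduction} claims $\poly(n,m)$ efficiency for \emph{every} $M\le G\le S_n$ and every $\psi\in\Hom(M,S_m)$, with no hypothesis that $G=A_n$, that $\ind{G}{M}$ is polynomial, or that $m<2^{n-1}/\sqrt{n}$; those assumptions enter only later, for the triangular condition of $\TriOrMultiSS$. The missing observation is that every subgroup the reduction ever touches has index at most $m$: the point stabilizers of $\psi$ have index at most $m$ in $M$ (so $\cU=\ConjLeq(M)$), the oracle inputs $L$ have index at most $m$ in $G$ (so $\cV=\ConjLeq(G)$), and the intersections $\sigma^{-1}L\sigma\cap M$ again have index at most $m$ in $M$. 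With this bound, the standard permutation-group toolkit (Schreier--Sims, orbit and stabilizer computation, coset and double-coset representative enumeration in time polynomial in $n$ and the index, and conjugacy testing of subgroups in time polynomial in $n$ and the index) answers the $\equiv$ and $\mathfrak{F}$ oracles and computes $[\sK]$ in $\poly(n,m)$ time unconditionally; and for (c) the conjugating identification is obtained orbit by orbit via at most $m^2$ conjugacy checks between index-at-most-$m$ subgroups of $M$, plus the explicit coset relabeling $\zeta$. Without this index bound your proof does not establish (a) and (c) as stated, and appealing to the later structural assumptions would prove a strictly weaker theorem than the one claimed (and one insufficient for the paper's use of it, e.g.\ in the large-$m$ result).
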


Here, ``efficiently'' means in $\poly(n,m)$-time. The universe $\cU$ of $\OMS_\psi$ will be the conjugacy classes of subgroups of $M$. The set $\SSF$ will be indexed by the conjugacy classes of subgroups of $G$. These sets can be exponentially large. For $G= S_n$, $\abs{\SSF} = \exp(\widetilde{\Theta}(n^2))$ by \cite{PyberSubgroupsSn}.\\

Now, it suffices to efficiently find solutions to instances $\OMS_\psi$ of $\OrMultiSS$ arising under this reduction. 

Theorem~\ref{thm:large-m} (large $m$) follows from Theorem~\ref{thm:main-reduction} and a result of Lenstra \cite{Lenstra1983} (cf. Kannan~\cite{KannanIP}), that shows $\textsc{Integer Linear Programming}$ is fixed-parameter tractable. As $\MultiSS$ can naturally be formulated as an $\abs{\cU} \times \abs{\SSF}$ integer linear program, we conclude polynomial-time solvability due to the assumed magnitude of $m$ (see Appendix, Section 7).

For Theorem~\ref{thm:main}, we will show that $\OMS_\psi$ instances satisfy the conditions of $\TriOrMultiSS$, a ``triangular'' version of $\OrMultiSS$ (see Section~\ref{section:uniqueness}). 
\begin{theorem}[Reduction to $\TriOrMultiSS$]
\label{thm:main-methods-trireduction}
If an instance $\psi$ of $\HomExtSym$ satisfies the conditions of Theorem~\ref{thm:main}, the instance $\OMS_\psi$ of $\OrMultiSS$ is also an instance of $\TriOrMultiSS$. The oracle queries can be answered in polynomial time. 
\end{theorem}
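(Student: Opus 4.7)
The plan is twofold: first, to exhibit a total order on the universe $\cU$ that makes $\OMS_\psi$ triangular in the sense of $\TriOrMultiSS$; second, to verify that the required oracle calls admit polynomial-time implementation.

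Recall from Theorem~\ref{thm:main-reduction} that in $\OMS_\psi$ the universe $\cU$ parametrizes $M$-conjugacy classes of subgroups of $M$, the target is the multiset of point-stabilizer classes of $\psi$, and $\SSF$ is indexed by $G$-conjugacy classes $[K]_G$ of subgroups of $G=A_n$ of index at most $m$; the multiset $F_{[K]}$ records the $M$-classes of $M\cap K^g$ as $g$ ranges over representatives of $M\backslash G/K$ (see Section~\ref{section:reduction}). I would order $\cU$ first by descending subgroup order and then by a fixed canonical refinement. For each $[K]_G\in\SSF$ define the leading coordinate $\mathsf{lead}([K]_G)\in\cU$ to be the $\cU$-maximal class occurring with positive multiplicity in $F_{[K]}$; concretely this is an $M$-class $[M\cap K^g]_M$ of maximum order, equal to $|K|$ precisely when some $G$-conjugate of $K$ lies in $M$. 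By construction $F_{[K]}$ is supported on $\mathsf{lead}([K]_G)$ together with classes strictly below it, which is already the upper-triangular vanishing required by $\TriOrMultiSS$.

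The crux is then injectivity of $\mathsf{lead}$ on $\SSF$. Here both hypotheses of Theorem~\ref{thm:main} come in decisively. The bound $[A_n:M]\le\poly(n)$ confines $M$ to a narrow family of ``large'' subgroups of $A_n$: by the classification of maximal subgroups of $A_n$ (Liebeck--Praeger--Saxl, with the earlier bound of Bochert) the only maximal subgroups of polynomial index are set-stabilizers and certain imprimitive wreath-product-type subgroups, so $M$ sits inside a Young-type subgroup of controlled partition shape. The bound $m<2^{n-1}/\sqrt{n}$ plays the analogous role for $\SSF$: any primitive $A_n$-action of degree in this range must be the action on $k$-subsets for some $k$, so every $[K]_G\in\SSF$ is built from Young-like stabilizers. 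In this Young-within-Young regime I expect $\mathsf{lead}([K]_G)$ to retain enough of the partition-refinement data of the embedding $K\cap M \hookrightarrow K$ to recover the $G$-class $[K]_G$, yielding injectivity (possibly after a refined tie-break on $\cU$ based on explicit partition data).

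For the oracle, each call reduces to one of: computing $F_{[K]}$ from $[K]_G$; inverting $\mathsf{lead}$; or testing that a proposed $M$-class lies in its image. These decompose into standard permutation-group primitives---subgroup intersection, double-coset enumeration, conjugacy testing in $A_n$, and subgroup order---each polynomial-time by the results summarized in Section~\ref{section:pga}. The Young-type classification further ensures that at most $\poly(n,m)$ candidates need be examined per query. The main obstacle I anticipate is the injectivity step: distinct $G$-classes of Young-type subgroups can a priori produce $G$-conjugates whose intersections with $M$ fall into the same $M$-class, and ruling this out will require a careful case analysis of double cosets of Young subgroups in $S_n$ (and their descent to $A_n$), supported by explicit partition combinatorics.
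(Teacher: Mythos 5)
There is a genuine gap, and it is at the exact point you flag as the crux. Your leading coordinate is chosen at the wrong end of the multiset: you take $\mathsf{lead}([K]_G)$ to be the class $[M\cap K^g]_M$ of \emph{maximum} order (the shortest $M$-orbit), and this map is not injective on $\cV=\ConjLeq(G)$. Concretely, take $M=(A_n)_{\{1\}}\cong A_{n-1}$ (index $n$, well within the hypotheses) and any $m\ge n$. Then both $[G]_G$ and $[M]_G$ lie in $\cV$; we have $\sF_G=\{\{M\}\}$, while for $K=M$ the coset action of $G$ on $M\backslash G$ is the natural action on $[n]$, whose $M$-orbits are the fixed point $1$ (stabilizer $M$) and $\{2,\dots,n\}$. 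So the maximum-order class in both multisets is $[M]_M$, and no tie-breaking refinement of the order on $\cU$ can repair a collision of the lead elements themselves. The same collision occurs whenever $K$ contains a $G$-conjugate of $M$. The correct distinguished element is the opposite extreme: the class of the \emph{smallest} intersection $g^{-1}Kg\cap M$, i.e.\ the point stabilizer of the \emph{longest} $M$-orbit. In the paper this is extracted via Jordan--Liebeck (Theorem~\ref{thm:JordanLiebeck}): hypothesis (i) gives $(A_n)_{(\Sigma)}\le M\le(A_n)_\Sigma$ with $\abs{\Sigma}<u$, hypothesis (ii) gives $(A_n)_{(\Gamma)}\le L\le(A_n)_\Gamma$ with $\abs{\Gamma}<v$ and $u+v<n/2$ for every point stabilizer $L$ of the extension; the minimal intersection is attained exactly when $\Gamma^g\cap\Sigma=\emptyset$, and then $(\Gamma^g,(g^{-1}Lg)^{\Gamma^g})$ determines $g^{-1}Lg$ and satisfies $(g^{-1}Lg)^{\Gamma^g}=(g^{-1}Lg\cap M)^{\Gamma^g}$ (Claims~\ref{claim:HE (Delta,LDelta) determines large L} and~\ref{claim:HE LDelta = McapL Delta}). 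This simultaneously proves injectivity and yields an \emph{explicit} inverse oracle $\triangle$ (Construction~\ref{def:triangle-oracle}), reconstructing $L$ up to $G$-conjugacy from $K_0$ by an alternating/parity case distinction on $\Alt([n]\setminus\bar\Gamma)\times K_0^{\bar\Gamma}$ versus the index-$2$ subgroup of $\Sym([n]\setminus\bar\Gamma)\times K_0^{\bar\Gamma}$.

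Beyond the direction of the lead term, your injectivity step is only an expectation ("I expect \dots to retain enough of the partition-refinement data"), so the theorem is not actually proved even under your setup; and the structural scaffolding you invoke is both heavier and partly off. No classification of maximal subgroups or of primitive actions is needed: Jordan--Liebeck alone suffices, and your statement that imprimitive wreath-product subgroups have polynomial index is false (their index is exponential), while "any primitive action of degree $<2^{n-1}/\sqrt{n}$ is a $k$-set action" is neither needed nor used by the paper, which only needs that every point stabilizer of an orbit of length at most $m$ is sandwiched between a pointwise and a setwise stabilizer of a set of size $<n/2-u$. Your oracle-implementation paragraph is fine in spirit (the paper likewise reduces the $\preccurlyeq$ oracle to index computation and the $\triangle$ oracle to orbits, induced actions, and conjugacy of point stabilizers, all polynomial time by Propositions~\ref{prop:pga-basic} and~\ref{prop:pga-subgroup}), but without the correct distinguished element and an explicit inverse construction the triangularity claim does not go through.
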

Despite only being given oracle access, $\TriOrMultiSS$ turns out to be polynomial-time solvable (see Section~\ref{section:triormultiss}, or the Appendix, Section 5). 
\begin{proposition}
\label{prop:main-methods-trisearch}
$\TriOrMultiSS$ can be solved in polynomial time. 
\end{proposition}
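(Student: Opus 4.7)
The approach mimics back-substitution for an upper-triangular linear system, in an oracle-querying, residual-based form so that neither $\cU$ nor $\SSF$ needs to be enumerated. I will assume that being \emph{triangular} amounts to the existence of a total order on $\cU$ together with a ``leading element'' injection $u : \SSF \to \cU$ such that $u_F \in \supp(F)$ for every $F \in \SSF$ and, whenever $u_{F'} < u_F$, we have $F'(u_F) = 0$. Under this hypothesis, for any valid decomposition $\sK = \sum_F c_F F$, the coefficient $c_{F^*}$ of the unique $F^*$ whose leading element equals $\max \supp(\sK)$ is forced to be $\sK(u_{F^*})/F^*(u_{F^*})$, since no other $F \in \SSF$ can contribute at that position.

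This motivates the following iterative algorithm. Maintain a list of selected pairs $(F_i, c_i)$; the residual $\sK' := \sK - \sum_i c_i F_i$ is never stored explicitly, but can be evaluated pointwise in polynomial time from the list. At each round: find the largest element $u^* \in \supp(\sK')$; query the oracle for the unique $F^* \in \SSF$ with $u_{F^*} = u^*$; set $c^* := \sK'(u^*)/F^*(u^*)$; and reject if $c^*$ is not a positive integer, otherwise append $(F^*, c^*)$. Halt in success once the residual vanishes identically. Correctness and uniqueness of the output follow by induction on rounds from the observation above: triangularity forces the coefficient selected at each step to be the only one consistent with any valid decomposition, so a solution exists iff the procedure terminates successfully.

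The chief obstacle is polynomial termination. Each round removes $u^*$ from $\supp(\sK')$ forever, since triangularity guarantees that every $F$ selected in a later round satisfies $u_F < u^*$ and so contributes nothing at $u^*$. Hence the total number of rounds is bounded by the number of distinct leading elements that ever enter the residual support. Since $\supp(\sK)$ itself has polynomially bounded size from the input encoding, the essential remaining point is to bound the number of new leading elements introduced by the subsequent subtractions of $c^* F^*$; I would expect this to be part of the formal definition of $\TriOrMultiSS$ (for instance, a clause bounding the support size of each oracle-returned multiset by $\poly$ of the input length, or an a priori bound on the length of descending chains in the relevant suborder of $\cU$). Maintaining a priority queue over the running support keeps the per-round ``find-max'' cost polylogarithmic, so together with such a polynomial bound on the iteration count, the algorithm runs in overall $\poly(n,m)$ time.
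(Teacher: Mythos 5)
Your algorithm is essentially the paper's: greedily process the extremal element of the residual support, use the triangular condition to argue that only one member of $\SSF$ can contribute there, divide to get the forced coefficient, subtract, and recurse; uniqueness and correctness-by-induction are the same (the paper works with the $\preccurlyeq$-minimal element of $\supp(\sK)$ rather than the maximal one, which is immaterial, and its triangular condition says the leading element $\tau(v)$ is the \emph{unique minimal element of} $\sF_v$, slightly stronger than the cross-condition you posited).

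The genuine gap is exactly the one you flag and then resolve incorrectly: polynomial termination. The definition of $\TriOrMultiSS$ does \emph{not} contain an extra clause bounding the number of ``new'' positions created by subtractions, and no such clause is needed. The missing ingredient is a nonnegativity check at every subtraction: the paper's \textsc{Remove} subroutine returns `no solution' as soon as $\sK - m\cdot\sF$ has a negative value. This rejection is sound, because in any valid decomposition of the current residual every multiset used has its support (with sufficient multiplicity) inside the residual's support; and it is what makes the analysis go through, since with it the residual support can never gain elements, while each successful round permanently clears the selected extremal element (the other elements of the chosen $\sF_v$ lie strictly on the other side of it, and all later chosen multisets vanish there). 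Hence the loop runs at most $\abs{\supp(\sK)}$ times, and $\abs{\supp(\sK)}$ is part of the explicit input length. In your version, which never tests nonnegativity, a failing instance can cascade: subtracting $c^*F^*$ introduces positions from $\supp(F^*)\setminus\supp(\sK')$ with negative values, these become future maxima, trigger further selections, and your own counting only yields a bound of the circular form $r\le \abs{\supp(\sK)} + r\cdot\max_F\abs{\supp(F)}$, so no polynomial bound on the number of rounds follows (failure is eventually detected, but possibly only after superpolynomially many rounds, since $\cU$ is exponentially large and the processed elements need only descend). Two smaller omissions: you must handle the case that the current extremal element is not the leading element of any member of $\SSF$ (the paper's $\triangle$ oracle returns $\Error$ and the algorithm rejects), and because elements of $\cU$ have non-unique encodings, all support manipulations must be funneled through the $\equiv$ oracle (the paper's \textsc{Consolidate} step) rather than a plain priority queue.
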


\begin{proposition}
\label{prop:main-methods-triunique}
If a solution to $\TriOrMultiSS$ exists, then it is unique. 
\end{proposition}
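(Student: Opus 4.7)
The plan is to exploit the ``triangular'' structure built into the definition of $\TriOrMultiSS$ to reduce uniqueness to a routine forward-substitution argument. The word ``triangular'' suggests that an instance of $\TriOrMultiSS$ comes equipped with compatible orderings $u_1, u_2, \dots$ on the universe $\cU$ and $F_1, F_2, \dots$ on the multiset family $\SSF$ such that the incidence matrix $(F_j(u_i))_{i,j}$ is lower-triangular with nonzero diagonal entries; equivalently, $F_j(u_i) = 0$ for $i < j$ and $F_j(u_j) > 0$. Once this structure is in hand, the $u_j$-coordinate of the equation $\sK = \sum_F c_F \cdot F$ only involves the coefficients $c_{F_1}, \dots, c_{F_j}$.

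Given this setup, I would argue uniqueness by contradiction. Suppose $(c_F)$ and $(c'_F)$ are two distinct nonnegative integral solutions to $\sK = \sum_F c_F F$, and let $j$ be the least index at which they disagree. Evaluate both sums at $u_j$:
\[
\sK(u_j) \;=\; \sum_{i \le j} c_{F_i} F_i(u_j) \;=\; \sum_{i \le j} c'_{F_i} F_i(u_j),
\]
where the sums truncate by the triangular vanishing $F_i(u_j)=0$ for $i>j$. Subtracting the two equations and using $c_{F_i} = c'_{F_i}$ for $i<j$ (by minimality of $j$) yields $(c_{F_j} - c'_{F_j}) F_j(u_j) = 0$. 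Since $F_j(u_j) \neq 0$, we obtain $c_{F_j} = c'_{F_j}$, a contradiction. Hence the solution is unique whenever it exists.

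The main subtlety, and the step I would spend the most care on, is justifying that the triangular structure actually applies to the full family $\SSF$ that a hypothetical solution can involve, not merely to a subfamily enumerated by the oracle. Since uniqueness is a purely combinatorial statement about the equation $\sK = \sum_F c_F F$ regardless of oracle access, it suffices that the definition of $\TriOrMultiSS$ imposes the triangular property \emph{globally} on $(\cU,\SSF)$; I would verify from the definition in Section~\ref{section:uniqueness} that this is indeed the case. Everything else is the short linear-algebraic argument above.
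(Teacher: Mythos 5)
Your argument is correct and is essentially the paper's: the paper likewise passes to the integer-program view, orders the constraints of $\cU$ by $\preccurlyeq$ (ties broken arbitrarily) and the variables of $\cV$ by the position of $\tau(v)$, observes that the resulting matrix is triangular with nonzero pivot entries $\sF_v(\tau(v))$, and concludes uniqueness — precisely the forward-substitution you spell out. The global triangularity you flag as needing verification is exactly what the Triangular Condition gives (each $\sF_v$ has a unique $\preccurlyeq$-minimal support element $\tau(v)$, so all earlier rows of that column vanish, and injectivity of $\tau$ makes the pivot rows distinct), which is the paper's one-line justification.
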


Polynomial time for an $\OrMultiSS$ problem means polynomial in  the length of $\sK$ and the length of the representation of elements of $\SSF$. For details on representating multisets, see Section~\ref{section:prelim-multisets}.

\subsection{Efficient enumeration}

The methods discussed give a more general result than claimed. Instead of solving the Search Problem, we can in fact efficiently solve the Threshold-$k$ Enumeration Problem for $\HomExtSym$. This problem asks to find the set of extensions, unless there are more than $k$, in which case output $k$ of them. 

This question is also motivated by the list-decoding problem; specifically, Threshold-2 Enumeration can be used to prune the output list. See Section~\ref{section:appendix-motivation} for details. 
We remark that solving Threshold-2 Enumeration already requires all relevant ideas in solving Threshold-$k$ Enumeration.

\begin{definition}[Threshold-$k$]
\label{def:threshold-k-problem}
For a set $\cS$ and an integer $k\ge 0$, the
\textbf{Threshold-$k$ Enumeration Problem} asks
to return the following pair $(\val,\out)$ of outputs.\\
\indent \indent If $\abs{\cS} \le k$ , return $\val=\abs{\cS}$
and $\out =\cS$\\
\indent \indent Else,
return $\val =$ ``more'' and $\out =$
a list of $k$ distinct elements of $\cS$.
\end{definition} 

Note that the Threshold-$0$ Enumeration Problem
is simply the \textbf{decision problem} ``is $\cS$ non-empty?''
while the Threshold-$1$ Enumeration Problem
includes the \textbf{search problem} (if not empty, find an
element of $\cS$).

We say that an algorithm \textbf{efficiently} solves the
Threshold-$k$ Enumeration Problem if the cost divided by $k$
is considered ``modest'' (in our case, polynomial in
the input length).

Our work on list-decoding homomorphism codes uses solutions to the \emph{Threshold-$2$ Enumeration Problem}
for the set of extensions of a given homomorphism.
With potential future applications in mind, we discuss the
Threshold-$k$ Enumeration Problem for variable~$k$.

\begin{definition}
\textsc{Homomorphism Extension Threshold-$k$ Enumeration} ({\HomExtThreshold})  is the Threshold-$k$ Enumeration Problem for the set of solutions to $\textsc{Homomorphism Extension}$ ($\HExt_G$ defined below). 
\end{definition}

\begin{notation}[$\HExt_G(\psi)$]
We will denote by $\HExt_G(\psi)$ the set of solutions to an instance $\psi$ of $\HomExt$. 
	$$\HExt_G(\psi) := \{ \exthom \in \Hom(G,H) : \exthom|_M = \psi \}.$$  
\end{notation}


The following condition strengthens the notion of efficient solutions to threshold enumeration.

\begin{definition}[Efficient enumeration]
We say that a set $\cS$ can be \defn{efficiently enumerated}
if an algorithm lists the elements of $\cS$ at modest marginal cost.
\end{definition}
The marginal cost of the $i$-th element is the time spent between producing the $(i-1)$-st and the $i$-th elements.   In this paper,  ``modest marginal cost'' will mean $\poly(n,m)$ marginal cost, where $n$ and $m$ denote the degrees of the permutation groups $G$ and $H$, respectively. 

\begin{observation}
If a set $\cS$ can be efficiently enumerated then the
Threshold Enumeration Problem can be solved efficiently.
\end{observation}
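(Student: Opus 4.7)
The plan is to show that any enumeration algorithm with modest marginal cost can be turned into an algorithm for the Threshold-$k$ Enumeration Problem by running it until either it exhausts $\cS$ or it has emitted $k+1$ elements, whichever happens first.

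Concretely, suppose $\cA$ is an algorithm that enumerates $\cS$ at marginal cost $p(N)$, where $N$ is the input length and $p$ is a polynomial. I would define an algorithm $\cB$ for Threshold-$k$ as follows. Maintain a list $L$, initially empty, and run $\cA$. Each time $\cA$ outputs a new element $s$, append $s$ to $L$. Halt as soon as either (i) $\cA$ signals that enumeration is complete, or (ii) $|L| = k+1$. In case (i), output $\val = |L|$ and $\out = L$; since $\cA$ exhausted $\cS$ and produced $|L| \le k$ elements, we have $\cS = L$ and $|\cS| = |L| \le k$, as required. In case (ii), discard the last element and output $\val = \text{``more''}$ and $\out = L$ with its first $k$ elements; this is a valid output since $|\cS| \ge k+1 > k$.

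For the cost analysis, $\cB$ invokes $\cA$ to produce at most $k+1$ elements, each at marginal cost $O(p(N))$, so the total running time is $O((k+1)\,p(N))$. Dividing by $k$ gives $O(p(N))$ for $k \ge 1$, which is modest by hypothesis; for $k = 0$, the algorithm simply runs $\cA$ until its first emission or termination, which is also modest. No step is genuinely hard here: the only thing to notice is that the efficient-enumeration guarantee is stated as a \emph{marginal} cost bound, so a running total over the first $k+1$ emissions is exactly what is needed to match the definition of ``efficient'' for the Threshold-$k$ problem.
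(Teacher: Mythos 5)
Your proof is correct and is precisely the truncated-enumeration argument the paper leaves implicit (the statement is given as an Observation with no written proof): run the enumerator until it either exhausts $\cS$ or emits $k+1$ elements, then report accordingly, with total cost about $(k+1)$ times the marginal cost. Your handling of the $k=0$ case and of the marginal-cost accounting matches the paper's definitions, so nothing further is needed.
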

In particular, the decision and search problems
can be solved efficiently. The following theorems are the  strengthened versions of the ones stated in Section~\ref{section:intro-results}.  

\begin{theorem}[Bounded domain, enumeration] \label{thm:bounded-enum}
If $G$ has bounded order, then the set $\HExt_G(\psi)$ can be efficiently enumerated.
\end{theorem}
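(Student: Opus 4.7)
The plan is to invoke Theorem~\ref{thm:main-reduction} to translate the enumeration of $\HExt_G(\psi)$ into two nested enumerations: (i) an enumeration of the solutions of $\OMS_\psi$, and (ii) for each solution, an enumeration of the permutation-equivalence class of the extension representative it encodes. Because $|G|$ is bounded, both the universe $\cU$ (conjugacy classes of subgroups of $M$) and the index set $\SSF$ (conjugacy classes of subgroups of $G$) are bounded, and the oracle queries needed to specify $\OMS_\psi$ reduce to routine computations inside subgroups of $G$ and inside $S_m$ that are polynomial-time by standard permutation-group algorithms (Section~\ref{section:pga}).

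For step (i), the $\OMS_\psi$ instance is a non-negative integer linear system in $|\SSF|=O(1)$ variables whose right-hand side $\sK$ has entries at most $m$, so every solution lies in $[0,m]^{|\SSF|}$. There are at most $(m+1)^{|\SSF|}=\poly(m)$ candidate tuples, each checkable in $\poly(n,m)$ time; enumerating them in lex order lists all solutions in $\poly(n,m)$ total time, and in particular at polynomial marginal cost per solution. For each such solution, part (c) of Theorem~\ref{thm:main-reduction} yields one representative $\vf:G\to S_m$ of the corresponding equivalence class of extensions in $\poly(n,m)$ time.

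For step (ii), the permutation-equivalence class of $\vf$ is the $S_m$-conjugation orbit $\{h^{-1}\vf(\cdot)h:h\in S_m\}$, of size $|S_m:C_{S_m}(\vf(G))|$. Using the standard permutation-group toolkit I would compute a base and strong generating set for the centralizer $C_{S_m}(\vf(G))\le S_m$ in polynomial time, and then enumerate a right transversal of $C_{S_m}(\vf(G))$ in $S_m$ by a Schreier--Sims tree descent, emitting each coset representative $h$, and hence the extension $h^{-1}\vf h$, at $\poly(m)$ marginal cost. Interleaving steps (i) and (ii) outputs every element of $\HExt_G(\psi)$ exactly once, by the bijection of Theorem~\ref{thm:main-reduction}(b), at polynomial marginal cost overall.

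The principal technical obstacle is step (ii): guaranteeing polynomial marginal cost while traversing a transversal of $C_{S_m}(\vf(G))$ in $S_m$ and ensuring that representatives are generated without repetition across the (potentially exponentially many) orbit elements. This is a standard but nontrivial application of the Schreier--Sims machinery. Step (i) is essentially combinatorial once the oracle is implemented, since bounded dimension lets brute enumeration of $[0,m]^{|\SSF|}$ already deliver polynomial marginal cost; the overall proof is then a direct assembly of known permutation-group subroutines on top of Theorem~\ref{thm:main-reduction}.
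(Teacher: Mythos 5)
Your step (i) is fine and matches the paper in spirit (for bounded $G$ the instance $\OMS_\psi$ has boundedly many variables, each coefficient is at most $\norm{\sK}_1\le m$, so the solutions can be listed directly; the paper phrases this via the bounded-dimension integer program). The gap is in step (ii). The set you must enumerate is $\HExt_G(\psi)=\{\vf\in\Hom(G,S_m):\vf|_M=\psi\}$, and the classes in Theorem~\ref{thm:main-reduction}(b) are classes of \emph{equivalent extensions of $\psi$}, not full conjugation orbits of actions. If you conjugate a representative $\vf$ by a transversal of $C_{S_m}(\vf(G))$ in all of $S_m$, you produce every action equivalent to $\vf$; but $\vf^\lambda|_M=\lambda^{-1}\psi\lambda$, which equals $\psi$ only when $\lambda\in C_{S_m}(\psi(M))$. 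So most of what you emit does not extend $\psi$ and is not in $\HExt_G(\psi)$ at all, and you cannot repair this by filtering: the index of $C_{S_m}(\psi(M))$ in $S_m$ can be enormous, so the time between consecutive valid outputs would blow up and the ``modest marginal cost'' requirement fails. The correct statement, which the paper uses, is that the extensions equivalent to $\vf$ are exactly $\{\vf^\lambda:\lambda\in R\}$ where $R$ is a transversal of $C_{S_m}(\vf(G))$ in $C_{S_m}(\psi(M))$ (Remark~\ref{rmk:eqclass-vs-cosetreps}); both centralizers are computable in $\poly(n,m)$ time (Section~\ref{section:appendix-centralizer}).

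A second, lesser issue: enumerating a transversal of a subgroup $K$ in a permutation group $L$ at polynomial marginal cost and without repetition is not a routine ``Schreier--Sims tree descent'' (a stabilizer-chain descent yields cosets of the chain's point stabilizers, not of an arbitrary $K\le L$). This is precisely the Blaha--Luks result (Theorem~\ref{thm:luks}), which the paper proves by breadth-first search on the Schreier graph of the coset space $K\backslash L$, representing each coset by its lexicographic leader and storing discovered leaders in a balanced search tree. With the group pair corrected to $K=C_{S_m}(\vf(G))\le L=C_{S_m}(\psi(M))$ and this transversal-enumeration subroutine cited or proved, your outline becomes the paper's argument.
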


\begin{theorem}[Main, enumeration]  \label{thm:main-enum}
If $G=A_n$ (alternating group of degree $n$), then the set $\HExt_G(\psi)$ can be efficiently enumerated 
under the following assumptions: 
\begin{itemize}
 \item[(i)] the index of $M$ in $A_n$ is bounded by poly$(n)$, and 
 \item[(ii)] $m < 2^{n-1}/\sqrt{n}$, where $H = S_m$.
\end{itemize}
\end{theorem}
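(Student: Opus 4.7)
The plan is to combine Theorem~\ref{thm:main-reduction} with Theorem~\ref{thm:main-methods-trireduction} and Propositions~\ref{prop:main-methods-trisearch}--\ref{prop:main-methods-triunique} to compute a single representative extension in polynomial time, then enumerate the remaining extensions by conjugating inside the centralizer of $\psi(M)$ in $S_m$.

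First, by Theorem~\ref{thm:main-methods-trireduction}, under hypotheses (i) and (ii) the instance $\OMS_\psi$ lies in $\TriOrMultiSS$ with oracle queries implementable in $\poly(n,m)$ time. I would invoke the $\TriOrMultiSS$ solver of Proposition~\ref{prop:main-methods-trisearch}. If no solution is returned, output the empty list; by Theorem~\ref{thm:main-reduction}(b), $\HExt_{A_n}(\psi) = \varnothing$. Otherwise the solution is unique by Proposition~\ref{prop:main-methods-triunique}, so by the bijection of Theorem~\ref{thm:main-reduction}(b) the entire set $\HExt_{A_n}(\psi)$ constitutes a single permutation equivalence class. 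Applying Theorem~\ref{thm:main-reduction}(c), compute a representative $\vf \in \Hom(A_n, S_m)$ of this class in $\poly(n,m)$ time.

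Second, I would enumerate the equivalence class of $\vf$ among extensions of $\psi$. For $h \in S_m$, the conjugate $\vf^h(g) := h^{-1}\vf(g)h$ extends $\psi$ iff $h$ centralizes $\psi(M)$, i.e., $h \in K := C_{S_m}(\psi(M))$; moreover $\vf^{h_1} = \vf^{h_2}$ iff $h_1 h_2^{-1} \in L := C_{S_m}(\vf(A_n))$. Hence $\HExt_{A_n}(\psi)$ is in bijection with the right coset space $L \backslash K$. Using standard permutation-group centralizer algorithms (cf.~\cite{SeressPGA}), compute strong generating sets for $K$ and its subgroup $L$ in $\poly(n,m)$ time, choosing a base for $K$ that extends a base for $L$. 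Traverse this BSGS to produce right coset representatives $h_1, h_2, \ldots$ of $L$ in $K$ one at a time at $\poly(n,m)$ marginal cost, and for each $h_i$ output $\vf^{h_i}$ by conjugating $\vf$ on the given generators of $A_n$.

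The main obstacle is step two: ensuring each coset representative is produced exactly once at modest marginal cost. This is handled by using canonical (say, lexicographically least with respect to the chosen base) right coset representatives, a standard feature of Sims-style BSGS traversal. Note that $|K|/|L|$, and hence $|\HExt_{A_n}(\psi)|$, may be exponentially large in $n$ and $m$; this is precisely why the notion of \emph{efficient enumeration} (marginal cost $\poly(n,m)$ per element) is the appropriate strengthening of Theorem~\ref{thm:main}, rather than polynomial total time.
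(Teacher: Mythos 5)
Your first stage follows the paper exactly: reduce to $\OMS_\psi$, use Theorem~\ref{thm:main-methods-trireduction} and Propositions~\ref{prop:main-methods-trisearch}--\ref{prop:main-methods-triunique} to get at most one solution, conclude via Theorem~\ref{thm:main-reduction}(b) that $\HExt_{A_n}(\psi)$ is empty or a single equivalence class, and produce a representative $\vf$ by Theorem~\ref{thm:main-reduction}(c). Your identification of the class with the coset space $C_{S_m}(\vf(A_n))\backslash C_{S_m}(\psi(M))$ is also the paper's (Remark~\ref{rmk:eqclass-vs-cosetreps}), and the centralizer computations are indeed $\poly(n,m)$.

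The gap is in your final step, where you assert that producing right coset representatives of $L=C_{S_m}(\vf(A_n))$ in $K=C_{S_m}(\psi(M))$ at $\poly(n,m)$ marginal cost is ``a standard feature of Sims-style BSGS traversal'' after choosing a base of $K$ extending a base of $L$. It is not. A BSGS for $K$ gives transversals for the point-stabilizer chain of $K$, hence unique factorization and enumeration of the \emph{elements} of $K$ (or of cosets of the stabilizer-chain subgroups); an arbitrary subgroup $L\le K$ is generally not a member of that chain, and its cosets do not align with subtrees of the base-image tree, no matter how the base is chosen. The natural repair --- enumerate elements of $K$ in lex order and output only those that are lexicographically least in their coset $Lh$ --- fails the marginal-cost requirement: between two consecutive canonical representatives the traversal may pass through on the order of $\abs{L}$ elements, which can be exponential in $m$. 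Enumerating cosets of an arbitrary given subgroup at polynomial marginal cost is precisely the nontrivial Blaha--Luks result (Theorem~\ref{thm:luks}), which the paper proves in Section~\ref{section:luks} by breadth-first search on the Schreier graph of the big group acting on the coset space, canonizing each discovered coset by its lex-first element (computable in polynomial time via \textsc{MoveCoset}/\textsc{LexFirst}) and storing these in a balanced search tree to avoid repeats. Replace your BSGS claim by an appeal to that theorem (or reproduce its Schreier-graph argument) and the rest of your proof is the paper's.
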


\begin{theorem}[Large range, enumeration] \label{thm:large-enum}
If $G \leq S_n$ and $m > 2^{1.7^{n^2}}$, then the $\HomExtSym$ Threshold-$k$ Enumeration Problem can be solved in $\poly(n, m, k)$) time. 
\end{theorem}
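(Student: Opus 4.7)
The plan is to layer threshold-$k$ enumeration on top of the translation provided by Theorem~\ref{thm:main-reduction}: first enumerate the integer solutions of the $\OrMultiSS$ instance $\OMS_\psi$ (which index the permutation-equivalence classes of extensions), then enumerate within each class by $S_m$-conjugation, and stop once $k$ extensions have been produced or the instance is exhausted.

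First, I would apply Theorem~\ref{thm:main-reduction}(a) to build $\OMS_\psi$ in $\poly(n,m)$ time; by Theorem~\ref{thm:main-reduction}(b) its integer solutions are in bijection with the non-empty equivalence classes in $\HExt_G(\psi)$. Viewed as an integer linear program with variables indexed by $\SSF$, this system has few variables compared with $\log m$ under the hypothesis $m > 2^{1.7^{n^2}}$; this is exactly the regime in which Lenstra's fixed-parameter ILP algorithm~\cite{Lenstra1983,KannanIP}, applied as in the proof of Theorem~\ref{thm:large-m}, finds a single integer solution $a^{\ast}$ in $\poly(n,m)$ time. Theorem~\ref{thm:main-reduction}(c) then converts $a^{\ast}$ into a representative extension $\exthom^{\ast}$ within the same time bound.

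To go from search to threshold-$k$ enumeration, I would layer branch-and-bound on top of Lenstra. Given a solution $a^{\ast} \in (\ZZ^{\geq 0})^{\SSF}$, any other solution differs from $a^{\ast}$ in some coordinate $j$, so I recurse into the $2|\SSF|$ children obtained by adjoining one of the half-space constraints $a_j \le a_j^{\ast}-1$ or $a_j \ge a_j^{\ast}+1$. Each child is an ILP of the same bounded dimension, again solvable by Lenstra, so this yields a procedure that emits the solutions of $\OMS_\psi$ one at a time at $\poly(n,m)$ amortized cost each (the search tree incurs at most $O(|\SSF|)$ Lenstra calls per solution produced). For each new representative $\exthom$, I would enumerate its $S_m$-conjugation class $\{h\exthom h^{-1} : h \in S_m\}$, which has size $[S_m : C_{S_m}(\exthom(G))]$, by computing the centralizer and walking through its right cosets in $S_m$ via standard permutation-group machinery (cf.~\cite{SeressPGA}) at $\poly(m)$ marginal cost per element. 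The outer loop interleaves the two layers, halting as soon as $k$ extensions have been listed (returning $\val = \text{``more''}$) or once $\OMS_\psi$ has been exhausted (returning $\val = |\HExt_G(\psi)|$).

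The main obstacle is that Lenstra's algorithm produces only one solution at a time and does not naturally iterate; the branch-and-bound step above handles this because each child ILP inherits the bounded dimension and hence stays inside Lenstra's polynomial-time regime. Since at most $k$ successful ILP calls contribute before $k$ extensions have been output, and each class-enumeration step is polynomial per element, the total running time is $\poly(n,m,k)$, as claimed.
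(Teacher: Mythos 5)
Your outer layer (reduce via Theorem~\ref{thm:main-reduction}, view $\OMS_\psi$ as an integer program, and use Lenstra/Kannan in the regime $m>2^{1.7^{n^2}}$, where the number of variables is at most $|\Conj(G)|\le 1.69^{n^2}<\log_2 m$) is essentially the paper's route, and layering a branch-and-bound on top of Lenstra to get Threshold-$k$ matches the spirit of the paper's (unproved) lemma on ILP Threshold-$k$ enumeration. One caveat there: with the overlapping children $a_j\le a_j^{\ast}-1$ / $a_j\ge a_j^{\ast}+1$ the same solution can recur in many subproblems, so the claimed ``$O(|\SSF|)$ Lenstra calls per solution'' is not justified as stated; you should make the children disjoint (e.g., child $j$ additionally fixes $a_i=a_i^{\ast}$ for $i<j$), which yields the desired $\poly(n,m,k)$ bound.

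The genuine gap is in the within-class step. For a representative $\exthom$ you propose to enumerate the full $S_m$-conjugation class $\{h^{-1}\exthom h: h\in S_m\}$ by walking cosets of $C_{S_m}(\exthom(G))$ in $S_m$. But a conjugate $h^{-1}\exthom h$ restricts on $M$ to $h^{-1}\psi h$, so it extends $\psi$ only when $h\in C_{S_m}(\psi(M))$; the set you enumerate consists overwhelmingly of maps that are \emph{not} in $\HExt_G(\psi)$, so both the output list and the count $\val$ would be wrong. The correct set is $\{\exthom^{\lambda}:\lambda\in C_{S_m}(\psi(M))\}$, parameterized without repetition by coset representatives of $K=C_{S_m}(\exthom(G))$ in $L=C_{S_m}(\psi(M))$ (Remark~\ref{rmk:eqclass-vs-cosetreps}). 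Filtering your larger walk does not repair the complexity either: the density of valid extensions inside the full conjugation class is $\ind{S_m}{C_{S_m}(\psi(M))}^{-1}$, which is typically super-polynomially small, so the marginal cost per element of $\HExt_G(\psi)$ would blow up. Moreover, enumerating coset representatives of $K$ in $L$ for nested subgroups $K\le L\le S_m$ at polynomial marginal cost is not off-the-shelf; it is exactly the Blaha--Luks result (Theorem~\ref{thm:luks}) that the paper invokes and proves in Section~\ref{section:luks}. Replacing your conjugation-class walk by the Blaha--Luks enumeration of $K\backslash L$ with these two centralizers closes the gap.
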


\subsection{Enumeration methods}
\label{section:intro-enum-methods}

Recall that Theorem~\ref{thm:main-reduction} gave a bijection between classes of equivalent extensions and solutions to the $\OrMultiSS$ instance. It remains to solve the Threshold-$k$ Enumeration Problem for $\OrMultiSS$, then to efficiently enumerate extensions within one equivalence class, given a representative of that class. 

\paragraph{Solutions of Threshold-$k$ for $\OrMultiSS$\\ }

Under the assumptions of Theorem~\ref{thm:main}, the instance $\OMS_\psi$ of $\OrMultiSS$ (reduced to from the $\HomExt$ instance $\psi$) will also be an instance of $\TriOrMultiSS$. Since solutions are unique if they exist (Proposition~\ref{prop:main-methods-triunique}), solving the Search Problem also solves the Threshold-$k$ Enumeration Problem for $\TriOrMultiSS$. But, the Search Problem can be solved in polynomial time by Proposition~\ref{prop:main-methods-trisearch}. 

In the case of Theorem~\ref{thm:bounded}, $\OMS_\psi$ is an integer linear program with a bounded number of variables and constraints (corresponding to classes of subgroups of $G$) and the solutions can therefore be efficiently enumerated. 

For Theorem~\ref{thm:large-enum} (thus also implying Theorem~\ref{thm:bounded-enum}), the Threshold-$k$ Enumeration Problem for the \textsc{Integer Linear Program} version of $\OMS_\psi$ can be answered in polynomial time by viewing it as an integer linear program. See Section~\ref{section:large}.

\paragraph{Efficient enumeration within one equivalence class\\}

We now wish to efficiently enumerate extensions within each class of equivalent extensions, given a representative.

Two permutation actions $\varphi_1, \varphi_2: G \rightarrow S_m$ are \defn{equivalent (permutation) actions} if there exists $\lambda \in S_m$ such that $\varphi_1(g) = \lambda^{-1} \varphi_2(g) \lambda$ for all $g \in G$. We say that two homomorphisms $\exthom_1, \exthom_2 : G \rightarrow S_m$ are \defn{equivalent extensions} of the homomorphism $\varphi: M \rightarrow S_m$ if they (1) both extend $\varphi$ and (2) are equivalent permutation actions.  




Enumerating extensions within one equivalence class reduces to the following: 
Given subgroups $K\le L\le S_m$, efficiently enumerate
coset representatives for $K$ in $L$.

This problem was solved by Blaha and Luks in the 1980s (unpublished, cf.~\cite{BlahaLuks}).  For completeness
we include the solution based on communication by Gene
Luks~\cite{luks} (see Section~\ref{section:luks}). 

We explain the connection between finding coset representatives and the classes of equivalent extensions of $\psi$. Consider an extension $\exthom_0 \in \Hom(G,S_m)$ of $\psi \in \Hom(M,S_m)$. For any $\lambda \in S_m$, the homomorphism $\exthom_\lambda $, defined as $\exthom_\lambda(g) = \lambda^{-1} \exthom(g)_0 \lambda$ for all $g \in G$, is an equivalent permutation action. First, $\exthom_\lambda = \exthom$ if and only if $\lambda \in C_{S_m}(\psi(G))$ (the centralizer in $S_m$ of the $\psi$-image of $G$, i.e., the set of elemenets of $S_m$ that commute with all elements in $\psi(G)$). The centralizer of a group in the symmetric group can be found in polynomial time (see Section~\ref{section:appendix-centralizer}). Also, $\exthom|_\lambda$ extends $\psi$ (thus is an equivalent extension to $\exthom$) if and only if $\lambda \in C_{S_m}(\psi(M))$. 

So, finding coset representatives of $K=C_{S_m}(\psi(G))$ in $L=C_{S_m}(\psi(M))$ suffices for finding all equivalent extensions. Applying the Blaha--Luks result yields the following corollary (see Section~\ref{section:within eq class}).

\begin{corollary}
Let $M \leq G \leq S_n$ and $\psi: M \rightarrow S_m$. Suppose that $\exthom_0: G \rightarrow S_m$ extends $\psi$. Then, the class of extensions equivalent to $\exthom_0$ can be efficiently enumerated.
\end{corollary}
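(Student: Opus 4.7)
The plan is to formalize the parametrization of equivalent extensions sketched in the paragraph preceding the corollary and reduce the enumeration problem to the Blaha--Luks coset-representative enumerator invoked in Section~\ref{section:luks}.

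First I set up the parametrization. Every permutation action on $[m]$ equivalent to $\exthom_0$ has the form $\exthom_\lambda(g) := \lambda^{-1} \exthom_0(g) \lambda$ for some $\lambda \in S_m$, so conjugation by $S_m$ sweeps out the full equivalence class. The condition that $\exthom_\lambda$ still restricts to $\psi$ on $M$ is $\lambda^{-1} \psi(x) \lambda = \psi(x)$ for every $x \in M$, i.e., $\lambda \in L := C_{S_m}(\psi(M))$. Two parameters $\lambda,\mu \in L$ produce the same homomorphism precisely when $\mu\lambda^{-1}$ centralizes $\exthom_0(G)$, i.e., when $\lambda$ and $\mu$ lie in the same coset of $K := C_{S_m}(\exthom_0(G))$ inside $L$. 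Hence the rule $\lambda \mapsto \exthom_\lambda$ induces a bijection between the set of cosets of $K$ in $L$ and the class of extensions of $\psi$ equivalent to $\exthom_0$.

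Second, I describe the algorithm. Compute $L = C_{S_m}(\psi(M))$ and $K = C_{S_m}(\exthom_0(G))$ using the polynomial-time centralizer-in-symmetric-group procedure cited in Section~\ref{section:appendix-centralizer}; this is a $\poly(n,m)$ preprocessing step and does not contribute to the marginal cost. Then apply the Blaha--Luks coset-representative enumerator from Section~\ref{section:luks} to the pair $K \leq L$, producing representatives $\lambda_1,\lambda_2,\dots$ at polynomial marginal cost per representative. For each new $\lambda_i$, I emit $\exthom_{\lambda_i}$ by conjugating the stored values of $\exthom_0$ on a set of at most $2n$ generators of $G$ by $\lambda_i$; this takes $\poly(n,m)$ additional time per representative.

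The only substantive obstacle is the efficient enumeration of coset representatives of $K$ in $L$; once the Blaha--Luks result is available, the bijection above combined with the polynomial-time centralizer computation yields $\poly(n,m)$ marginal cost per emitted extension, which is exactly the notion of efficient enumeration required by the corollary.
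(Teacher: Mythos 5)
Your proposal is correct and follows essentially the same route as the paper: it parametrizes the equivalent extensions of $\psi$ by cosets of $K = C_{S_m}(\exthom_0(G))$ in $L = C_{S_m}(\psi(M))$ (the paper's Remark~\ref{rmk:eqclass-vs-cosetreps}), computes both centralizers in polynomial time, and then invokes the Blaha--Luks coset-representative enumerator (Theorem~\ref{thm:luks}). No substantive differences from the paper's argument in Section~\ref{section:within eq class}.
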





\subsection{Acknowledgments}
I would like to thank Madhu Sudan for introducing me to the subject of list-decoding homomorphism codes. I would also like to thank Gene Luks for communicating the content of Section~\ref{section:luks}. Last but not least, I would like to thank my adviser Laci Babai for his generous support, ideas, and endless advice.

\section{Preliminaries}
\label{section:prelim}

We write $\NN$ for $\NN = \{0, 1, 2, \ldots \}$.

\subsection{Multiset notation}
\label{section:prelim-multisets}

In this paper, we will consider both sets and multisets. All sets and multisets are finite.

We typographically distinguish multisets using ``mathsf'' font, e.g., $\sF$, $\sK$ and $\sL$ denote multisets.  A multiset within a universe $\cU$ is formally a function $\sL: \cU \rightarrow \NN$. For a member $u\in \cU$ of the universe, the \defn{multiplicity} of $u$ in $\sL$ is $\sL(u)$. We say that $u$ is an \defn{element} of $\sL$ ($u \in \sL$) if $\sL(u) > 0$, i.e., if $u$ has non-zero multiplicity in $\sL$. The set of elements of $\sL$ is called the \defn{support of $\sL$, denoted by $\supp(\sL) \subseteq \cU$}. 
We algorithmically represent a multiset $\sL: \cU \rightarrow \NN$ by listing its support $\supp(\sL) \subseteq \cU$ and the values on the support, so the description is of length $\abs{\supp(\sL)} \cdot \log (\norm{\sL}_\infty) \cdot \ell$, where $\ell$ is the description length for elements of $\sL$. The \defn{size} of $\sL$ is $\norm{\sL}_1$, the 1-norm of the function $\sL: \cU \to \NN$. 

Let $\sL_1, \sL_2: \cU \to \NN$ be two multisets in the same universe. Their \defn{sum} $\sL_1 + \sL_2$  is the multiset obtained by adding the multiplicities. We say that $\sL_1$ is a \defn{submultiset} of $\sL_2$ if $\sL_1(u) \leq \sL_2(u)$ for all $u$.

Sets will continue to be denoted by standard font and defined via one set of braces $\{ \, \}$. Often it is convenient to list the elements of a multiset $\sL$ as $\{\{L_1, \ldots, L_r\} \} = \{\{ L_i : i = 1 \ldots r\} \}$ using double braces, where $L_i \in \cU$ and each $u \in \cU$ occurs $\sL(u)$ times in this list. The length $r$ of this list is the size of $\sL$. In our notation, $\{A, A\} = \{A\}$ but $\{\{A,A\}\} \neq \{\{A\}\}$. 

 A disjoint union of two sets is denoted by $\Omega = \Omega_1 \dotcup \Omega_2$.

\subsection{Group theory notation}
\label{section:prelim-grouptheory}

Let $G$ be a group. We write $M \leq G$ to express that $M$ is a subgroup; we write $N\trianglelefteq G$ to denote that $N$ is a normal subgroup. 

For $M \leq G$ and $a \in G$, we call the coset $Ma$ of $M$ a \defn{subcoset} of $G$. We define the \defn{index} of a subcoset $Ma$ in $G$ by $\ind{G}{Ma} := \ind{G}{M}$. For a subset $S$ of a group $G$, we denote by $\langle S \rangle$ the subgroup  generated by $S$.

We introduce nonstandard notation for that will be used in the rest of the paper. 
\begin{notation}[$\Sub(G)$] 
We denote the set of subgroups of $G$ by $\Sub(G) := \{ L: L \leq G \}$. 
\end{notation}

For $L \leq G$, denote by $L \backslash G : = \{ Lg : g \in G \}$ the (right) \defn{coset space} (set of right cosets). For $L, M \leq G$, denote by $L \backslash G / M := \{ L g M : g\in G\}$ the set of \defn{double cosets}. Double cosets form an uneven partition of $G$. They are important in defining the $\MultiSS$ instance from an instance of {\HomExtSym} (see Section~\ref{section:comb}).

Two subgroups $L_1, L_2 \leq G$ are \defn{conjugate in $G$} if there exists $g \in G$ such that $L_1= g^{-1} L_2 g$. The equivalence relation of conjugacy in $G$ is denoted by $L_1 \sim_G L_2$, or $L_1 \sim L_2$ if $G$ is understood. 

\begin{notation}
\label{notation:conj-class}
For a subgroup $L \leq G$,  the \defn{conjugacy class of $L$ in $G$} is denoted by $[L]_G$ (or $[L]$ if $G$ is understood), so $[L]_G := \{ L_1 \leq G :L_1 \sim_G L\}$.
\end{notation}

\begin{notation}[$\Conj(G)$] 
We denote the set of conjugacy classes of $G$ by $\Conj(G) := \{ [L] : L \leq G\}$. 
\end{notation}

Using the introduced notation, if $L\leq G$, then $L \in \Sub(G)$, $L \in [L] \in \Conj(G)$ and $[L] \subset \Sub(G)$. 
\subsection{Permutation groups} 
\label{section:prelim perm groups}
In this section we fix terminology for groups and, in particular, permutation groups. A useful structure theorem for large subgroups of the alternating groups is presented as well. For reference see \cite{DM}.

For a set $\Omega$, $\Sym(\Omega)$ denotes the symmetric group on $\Omega$ and $\Alt(\Omega)$ denotes the alternating group on $\Omega$. Often, we write $S_n$ (or $A_n$) for the symmetric (or alternating) group on $[n] = \{1, \ldots, n\}$.

\begin{definition}[Group actions]
A \defn{(permutation) action} of a group $G$ on a set $\Omega$ is given by a homomorphism $\psi: G \rightarrow \Sym(\Omega)$, often denoted by $G \overset{\psi}{\actson} \Omega$ or  $G \actson \Omega$.
\end{definition}

Let $G \leq \Sym(\Omega)$, $g \in G$, $\omega \in \Omega$, and $\Delta \subset \Omega$. 

The image of $\omega$ under $g$ is denoted by $\omega^g$. This notation extends to sets. So, $\Delta^g := \{ \omega^g: \omega \in \Delta \}$ and $\Delta^G := \{\omega^g : \omega \in \Delta, g\in G\}$. The subset $\Delta \subset \Omega$ is \defn{$G$-invariant} if $\Delta^G = \Delta$. The \defn{orbit $\omega^G$ of $\omega$ under action by $G$} is given by $\omega^G := \{ \omega^g : g \in G\}$. The orbits of $G$ are $G$-invariant and they partition $\Omega$. All $G$-invariant sets are formed by unions of orbits. 

The \defn{point stabilizer $G_\omega$ of $\omega$} is the subgroup of $G$ fixing $\omega$, given by $G_\omega = \{ g \in G \mid \omega^g = \omega \}$. The \defn{pointwise stabilizer $G_{(\Delta)}$ of $\Delta$} is the subgroup fixing every point in $\Delta$, given by $G_{(\Delta)} = \bigcap_{\omega \in \Delta} G_\omega$. The \defn{setwise stabilizer $G_\Delta$ of $\Delta$} is given by $G_\Delta =  \{ g \in G \mid \Delta^g = \Delta \}$. 

Let  $\Delta \subseteq \Omega$ be $G$-invariant. For $g \in G$, denote by $g^\Delta$ the restriction of the action of $g$ to $\Delta$. 
The group $G^\Delta = \{ g^\Delta : g \in G \} \leq \Sym(\Delta) $ is the image of the permutation representation of $G$ in its action on $\Delta$. We see that $G^\Delta \cong G/G_{(\Delta)}$.

We state a result that goes back to Jordan. Its modern formulation by Liebeck (see \cite[Theorem 5.2A]{DM}) describes the small index subgroups of $A_n$. This theorem is used to categorize group actions by $A_n$ in Theorem \ref{thm:main}. 

\begin{theorem}[Jordan--Liebeck]
	\label{thm:JordanLiebeck}
	Let $n \geq 10$ and let $r$ be an integer with $1 \leq r < n/2$. 
	Suppose that $K \leq A_n$ has index $\ind{A_n}{K} < \binom{n}{r}$. Then, for some $\Delta \subseteq [n]$ with $\lvert \Delta \rvert < r$, we have $(A_n)_{(\Delta)} \leq K \leq (A_n)_{\Delta}$. 
\end{theorem}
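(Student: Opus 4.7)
The plan is a case analysis based on the natural action of $K$ on $[n]$ inherited from $A_n \leq S_n$, after first replacing $K$ by any maximal overgroup $K^* \leq A_n$. Doing so only tightens the index hypothesis ($\ind{A_n}{K^*} \leq \ind{A_n}{K} < \binom{n}{r}$), so it suffices to prove the conclusion for $K^*$; the inclusion $(A_n)_{(\Delta)} \leq K$ can then be recovered separately using the simplicity of $(A_n)_{(\Delta)} \cong A_{n-|\Delta|}$ (for $n - |\Delta| \geq 5$), whose intersection with $K$ must be all of it once we rule out the trivial intersection by an order bound. There are three cases: $K^*$ is intransitive on $[n]$, transitive but imprimitive, or primitive.

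The intransitive case yields the conclusion directly. Let $\Delta$ be a smallest $K^*$-orbit on $[n]$, so $|\Delta| \leq n/2$. Then $K^* \leq (A_n)_\Delta$, which has index $\binom{n}{|\Delta|}$ in $A_n$, so by maximality $K^* = (A_n)_\Delta$. Since $k \mapsto \binom{n}{k}$ is strictly increasing on $[0,n/2]$, the hypothesis $\binom{n}{|\Delta|} < \binom{n}{r}$ forces $|\Delta| < r$, and both inclusions $(A_n)_{(\Delta)} \leq K^* = (A_n)_\Delta$ are then immediate. It remains to rule out the other two cases, i.e., to show that a maximal subgroup $K^*$ that is transitive (imprimitive or primitive) must have index at least $\binom{n}{r}$, contradicting the hypothesis.

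The transitive imprimitive case is a routine estimate: $K^*$ embeds in $(S_a \wr S_b) \cap A_n$ for some factorization $n = ab$ with $a, b \geq 2$, and one checks by a direct computation (comparing $\tfrac{n!}{2(a!)^b b!}$ against $\binom{n}{\lfloor n/2\rfloor}$) that the index exceeds $\binom{n}{r}$ for all $r < n/2$ once $n \geq 10$. The hard part, and the reason Liebeck's name appears, is the primitive case: one must show that any primitive subgroup of $A_n$ properly contained in $A_n$ has index at least $\binom{n}{\lfloor n/2\rfloor}$. Bochert's classical elementary bound (a primitive subgroup of $S_n$ not containing $A_n$ has index at least $\lfloor (n+1)/2\rfloor!$) already suffices for all sufficiently large $n$; for uniform handling of all $n \geq 10$ one invokes Liebeck's CFSG-based bound $|K^*| \leq n^{1+\log_2 n}$, which comfortably gives the required index estimate. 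Combining the three cases yields the sandwich $(A_n)_{(\Delta)} \leq K \leq (A_n)_\Delta$ with $|\Delta| < r$.
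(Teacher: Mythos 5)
You are attempting to prove a statement the paper itself does not prove: it quotes Liebeck's formulation of Jordan's theorem and cites Dixon--Mortimer, Theorem 5.2A, so there is no in-paper proof to compare against, and I assess your sketch on its own. The decisive gap is your transitive imprimitive case. For even $n=2m$, the block system with $b=2$ blocks of size $a=m$ gives the subgroup $(S_m \wr S_2)\cap A_n$ of order $(m!)^2$, hence of index $\tfrac12\binom{n}{m}$ in $A_n$; since $\binom{n}{m-1}=\tfrac{m}{m+1}\binom{n}{m}>\tfrac12\binom{n}{m}$, this index is \emph{smaller} than $\binom{n}{r}$ for $r=m-1<n/2$ (for $n=10$: index $126<\binom{10}{4}=210$). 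So the ``direct computation'' you appeal to does not check out, and no computation can: this transitive group even shows that the statement exactly as quoted fails for even $n\ge 10$ with $r=n/2-1$. Dixon--Mortimer's Theorem 5.2A contains precisely this extra alternative (two blocks of size $n/2$, index $\tfrac12\binom{n}{n/2}$) besides finitely many small-degree exceptions; the version stated here drops it, which is harmless for the paper's application (indices there are $\poly(n)$, far below $\tfrac12\binom{n}{n/2}$), but any proof must either restrict $r$ so that $\binom{n}{r}\le\tfrac12\binom{n}{n/2}$ or carry the exceptional case. Your primitive case is also glib at the small end: the bound $|K^*|\le n^{1+\log_2 n}$ does not beat $\binom{n}{r}$ at $n=10,11$ (at $n=10$, $r=4$ one needs $|K^*|\le |A_{10}|/210=8640$, while $10^{1+\log_2 10}\approx 2\cdot 10^4$), so those degrees need the known lists of primitive groups or a sharper bound.

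The second genuine gap is the passage from the maximal overgroup $K^*$ back to $K$. Simplicity of $(A_n)_{(\Delta)}$ gives nothing here: $K\cap (A_n)_{(\Delta)}$ is normal in $K$, not in $(A_n)_{(\Delta)}$, so ``nontrivial implies everything'' is unjustified. Concretely, take $K=(A_n)_{(\{1,2\})}$ (pointwise stabilizer of two points), which has index $n(n-1)<\binom{n}{3}$ for $n\ge 10$, and take the maximal overgroup $K^*=(A_n)_{\{1\}}$, so $\Delta=\{1\}$; then $K\cap (A_n)_{(\Delta)}=K$ is a large nontrivial proper subgroup of $(A_n)_{(\Delta)}$, and the lower inclusion $(A_n)_{(\Delta)}\le K$ is simply false for this $\Delta$ --- the correct witness is the larger set $\{1,2\}$. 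So the sandwich cannot be ``recovered separately'' for the $\Delta$ handed to you by an arbitrary maximal overgroup; establishing the lower containment (with a possibly enlarged $\Delta$, still of size $<r$) is where the real content of Jordan's theorem lies, and the standard arguments do this by analyzing the orbits and blocks of $K$ itself (or by induction on $n$ and $r$), as in Dixon--Mortimer's proof. Your intransitive case is fine, but as it stands the sketch neither closes the imprimitive case (it cannot, for the full stated range of $r$) nor correctly derives the two-sided containment for $K$.
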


\subsection{Equivalent extensions} 
\label{section:prelim equivalence}
In this section we characterize equivalence of two group actions and, in particular, fix notation to describe equivalence.

\begin{definition}[Equivalent permutation actions]
Two permutation actions $G \curvearrowright \Omega$ and $G \curvearrowright \Gamma$ are \defn{equivalent} if there exists  a bijection $\zeta: \Omega \rightarrow \Gamma$ such that $\zeta(\omega^g) = (\zeta(\omega))^g$ for all $g \in G$ and $\omega \in \Omega$. 
\end{definition}
Note that two permutation actions $\psi_1, \psi_2:G \rightarrow S_m$  of $G$ on the same domain are equivalent if there exists  $\zeta \in S_m$ such that $\psi_1(g) = \zeta^{-1}\psi_2(g) \zeta$ for all $g \in G$. 

The Introduction defined two homomorphisms $\exthom_1, \exthom_2: G \rightarrow S_m$ as ``equivalent extensions'' of $\varphi:M \rightarrow S_m$ if they both extend $\varphi$ and if they are equivalent as actions. The following definition is equivalent to that definition provided in the Introduction.

For groups $M \leq G$, the \defn{centralizer} of $M$ in $G$ is given by $C_G(M) = \{ g \in G: (\forall x \in M)(g x = x g) \}$. 

\begin{definition}[Equivalent extensions] 
Let $M \leq G$ and $\psi: M \rightarrow S_m$. 
We say that $\vf_1$ and $\vf_2$ are \defn{equivalent extensions of $\vf$} if there exists $\zeta \in C_{S_m}(\psi(M))$ such that $\zeta^{-1} \vf_2(g) \zeta = \vf_1(g)$ for all $g \in G$. 
\end{definition}

Next we consider the equivalence of transitive group actions, through their point stabilizers. A $G$-action on $\Omega$ is \defn{transitive} if $\omega^G = \Omega$ for all $\omega \in \Omega$, i.e., for every pair $\omega_1, \omega_2 \in \Omega$, there is a group element $g \in G$ satisfying $\omega_1^g = \omega_2$. 
Lemma~\ref{lem_transgroupactions_equivalence} is Lemma 1.6A in \cite{DM}. 

\begin{lemma}
\label{lem_transgroupactions_equivalence}
Suppose $G$ acts transitively on the sets $\Omega$ and $\Gamma$. Let $L$ be the stabilizer of a point in the first action.  Then, the actions are equivalent if and only if $L$ is the stabilizer of some point in the second action. 
\end{lemma}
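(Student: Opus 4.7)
The proof breaks cleanly into two directions. The plan is to exploit the standard identification of a transitive $G$-set with a coset space modulo a point stabilizer, which is the natural ``model'' for transitive actions.

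For the ``only if'' direction, I would argue directly from the definition of equivalence. Suppose the actions are equivalent via a bijection $\zeta : \Omega \to \Gamma$ with $\zeta(\omega^g) = \zeta(\omega)^g$, and let $L = G_\omega$ for some $\omega \in \Omega$. Set $\gamma := \zeta(\omega) \in \Gamma$. For $g \in L$ we have $\gamma^g = \zeta(\omega)^g = \zeta(\omega^g) = \zeta(\omega) = \gamma$, so $L \le G_\gamma$. Conversely, if $g \in G_\gamma$, then $\zeta(\omega^g) = \zeta(\omega)^g = \gamma^g = \gamma = \zeta(\omega)$, and since $\zeta$ is a bijection we conclude $\omega^g = \omega$, i.e., $g \in L$. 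Hence $L = G_\gamma$, as required.

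For the ``if'' direction, assume $L = G_\omega$ for some $\omega \in \Omega$ and $L = G_\gamma$ for some $\gamma \in \Gamma$. My plan is to build $\zeta : \Omega \to \Gamma$ by the rule $\zeta(\omega^g) := \gamma^g$. The main obstacle is showing $\zeta$ is well-defined; this is precisely where the hypothesis $L = G_\gamma$ (not merely conjugate to $G_\gamma$) is used. If $\omega^{g_1} = \omega^{g_2}$, then $g_1 g_2^{-1} \in G_\omega = L = G_\gamma$, so $\gamma^{g_1 g_2^{-1}} = \gamma$, giving $\gamma^{g_1} = \gamma^{g_2}$. By transitivity of both actions, $\zeta$ is defined on all of $\Omega$ and is surjective onto $\Gamma$; the same well-definedness argument run with $\gamma$ and $\omega$ swapped shows $\zeta$ is injective.

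It remains to verify $G$-equivariance, which is immediate from the construction:
\[
\zeta\bigl((\omega^g)^h\bigr) = \zeta\bigl(\omega^{gh}\bigr) = \gamma^{gh} = (\gamma^g)^h = \bigl(\zeta(\omega^g)\bigr)^h.
\]
Thus $\zeta$ is the desired equivalence. The whole argument is really the observation that a transitive $G$-set is canonically isomorphic to $L \backslash G$ (with $G$ acting by right multiplication) as soon as a base point with stabilizer $L$ is chosen, so any two transitive $G$-sets sharing a common point stabilizer are equivalent via the composition of these canonical bijections.
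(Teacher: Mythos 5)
Your proof is correct and complete in both directions, including the well-definedness check where the hypothesis that $L$ is exactly (not merely conjugate to) a point stabilizer in the second action is used. The paper does not prove this lemma itself but cites it as Lemma 1.6A of Dixon--Mortimer, and your argument is precisely the standard one given there (identifying each transitive $G$-set with the coset space $L \backslash G$ once a base point with stabilizer $L$ is fixed), so there is nothing further to compare.
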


Recall that we denote the conjugacy class of a subgroup $L\leq G$ by $[L]$, so $L$ is conjugate to $L_1$ if and only if $[L] = [L_1]$. We find all point stabilizers are conjugate, and all conjugate subgroups are point stabilizers. 

\begin{fact}
Let $L$ be a point stabilizer of a transitive $G$-action on $\Omega$. A subgroup $L_1$ is conjugate to $L$ ($ [L_1]=[L]$) if and only if $L_1$ is also the stabilizer of a point in $\Omega$. 
\end{fact}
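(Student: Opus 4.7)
The plan is to prove both directions by exploiting the standard identity that conjugation of a point stabilizer by $g \in G$ gives the stabilizer of the image point, namely $g^{-1} G_\omega g = G_{\omega^g}$ for any $\omega \in \Omega$ and $g \in G$. This identity follows from a one-line check: for $h \in G$, we have $\omega^{(g^{-1} h g)} = \omega$ iff $(\omega^{g^{-1}})^h = \omega^{g^{-1}}$, so $g^{-1} h g \in G_\omega$ iff $h \in G_{\omega^g}$ (after reparameterizing). I would record this as a small preliminary observation before doing the two directions.

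For the forward direction, assume $[L_1] = [L]$, so $L_1 = g^{-1} L g$ for some $g \in G$. Writing $L = G_\omega$ for the point $\omega \in \Omega$ whose stabilizer is $L$, the identity above gives $L_1 = g^{-1} G_\omega g = G_{\omega^g}$, exhibiting $L_1$ as the stabilizer of the point $\omega^g \in \Omega$.

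For the backward direction, suppose $L_1 = G_{\omega_1}$ for some $\omega_1 \in \Omega$. Here is where transitivity enters: since $G$ acts transitively on $\Omega$, there exists $g \in G$ with $\omega^g = \omega_1$. Applying the identity a second time, $L_1 = G_{\omega_1} = G_{\omega^g} = g^{-1} G_\omega g = g^{-1} L g$, which gives $[L_1] = [L]$.

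No step looks like a real obstacle; the only thing to be careful about is the convention used for the action (right vs.\ left) when verifying $g^{-1} G_\omega g = G_{\omega^g}$, since the excerpt uses the right-action notation $\omega^g$. Everything else is formal.
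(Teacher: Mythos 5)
Your argument is correct: the conjugation identity $g^{-1}G_\omega g = G_{\omega^g}$ (with the right-action convention handled as you note) gives the forward direction, and transitivity supplies the element $g$ needed for the converse, which is exactly the standard argument the paper implicitly relies on — it states this Fact without proof, immediately after citing the Dixon--Mortimer lemma on equivalence of transitive actions. The only remark worth adding is that your forward direction uses transitivity nowhere, so the hypothesis is needed only for the ``only if'' half.
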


All transitive $G$-actions are equivalent to one of its natural actions on cosets, $\rho_L$ defined below. 

\begin{example}[Natural actions on cosets]
\label{ex:coset-action}
For $L \leq G$, we denote by $\rho_L$ the natural action of $G$ on $L \backslash G$. More specifically, an element $g \in G$ acts on a coset $Lh \in L \backslash G$ as $(Lh)^g := L(hg)$. 
\end{example}

We see that the equivalence class of a transitive action is determined by the conjugacy class of its point stabilizers. 
\begin{corollary}
\label{cor:prelim-equiv-vs-conj}
Consider a transitive $G$-action $\varphi:G \rightarrow \Sym(\Omega)$. Let $L \leq G$. 
The following are equivalent.
\begin{enumerate}[(1)]
\item $\varphi$ is equivalent to $\rho_L$. 
\item $L$ is a point stabilizer of the $G$-action. 
\item  Some $L_1 \leq G$ satisfying $L_1 \sim L$ is a point stabilizer of the $G$-action.  
\item $\varphi$ is equivalent to $\rho_{L_1}$ for $L_1 \sim L$. 
\end{enumerate} 
\end{corollary}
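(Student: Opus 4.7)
The plan is to establish the four equivalences via a short cycle of implications $(1)\Rightarrow(2)\Rightarrow(3)\Rightarrow(4)\Rightarrow(1)$, invoking Lemma~\ref{lem_transgroupactions_equivalence} (the stabilizer criterion for equivalence of transitive actions) together with the preceding Fact that in any transitive $G$-action, the set of point stabilizers forms exactly one conjugacy class of subgroups of $G$.

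First I would record the elementary observation that in the natural coset action $\rho_L$, the stabilizer of the coset $L\cdot 1 = L$ is exactly $L$, since $(Lh)^g = L(hg)$ evaluated at $h = 1$ gives $Lg = L$, i.e., $g \in L$. Combined with the Fact, this says the set of point stabilizers of $\rho_L$ is precisely the conjugacy class $[L]$. This preliminary lets me freely identify, for any $L_1 \sim L$, the subgroup $L_1$ as a point stabilizer of either $\rho_L$ or $\rho_{L_1}$, as needed.

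Next I would run the cycle. For $(1)\Rightarrow(2)$: since $L$ is a point stabilizer of $\rho_L$ and $\varphi \sim \rho_L$, Lemma~\ref{lem_transgroupactions_equivalence} gives that $L$ is also a point stabilizer of $\varphi$. $(2)\Rightarrow(3)$ is immediate by taking $L_1 = L$. For $(3)\Rightarrow(4)$: the preliminary shows that $L_1$ is a point stabilizer of $\rho_{L_1}$, and by hypothesis $L_1$ is a point stabilizer of $\varphi$; Lemma~\ref{lem_transgroupactions_equivalence} then yields $\varphi \sim \rho_{L_1}$. For $(4)\Rightarrow(1)$: by the Fact applied to $\rho_L$, the element $L_1 \in [L]$ is a point stabilizer of $\rho_L$, and it is also a point stabilizer of $\rho_{L_1}$ by the preliminary; hence $\rho_L \sim \rho_{L_1}$, and composing with the hypothesis $\varphi \sim \rho_{L_1}$ (equivalence of actions is transitive) gives $\varphi \sim \rho_L$.

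There is no substantive obstacle here; the corollary is essentially an unpacking of Lemma~\ref{lem_transgroupactions_equivalence} in terms of conjugacy classes of subgroups rather than individual stabilizers. The only point requiring care is that Lemma~\ref{lem_transgroupactions_equivalence} is phrased asymmetrically---it asks whether a \emph{specific} stabilizer of one action is \emph{some} stabilizer of the other---so each application above must produce a concrete representative witnessing membership in both sides, which is exactly what the preliminary observation about $\rho_L$ and the Fact together supply.
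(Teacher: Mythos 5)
Your proof is correct and follows the same route the paper intends: the corollary is stated as an immediate consequence of Lemma~\ref{lem_transgroupactions_equivalence} and the preceding Fact, and your cycle of implications, anchored by the observation that $L$ stabilizes the point $L \in L\backslash G$ under $\rho_L$, is exactly that unpacking.
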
 

Motivated by Corollary~\ref{cor:prelim-equiv-vs-conj}, we will define the notion of ``$(G,L)$-actions,'' which describe transitive $G$-actions up to equivalence. This definition will be generalized to intransitive actions (see Section~\ref{section:GL-actions}).   

\subsection{Computation in permutation groups}
A permutation group $G \leq S_n$ is \defn{given} by a list of generators. We say that $G$ is \defn{known} if a list of generators of $G$ is known. Based on this representation, membership testing can be performed in polynomial time. In Appendices~\ref{section:pga} and~\ref{section:luks} we list the algorithmic facts about permutation groups used in this paper. 

\section{Multi-dimensional subset sum with repetition}
\label{sec:subset-sum} 

We consider the \textsc{Subset Sum Problem with Repetitions} (SSR). An instance is given by a set of positive integers and a ``target'' positive integer $s$. The question is ``can $s$ be represented as a non-negative linear combination\footnote{Notice that a non-negative linear combination of a set of integers is exactly the sum of a multiset in that set of integers. This question is asking for the existence of a multiset.} of the other integers?'' This problem is NP-complete by an easy reduction from the standard \textsc{Subset Sum} problem, which asks instead for a 0-1 linear combination.


We define a multi-dimensional version (\MultiSS) below.  It has its own associated Decision, Search, and Threshold-$k$ Enumeration (Definition~\ref{def:threshold-k-problem}) Problems. 
\begin{definition}
\label{def:subsum-problem}
\textsc{Multi-dimensional Subset Sum  with Repetition} (\MultiSS) \\ 
\indent \textbf{Instance:} Multiset $\sK: \cU \rightarrow \NN$ and set $\SSF$ of multisets in $\cU$.\footnote{$\cU$ is the underlying universe. Its entirety is not required in the input, but its size is the dimensionality of this problem. An element  $\sF \in \SSF$ is a multiset $\sF: \cU \to \NN$ in $\cU$.} \\
\indent \textbf{Solution:} A multiset of $\SSF$ summing to $\sK$, i.e., a multiset $\sL: \SSF \rightarrow \NN$ satisfying $\sum\limits_{\sF \in \SSF} \sL(\sF) \cdot \sF = \sK$.

\begin{notation}[$\SubSum(\sK, \SSF)$]
\label{notation:SubSum}
We write $\SubSum$ for the set of solutions to an instance of $\MultiSS$, i.e., 
	$$ \SubSum(\sK, \SSF) := \left\{ \sL: \SSF \rightarrow \NN \;  \bigg\vert \; \sum\limits_{\sF \in \SSF} \sL(\sF) \cdot \sF = \sK \right\}.$$
\end{notation}

The $\MultiSS$ Decision Problem asks whether a solution exists ($\SubSum$ is nonempty). 

The $\MultiSS$ Search Problem asks whether a solution exists and, if so, find one. 

The $\MultiSS$ Threshold-$k$ Enumeration Problem asks for the solution to the Threshold-$k$ Enumeration Problem for the set $\SubSum$. 

\end{definition}

\begin{remark}[$\MultiSS$ as \textsc{Integer Program}]
Every instance of $\MultiSS$ can naturally be viewed as an instance of \textsc{Integer Linear Programming}, with $\abs{\cU}$ constraints and $\abs{\SSF}$ variables. The variables $\sL(\sF)$ are the number of copies of each $\sF \in \SSF$ in the subset sum. The constraints correspond to checking that every element in $\cU$ has the same multiplicities in $\sK$ and $\sum \sL(\sF)\cdot \sF$. 
\end{remark}

\subsection{Oracle MultiSSR} 
\label{section:ormultiss}

In our application, the set $\SSF$ and universe $\cU$ will be prohibitively large to input explicitly. To address this, we define an oracle version of $\MultiSS$  called \textsc{Oracle Multi-dimensional Subset Sum with Repetitions} (\OrMultiSS). We will reduce a $\HomExtSym$ instance $\psi$ to an {\OrMultiSS} instance denoted by $\OMS_\psi$, then show that the oracles can be answered efficiently. 

We will find it convenient to introduce a bijection between $\SSF$ and another set $\cV$ of simpler objects, used to index $\SSF$.\footnote{The index set $\cV$ will be the conjugacy classes of subgroups of $G$, whereas $\SSF$ will be a set of multisets of conjugacy classes of subgroups of $M$.} Access to $\SSF$ is given by the oracle ``$\Foracle$,''  which on input $v \in \cV$ returns the element $\sF_v$  of $\SSF$ indexed by $v$. Elements of the universes $\cU$ and $\cV$ are encoded by strings in $\Sigma_1^{n_2}$ and $\Sigma_2^{n_2}$, respectively, and the alphabets $\Sigma_i$ and encoding lengths $n_i$ constitute the input. 

We allow non-unique\footnote{In our application, $\Sigma_1 = S_n$ and $\Sigma_2 = S_m$. The universes $\cU$ and $\cV$ will be conjugacy classes of large subgroups of $S_n$ and $S_m$, respectively. Each conjugacy class is non-uniquely encoded by generators of a subgroup in the class.} encodings of $\cU$ and $\cV$, but provide ``equality'' oracles.\footnote{We will not need to test membership of a string from $\Sigma^n$ in the universe.} To handle  non-unique encodings of $\cV$ in $\Sigma_2^{n_2}$, we assume that $\Foracle$ returns the same multiset on $\cU$ (though possibly via different encodings) when handed different encodings of the same $v \in \cV$. Writing $\sK: \cU \rightarrow \NN$ implies that $\sK$ is represented as a multiset on $\Sigma_1^{n_1}$ but with the promise that all strings in its support are encodings of elements of  $\cU$. 

\begin{definition}
\textsc{Oracle Multi-dimensional Subset Sum with Repetition} ({\OrMultiSS})  \\ 
\indent \textbf{Instance:} \\
\indent \indent \underline{Explicit input} \\
\indent \indent \indent Alphabets $\Sigma_1$ and $\Sigma_2$; \\  
\indent \indent \indent Numbers $n_1, n_2 \in \NN$, in unary; and \\ 
\indent \indent \indent Multiset $\sK: \cU \rightarrow \NN$, by listing the elements in its support and their multiplicities.\\ 
\indent \indent \underline{Oracles}\\ 
\indent \indent \indent $\equiv$ oracle for equality in $\cU$ or $\cV$, and \\
\indent \indent \indent $\Foracle$ oracle for the set $\SSF = \{\sF_v: \cU \rightarrow \NN \}_{v \in \cV}$, indexed by $\cV$.\\
\indent \textbf{Solution:} A sub-multiset of $\cV$ that defines a sub-multiset of $\SSF$ summing to $\sK$, i.e., \\
\indent \indent a multiset $\sL : \cV \to \NN$ satisfying $\sum\limits_{v \in \cV} \sL(v) \cdot \sF_v = \sK $. 

\begin{notation}[$\SubSum(\sK, \SSF)$]
Again, we write $\SubSum$ for the set of solutions to an instance of $\OrMultiSS$, though the indexing is slightly different. 
	$$ \SubSum(\sK, \SSF) := \left\{ \sL: \cU \rightarrow \NN \;  \bigg\vert \; \sum\limits_{v \in \cV} \sL(v) \cdot \sF_v = \sK \right\}. $$
\end{notation}

The length of the input is $\log \abs{\Sigma_1} + \log \abs{\Sigma_2} + n_1 + n_2 + \norm{\sK}_0 \cdot \log \norm{\sK_\infty} \cdot n_1 \log \abs{\Sigma_1}$. 



\end{definition} 

Due to non-unique encodings, checking whether a multiset $\sL$ satisfies $\sum_{v \in \cV} \sL(v) \cdot \sF_v = \sK $ will actually require calling the $\equiv$ oracle, as the multisets on the left and right sides of the equation may be encoded differently.



\subsection{Triangular \MultiSS}
\label{section:triormultiss}

The Search Problem for $\OrMultiSS$ with an additional ``Triangular Condition'' (and oracles corresponding to this condition) can be solved in polynomial time. We call this problem $\TriOrMultiSS$. 
This section defines $\TriOrMultiSS$. The next section will provide an algorithm that solves the $\TriOrMultiSS$ Search Problem in polynomial time, proving Proposition~\ref{prop:main-methods-trisearch}.

Under the conditions of Theorem~\ref{thm:main} ($G= A_n$, $M\leq G$ has polynomial index, and the codomain $S_m$ has exponentially bounded permutation domain size $m < 2^{n-1}/\sqrt{n}$), a $\HomExtSym$ instance $\psi$ reduces to an instance $\OMS_\psi$ of $\OrMultiSS$ that satisfies the additional assumptions of $\TriOrMultiSS$. The additional oracles of $\TriOrMultiSS$ can be efficiently answered (see Section~\ref{section:uniqueness}).

\paragraph{Definition of $\TriOrMultiSS$\\}
The triangular condition roughly says that the matrix for the corresponding (prohibitively large) integer linear program is upper triangular.

Below we say that a relation $\preccurlyeq$ is a \defn{total preorder} if it is reflexive and transitive with no incomparable elements.\footnote{A total order also imposes antisymmetry, i.e., if $x \preccurlyeq y$ and $y \preccurlyeq x$ then $x = y$. That is the assumption we omit.}
 
\begin{definition}
\textsc{Triangular Oracle Multi-dimensional Subset Sum with Repetition} (\TriOrMultiSS) \\ 
\indent \textbf{Input, Set, Oracles, Output:} Same as $\OrMultiSS$.\\
\indent \textbf{Triangular Condition:} $\cU$ has a total preorder $\preccurlyeq$.\\ 
\indent \indent For every $v \in \cV$, the multiset $\sF_v$ contains a unique $\preccurlyeq$-minimal element $\tau(v) \in \cU$. \\ 
\indent \indent The map $\tau: \cV \to \cU$ is injective. \\ 
\indent \textbf{Additional Oracles:} \\ 
\indent \indent $\preccurlyeq$: compares two elements of $\cU$, and \\  
\indent \indent $\Trioracle : \cU \to \cV \cup \{\Error\}$ inverts $\tau$, i.e., on input $u \in \cU$ it returns 
\begin{equation}
\triangle(u) =
\begin{cases}
 \text{the unique $v \in \cV$ such that $\tau(v) = u$} & \text{ if $v$ exists} \\ 
\Error & \text{ if no such $v$ exists}. 
\end{cases}
\end{equation}
\end{definition} 

\paragraph{Integer program and uniqueness of solutions\\}

Uniqueness of solutions for $\TriOrMultiSS$ can be seen by looking at the integer linear program formulation, where variables correspond to $\cV$ and constraints correspond to $\cU$. The Triangular Condition implies that, for every variable ($v \in \cV$), there exists a unique minimal constraint ($\tau(v) \in \cU$) containing this variable.  The ordering $\preccurlyeq$ on $\cU$ gives an ordering $\preccurlyeq_\cV$ on $\cV$ by setting $v_1 \preccurlyeq_\cV v_2$ when $\tau(v_1) \preccurlyeq \tau(v_2)$. Order the variables and constraints by $\preccurlyeq_\cV$ and $\preccurlyeq$, respectively (break ties in $\preccurlyeq$ arbitrarily and have $\preccurlyeq_\cV$ respect the tie-breaking of $\preccurlyeq$). The matrix for the corresponding linear program is upper triangular. 


Hence, if the integer program has a solution, it is unique. 
It trivially follows that solving the $\TriOrMultiSS$ Search Problem also solves the corresponding Threshold-$k$ Enumeration Problem. 

\subsection{$\TriOrMultiSS$ Search Problem} 

Algorithm~\ref{alg:MultiSS} (\TriOrMultiSS) below solves the $\TriOrMultiSS$ Search Problem in polynomial time (Proposition~\ref{prop:main-methods-trisearch}). If viewing the problem as a linear program, the algorithm essentially solves the upper triangular system of equations by row reduction, except that the dimensions are too big and only oracle access is provided. 

In each iteration, {\TriOrMultiSS} finds one minimal element $u$ in $\supp(K)$. It removes the correct number $m$ of copies of $\sF_{\triangle(u)}$ from $\sK$, in order to remove all copies of $u$ from $\sK$. If this operation fails, the algorithm returns `no solution.' Meanwhile, $\sL(\triangle(u))$ is updated in each iteration to record the number of copies of $\sF_{\triangle(u)}$ removed.

 There are three reasons the operation may fail. (1) Removing all copies of $u$ from $\sK$ may not be possible through removal of $\sF_{\triangle(u)}$ (the number $m = \sK(u) / \sF_{\triangle(u)}$ of copies is not an integer). (2) $\sK$ may not contain $m$ copies of $\sF_{\triangle(u)}$ (the operation $\sK - m \cdot \sF_{\triangle(u)}$ results in negative values). (3) $\triangle(u)$ returns $\Error$ ($u$ is not in the range of $\tau$). 
 
\paragraph{Subroutines\\} 

$\minoracle(S)$: $\minoracle$ takes as input a subset $S \subset \Sigma_1^{n_1}$ and outputs one minimal element under $\preccurlyeq$. Using the $\preccurlyeq$ oracle, a $\minoracle$ call can be executed in $\poly(\abs{S})$-time.  

$\textsc{Remove}(\sK, \sF, m)$: $\textsc{Remove}$ takes as input multisets $\sF, \sK: \Sigma_1^{n_1} \to \NN$ and a nonnegative integer $m$. It returns $\sK$ after removing $m$ copies of the multiset if possible, while accounting for non-unique encodings. Otherwise, it returns `no solution.' Pseudocode for \textsc{Remove} is provided below. 

$\Consolidate(\sK_1, \ldots, \sK_n)$: $\Consolidate$ adjusts for non-unique encodings of  $\cU \rightarrow \NN$ multisets as $\Sigma_1^{n_1} \rightarrow \NN$ multisets. Given input the encoded multisets $\sK_1, \ldots, \sK_n: \Sigma_1^{n_1} \to \NN$, $\Consolidate$ outputs multisets $\widetilde{\sK}_1, \ldots, \widetilde{\sK}_n: \Sigma_1^{n_1} \to \NN$ that encode the same multisets of $\cU$, but uniquely. In other words, $\widetilde{\sK}_i$ satisfy $\widetilde{\sK}_i = \sK_i $, with their combined support $\dot{\bigcup_i}\supp( \widetilde{\sK}_i) \subset \Sigma_1^{n_1}$ containing at most one encoding per element of $\cU$.

\paragraph{Algorithm\\}
Recall that we denote the empty multiset by $\emptymultiset$. We give pseudocode for the $\textsc{Remove}$ subroutine, followed by the main algorithm. 

\vspace{.2in}
\begin{algorithmic}
		\Procedure{Remove}{$\sK, \sF, m$}
			\State $\Consolidate(\sK, \sF)$ \pcom{Remove duplicate encodings within $\supp(\sK) \cup \supp(\sF)$.}
			\State $\sK \gets \sK - m \cdot \sF$ \pcom{Execute as $\sK, \sF: \Sigma_1^{n_1} \rightarrow \ZZ$, assuming integer range}
			\If{  $\sK$ has negative values}
				\State \Return `no solution'
			\Else\,  \Return $\sK$
			\EndIf
		\EndProcedure
\end{algorithmic}

\begin{algorithm}[H]
	\caption{Triangular Oracle MultiSS}
	\label{alg:MultiSS}
	\begin{algorithmic}[1]


		\Procedure{TriOrMultiSS}{$\Sigma_1$, $n_1$, $\Sigma_2$, $n_2$, $\sK$, $\equiv$, $\preccurlyeq$, $\Foracle$, $\Trioracle$}
		\State Initialize $\sL  = \emptymultiset$ \pcom{$\sL$ is the empty multiset  of $\Sigma_2^{n_2}$}
		\State $\Consolidate(\sK)$. \label{line:MultiSS preprocess K} \pcom{Remove duplicate encodings within $\supp(\sK)$} 
		\While{$\sK \neq \emptymultiset$ }\label{line:MultiSS big while}
			\State $u \gets \minoracle(\supp(\sK))$ 
				\pcom{$u$ is a minimal element of $\sK$} 
			\If{$\triangle(u) = \Error$}
			\State \Return `no solution' 
			\Else
				
			\State $\sF \gets \Foracle_{\Trioracle(u)}$
				\pcom{$\sF$ is $\sF_v$, where $\tau(v) = u$ by Triangular Condition}
			\State $m \gets\frac{\sK(u)}{\sF(u)}$
				\pcom{$m$ is number of copies of $\sF$ to remove from $\sK$.}
			\If{ ($m \notin \NN$) or ($\textsc{Remove}(\sK, \sF, m) = $ `no solution') }
			\State \Return `no solution' 
			\Else 
			\State $\sL(\Trioracle(u)) \gets \sL(\Trioracle(u)) + m$ \label{line:MultiSS update L}
			\label{line:MultiSS update K}  
			\State $\sK \gets \textsc{Remove}(\sK, \sF, m)$ 
			\EndIf
			\EndIf
		\EndWhile
		\State \Return $\sL$
		\EndProcedure

	\end{algorithmic}
\end{algorithm}



\paragraph{Analysis\\}

The pre-processing step of Line~\ref{line:MultiSS preprocess K} can be computed in time $\abs{\supp(\sK)}^2$, by pairwise comparisons. The \textbf{while} loop of Line~\ref{line:MultiSS big while} is executed exactly $\abs{\supp(\sK)}$ number of times, for each $u \in \supp(\sK)$. 

The $\Consolidate$ call in $\TriOrMultiSS$ returns $\widetilde{\sK}:\Sigma_1^{n_1} \rightarrow \NN$, a different encoding of the multiset $\sK$ of $\cU$, such that all elements of $\supp(\widetilde{\sK})$ are uniquely encoded. This requires ${\abs{\supp(\sK)} \choose 2}$  pairwise comparisons, or, $< \abs{\supp(\sK)}^2$ calls to the $\equiv$ oracle. Similarly, the $\Consolidate$ call in $\textsc{Remove}$ can be achieved in $< \abs{\supp(\sK) \cup \supp(\sF)}^2$ calls to the $\equiv$ oracle. 




\section{Reduction  of $\HomExtSym$ to $\OrMultiSS$}
\label{section:comb} 

We define the reduction from {\HomExtSym} to $\OrMultiSS$ then prove the three parts of Theorem~\ref{thm:main-reduction}: the polynomial-time efficiency of the reduction, the bijection between classes of equivalent extensions in $\HExt(\psi)$ and the set $\SubSum(\OMS_\psi)$ of solutions to $\OMS_\psi$, and efficiency of  of defining an extension homomorphism $\exthom \in \HExt(\psi)$  from a solution $\sL \in \SubSum(\OMS_\psi)$.  \\

For notational convenience, Section~\ref{section:GL-actions} defines ``$(G, \sL)$-actions'' which describe permutation actions up to equivalence. 

Towards proving Theorem~\ref{thm:main-reduction} (a), Section~\ref{section:reduction} presents the reduction from a $\HomExtSym$ instance $\psi$ to the $\OrMultiSS$ instance $\OMS_\psi$. We define the instance $\OMS_\psi$ and show that its oracles can be answered in $\poly(n,m)$-time. 

Section~\ref{section:comb intransitive} proves the bijection claimed in Theorem~\ref{thm:main-reduction} (b), assuming the transitive case. The transitive case is proved in Sections~\ref{section:comb decomposing} and~\ref{section:comb extending}.

Section~\ref{section:def-ext} proves Theorem~\ref{thm:main-reduction} (c) by providing the algorithmic details of defining $\exthom\in \HExt(\psi)$ given a solution in $\SubSum(\OMS_\psi)$.

\subsection{$(G,\sL)$-actions, equivalence classes of $G$-actions}
\label{section:GL-actions}
We introduce the terminology ``$(G,\sL)$-actions'' (or ``$(G, L)$-actions'' for transitive actions), which describes group actions up to permutation equivalence. The $\sL: \Sub(G) \to \NN$ denotes a multiset of subgroups of $G$, describing point stabilizers of the action. We make this more precise. 

Recall that we write $[L]_G = [L]$ to denote the conjugacy class of the subgroup $L$ in $G$.
\begin{definition}[$(G,L)$-action] 
Let $\varphi:G \rightarrow \Sym(\Omega)$ be a transitive action. Let $L \leq G$. We say that $\varphi$ is a \defn{$(G,L)$-action}  if $\varphi$ is equivalent to $\rho_L$, the natural on right cosets of $L$ (Example~\ref{ex:coset-action}). We say that $\varphi$ is a \defn{$(G, [L])$-action} if $\varphi$ is a $(G,L)$-action. 
\end{definition}

By Corollary~\ref{cor:prelim-equiv-vs-conj}, a $G$-action is a $(G,L)$-action if and only if $L$ is a point stabilizer of the action. Moreoever, a $(G,L)$-action is a $(G,L_1)$-action if and only if  $[L] = [L_1]$. So, we can speak of $(G,[L])$-actions and make no distinction between $(G,[L])$-actions and $(G,L)$-actions.

We now introduce notation to describe equivalence between intransitive actions. 

\begin{definition}[$(G,\sL)$-action]
Let $\varphi: G \to \Sym(\Omega)$ be a group action. 
Let $\sL: \Sub(G) \to \NN$ be a multiset listed as $\sL =  \{\{ L_i \leq G \}\}_{i=1}^d$. 
We say the action of $G$ on $\Omega$ is a \defn{$(G, \sL)$-action} if the orbits in $\Omega$ of the action can be labeled $\Omega = \Omega_1 \dotcup \cdots \dotcup \Omega_d$ so that $G$ acts on $\Omega_i$ as a $(G,L_i)$-action for all $1 \leq i \leq d$.\footnote{The multiset $\sL:\Sub(G) \rightarrow \NN$ contains one point stabilizer per orbit of the $G$-action. Viewing $\sL$ as a multiset is essential. For example, $\sL = \{\{G\}\}$ describes the trivial action of $G$ on one point, whereas $ \sL = \{\{G,G\}\}$ describes the trivial action of $G$ on two points.} 
\end{definition}

Again, the equivalence class of the $G$-action is determined by the multiset $\sL$ up to conjugation of its elements. We introduce notation describing conjugate multisets. 

\begin{notation}
\label{notation:conj-multiset}
Let $\sL = \{ \{ L_1, \ldots, L_k\}\}$ be a multiset of subgroups of $G$. We denote by $[\sL]_G = \{ \{ [L_1]_G, \ldots, [L_k]_G\}\}$ the multiset of conjugacy classes for the subgroups of $\sL$. 
\end{notation}
In other words, for a multiset $\sL: \Sub(G) \to \NN$, denote by $[\sL]_G : \Conj(G) \to \NN$ the multiset found by replacing every element $L \in \sL$ by $[L]_G$. Multiplicities of subgroup conjugacy classes $[L]$ in the multiset $[\sL]$ satisfy $[\sL]([L]) = \sum_{L \in [L]} \sL(L)$.  We may write $[L]$ for $[L]_G$ if $G$ is understood.


\begin{definition}[Conjugate multisets]  
We say that two multisets $\sL_1, \sL_2 : \Sub(G) \rightarrow \NN$ are \defn{conjugate} if $[\sL_1] = [\sL_2]$. In other words, there exists a bijection $\pi: \sL_1 \rightarrow \sL_2$ such that $\pi(L) \sim_G L$ for all $L \in \sL_1$.\footnote{This definition does not require conjugacy of all pairs simultaneously via the one element of $G$.} 
\end{definition}
Conjugate multisets describes group actions up to equivalence, as we see in the following next statement, which follows from the definitions and Corollary~\ref{cor:prelim-equiv-vs-conj}. 

\begin{corollary}
\label{cor:similar-multiset-same-action}
Let $\sL_1, \sL_2: \Sub(G) \rightarrow \NN$. The following are equivalent. 
\begin{itemize} 
\item $\sL_1$ and $\sL_2$ are conjugate, or $[\sL_1] = [\sL_2]$. 
\item A $(G, \sL_1)$-action is permutation equivalent to a $(G, \sL_2)$-action.
\item  A $(G, \sL_1)$-action is also a $(G, \sL_2)$-action. 
\end{itemize}
\end{corollary}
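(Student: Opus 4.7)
The plan is to prove the three-way equivalence via a short cycle (1)$\Rightarrow$(2)$\Rightarrow$(3)$\Rightarrow$(1), reducing each implication to the transitive case already established in Corollary~\ref{cor:prelim-equiv-vs-conj}. The central observation is that for any $G$-action on $\Omega$, the partition of $\Omega$ into $G$-orbits is intrinsic to the action, and any permutation equivalence maps orbits to orbits and preserves the conjugacy class of each point stabilizer. Consequently every question about intransitive actions factors orbit-wise, which is exactly what the definition of a $(G,\sL)$-action is set up to exploit.

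For (1)$\Rightarrow$(2), I would begin by using $[\sL_1]=[\sL_2]$ to select a bijection $\pi\colon\sL_1\to\sL_2$ with $\pi(L)\sim_G L$ for every $L\in\sL_1$. Fix a $(G,\sL_1)$-action on $\Omega=\Omega_1\dotcup\cdots\dotcup\Omega_d$ whose orbit $\Omega_i$ matches $L_i\in\sL_1$, and a $(G,\sL_2)$-action on $\Gamma=\Gamma_1\dotcup\cdots\dotcup\Gamma_d$ whose orbit $\Gamma_i$ matches $\pi(L_i)$. The transitive case of Corollary~\ref{cor:prelim-equiv-vs-conj}, applied orbit-by-orbit, supplies a bijection $\zeta_i\colon\Omega_i\to\Gamma_i$ intertwining the two $G$-actions. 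The map $\zeta\colon\Omega\to\Gamma$ defined piecewise by $\zeta|_{\Omega_i}=\zeta_i$ is then a permutation equivalence of the full actions.

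For (2)$\Rightarrow$(3), let $\zeta\colon\Omega\to\Gamma$ witness permutation equivalence of a $(G,\sL_1)$-action on $\Omega$ and a $(G,\sL_2)$-action on $\Gamma$. Because $\zeta$ intertwines the $G$-actions, it carries the orbit decomposition $\Gamma=\Gamma_1\dotcup\cdots\dotcup\Gamma_d$ (on which $\sL_2$ labels the orbits) back to an orbit decomposition $\Omega=\zeta^{-1}(\Gamma_1)\dotcup\cdots\dotcup\zeta^{-1}(\Gamma_d)$ with identical point-stabilizer assignments; hence the given action on $\Omega$ is itself a $(G,\sL_2)$-action.

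For (3)$\Rightarrow$(1), suppose a single $G$-action on $\Omega$ is simultaneously a $(G,\sL_1)$- and a $(G,\sL_2)$-action. Its orbit decomposition $\Omega=\Omega_1\dotcup\cdots\dotcup\Omega_d$ is uniquely determined by the action, and by Corollary~\ref{cor:prelim-equiv-vs-conj} the conjugacy class $[G_{\omega_i}]$ of the stabilizer of any chosen $\omega_i\in\Omega_i$ is an invariant of the orbit. Both $[\sL_1]$ and $[\sL_2]$ must therefore coincide with the multiset $\{\{[G_{\omega_i}]\}\}_{i=1}^d$, giving $[\sL_1]=[\sL_2]$. The only subtle point in the whole argument is verifying that permutation equivalence preserves both the orbit partition and the conjugacy class of each orbit's stabilizer, but this is exactly what the transitive form of Corollary~\ref{cor:prelim-equiv-vs-conj} licenses, so no genuine obstacle arises.
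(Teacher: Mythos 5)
Your proof is correct, and it follows the same route the paper intends: the paper states this corollary as following ``from the definitions and Corollary~\ref{cor:prelim-equiv-vs-conj}'' without writing out details, and your orbit-wise reduction to the transitive case (noting that an equivalence carries orbits to orbits and preserves the conjugacy class of point stabilizers) is precisely that omitted argument.
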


So, we can speak of $(G,[\sL])$-actions and make no distinction between $(G,[\sL])$-action and $(G,\sL)$-actions.


\subsection{Reduction} 
\label{section:reduction}

In this section, we discuss the $\poly(n,m)$-time reduction from  $\HomExtPerm$ to  $\OrMultiSS$.

\begin{remark}[Meaning of ``reduction'']
As usual, our reduction will compute the explicit inputs to $\OrMultiSS$ from a $\HomExtSym$ instance in $\poly(n,m)$ time. However, to account for the oracles in $\OrMultiSS$, we provide also answers to its oracles in $\poly(n,m)$-time.
\end{remark} 


Recall that $\Sub(G)$ denotes the set of subgroups of $G$ and $\Conj(G)$ denotes the set of conjugacy classes of subgroups of $G$.  Denote by $\SubLeq(G)$ the set of subgroups of $G$ with index bounded by $m$. Denote by $\ConjLeq(G)$ the set of conjugacy classes of subgroups of $G$ with index bounded by $m$. 



\paragraph{Construction of $\OMS_\psi$}

We define $\cU, \cV, [\sK]$ and encodings $\Sigma_1^{n_1}, \Sigma_2^{n_1}$ of the $\OrMultiSS$ instance $\OMS_\psi$.\\

$\cU$: $\ConjLeq(M)$. 

$\cV$: $\ConjLeq(G)$. 

Encoding of $\cU$: words of length $n_1 = 2n$ over alphabet $\Sigma_1 = M$. A conjugacy class in $\cU$ of subgroups is encoded by a representative subgroup in $\SubLeq(M)$, which is then encoded by a list of at most $2 n$ generators.

Encoding of $\cV$: Likewise, with $\Sigma_2 = G$ and $n_2 = 2n$. 

$[\sK]$: Let $\sK : \SubLeq(M) \rightarrow \NN$ be a multiset containing one point stabilizer per orbit of the action $\psi : M \rightarrow  S_m$. So, $[\sK]: \ConjLeq(M) \rightarrow \NN$ is a multiset of conjugacy classes, as in Notation~\ref{notation:conj-multiset}. 

\paragraph{Notational issues.}  
Using $[\sK]$ versus $\sK$ reflects the non-unique encoding of $\cU = \ConjLeq(M)$ by $\SubLeq(G)$ and $\cV = \ConjLeq(G)$ by $\SubLeq(G)$, adhering to Notation~\ref{notation:conj-class} and~\ref{notation:conj-multiset}. A conjugacy class $[K] \in \cU$ will be encoded by $K \in \SubLeq(M)$. A multiset $[\sK]: \cU \to \NN$ will be encoded  by $\sK: \SubLeq(M) \to \NN$. 

 
\paragraph{Calculating $[\sK]$\\}

Calculating $[\sK]: \cU \to \NN$ from $\psi: M \to S_m$: Decompose $[m] = \Sigma_1 \dotcup \ldots \dotcup \Sigma_s$ into its $M$-orbits under the action described by $\psi$. Choose one element $x_i \in \Sigma_i$ per orbit.\footnote{The choice of $x_i$ will not affect the correctness of the reduction.} Then, calculate the multiset $\sK := \{ \{ M_{x_i} : i = 1 \ldots s \} \}$ by finding the point stabilizer of each chosen element. So, calculating $\sK$ can be accomplished in $\poly(n)$-time by Proposition~\ref{prop:pga-basic}. 

\paragraph{Answering $\equiv$ oracle\\}

The $\equiv$ oracle: given two subgroups in $\SubLeq(M)$, check their conjugacy. This can be accomplished in $\poly(n,m)$-time by Proposition~\ref{prop:pga-subgroup}.

\paragraph{Answering $\Foracle$ oracle. \\}

The set $\SSF$ is indexed by $\cV = \ConjLeq(G)$. $\Foracle$ takes as input $[L] \in \ConjLeq(G)$ (represented by a $L \in \SubLeq(G)$) and returns $[\sF_L]:\ConjLeq(M) \to \NN$ (represented by $\sF_L: \SubLeq(M) \to \NN$), defined below. The multiset $\sF_L : \SubLeq(M) \rightarrow \NN$ is defined so that $(G,L)$-actions induce $(M, \sF_L)$-actions. 

\begin{definition}[$\sF_L(\bsigma)$]
\label{def:FsubL}
Let $\bsigma = (\sigma_1, \ldots, \sigma_d)$ be a list of double coset representatives for $L \backslash G /M$. We define the multiset $\sF^M_L(\bsigma) : \Sub(M) \rightarrow \NN$ by 
	\begin{equation*}
	\sF^M_L(\bsigma) = 	\sF_L  := \{\{ \sigma_i^{-1} L \sigma_i \cap M: i = 1 \ldots d \}\}. 
	\end{equation*}
\end{definition}

In the context of extending an $M$-action $\psi:M \to S_m$ to a $G$-action, $M$ is understood, so  we drop the superscript and write $\sF_L$. 

$\Foracle$ is well-defined. First of all, the choice $\bsigma$ of double coset representatives will not affect the conjugacy class of $\sF_L^M(\sigma)$
(see Remark~\ref{rmk:Foracle-well-defined}). 
Moreover, if $[L]_G = [L_1]_G$ then $[\sF_L]_M = [\sF_{L_1}]_M$. 
Section~\ref{section:comb decomposing} further discusses and proves these claims about the properties of $\sF_L$.

$\Foracle$ can be answered in $\poly(n,m)$-time by Proposition~\ref{prop:pga-double-cosets}.




\subsection{Combinatorial condition for extensions}
\label{section:comb intransitive}
We are now equipped to state the central technical result. It relates $M$-actions to extension $G$-actions by describing how $M$-orbits may be grouped to form $G$-orbits.

First, we address the case of transitive extensions.

As in Definition~\ref{def:FsubL}, $\sF_L:\Sub(M) \to \NN$ denotes the multiset returned by the oracle $\Foracle$ on input $L \in \Sub(G)$. Since we assume the extension $G$-action is transitive, the multiset $\sF_L$ describes exactly the $M$-orbits that must be collected to form one $(G,L)$-orbit. 

\begin{lemma}[Characterization of transitive extensions]
\label{lemma:comb-trans}
Let $M, L \leq G$ and $m \in \NN$. Let $\psi : M \to S_m$ be an $M$-action.  Under these circumstances, $\psi$ extends to a $(G,L)$-action if and only if $\psi$ is a $(M,\sF_L)$-action. 
\end{lemma}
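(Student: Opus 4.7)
The plan is to analyze the restriction of the natural coset action $\rho_L: G \actson L\backslash G$ to the subgroup $M$, and show that its orbit decomposition is precisely described by the multiset $\sF_L$. Once that structural fact is established, both directions of the equivalence will follow from the already-proved principle (Corollary~\ref{cor:similar-multiset-same-action}) that an $M$-action is determined, up to equivalence, by the $M$-conjugacy multiset of its point stabilizers.

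First I would carry out the orbit analysis. The double cosets $L\sigma_1 M, \ldots, L\sigma_d M$ partition $G$, hence the sets $(L\sigma_i M)/L \subseteq L\backslash G$ partition $L\backslash G$ into $M$-invariant pieces, and in fact each piece is a single $M$-orbit containing the point $L\sigma_i$. Next, the point stabilizer of $L\sigma_i$ under the $M$-action is computed as
\begin{equation*}
M_{L\sigma_i} \;=\; \{m \in M : L\sigma_i m = L\sigma_i\} \;=\; \{m \in M : \sigma_i m \sigma_i^{-1} \in L\} \;=\; \sigma_i^{-1} L \sigma_i \cap M.
\end{equation*}
Thus, reading off one point stabilizer per $M$-orbit, the restriction $\rho_L|_M$ is an $(M, \sF_L)$-action in the sense of Section~\ref{section:GL-actions}, with the multiset $\sF_L$ of Definition~\ref{def:FsubL}.

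For the forward direction, suppose $\psi$ extends to a $(G,L)$-action $\vf: G \to S_m$. Since $\vf$ is equivalent to $\rho_L$, its restriction $\vf|_M = \psi$ is equivalent to $\rho_L|_M$, which by the computation above is an $(M, \sF_L)$-action. Hence $\psi$ is an $(M, \sF_L)$-action.

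For the converse, suppose $\psi$ is an $(M, \sF_L)$-action. The restriction $\rho_L|_M$ is also an $(M, \sF_L)$-action, by the orbit analysis. By Corollary~\ref{cor:similar-multiset-same-action} these two $M$-actions are permutation equivalent, so there is a bijection $\zeta: [m] \to L\backslash G$ intertwining $\psi$ with $\rho_L|_M$. Transporting the full $G$-action $\rho_L$ on $L\backslash G$ back to $[m]$ via $\zeta$ produces a homomorphism $\vf: G \to S_m$ whose restriction to $M$ equals $\psi$, and which is by construction a $(G,L)$-action.

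The main obstacle I anticipate is the bookkeeping involved in the converse: one must verify that $\sF_L$ does not depend on the choice $\bsigma$ of double coset representatives (up to $M$-conjugacy of its entries), so that ``$(M, \sF_L)$-action'' is well-defined; this is the observation flagged in Remark~\ref{rmk:Foracle-well-defined} and amounts to checking that replacing $\sigma_i$ by $\ell \sigma_i \mu$ (for $\ell \in L, \mu \in M$) conjugates $\sigma_i^{-1} L \sigma_i \cap M$ inside $M$ by $\mu$. With this invariance in hand, Corollary~\ref{cor:similar-multiset-same-action} applies cleanly and the extension is obtained.
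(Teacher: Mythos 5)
Your proposal is correct and follows essentially the same route as the paper: the paper likewise identifies the $M$-orbits of $\rho_L$ on $L\backslash G$ with the double cosets $L\backslash G/M$ and computes the point stabilizers $\sigma_i^{-1}L\sigma_i\cap M$ (Remark~\ref{remark:FsubL-doublecosets-vs-Morbits}, Lemma~\ref{lemma:FsubL-Mactions}), giving the forward direction as Corollary~\ref{cor:comb-forward-direction}, and obtains the converse by noting all $(M,\sF_L)$-actions are equivalent (Corollary~\ref{cor:similar-multiset-same-action}) and transporting $\rho_L$ through the intertwining bijection, which the paper just spells out explicitly as Construction~\ref{def:extension map} and Proposition~\ref{prop:gluing}. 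Your handling of the independence from the choice of double coset representatives also matches Remark~\ref{rmk:Foracle-well-defined}.
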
 
The forward and backwards directions are Corollary~\ref{cor:comb-forward-direction} and Proposition~\ref{prop:gluing} in the next two sections. 

\begin{remark}
To rephrase Lemma~\ref{lemma:comb-trans}, an $(M, \sK)$-action extends to a transitive $(G,L)$-action if and only if $[\sK] = [\sF_L]$ (see Corollary~\ref{cor:similar-multiset-same-action}).
\end{remark}

The following result on intransitive actions is a corollary to Lemma~\ref{lemma:comb-trans}. 

\begin{theorem}[Key technical lemma: characterization of $\HomExtSym$ with codomain $S_m$]
\label{thm:HE comb main}
Let $M \leq G$ and $m \in \NN$. Let $\psi: M \to S_m$ be an $M$-action. Let $[\sL]:\Conj(G) \to \NN$. Let $[\sK] : \Conj(M) \to \NN$ describe the equivalence class of $\psi$, so $\psi$ is an $(M, \sK)$-action.  Under these circumstances, $\psi$ extends to a $(G, [\sL])$-action  if and only if $[\sK]$ is an $[\sL]$-linear combination of elements in $\SSF$, i.e., 
	\begin{equation}
	\label{eqn:HE comb char}
		[\sK] =  \sum_{L \in \sL} [\sF_L]= \sum_{[L] \in \ConjLeq(G)} \sL([L]) [\sF_L]. 
	\end{equation}

\end{theorem}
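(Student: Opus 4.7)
The plan is to deduce the intransitive statement from the transitive case already packaged as Lemma~\ref{lemma:comb-trans}, using the fact that every $G$-action (resp.\ $M$-action) on a finite set decomposes uniquely into its orbits. Throughout I will work with the conjugacy-class versions $[\sK]$, $[\sF_L]$, $[\sL]$ to avoid ambiguity in representative choice, invoking Corollary~\ref{cor:similar-multiset-same-action} whenever I need to pass between a multiset of subgroups and the equivalence class of the $G$-action it describes.

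For the forward direction, assume $\psi$ extends to a $(G,[\sL])$-action $\exthom : G \to S_m$. By definition of a $(G,\sL)$-action, we may label the $G$-orbits on $[m]$ as $[m] = \Omega_1 \dotcup \cdots \dotcup \Omega_d$ with $G \actson \Omega_i$ a $(G,L_i)$-action, where $\sL = \{\{L_1,\dots,L_d\}\}$. Restricting to $M$, each $\Omega_i$ is $M$-invariant and decomposes further into $M$-orbits. Applying Lemma~\ref{lemma:comb-trans} to each restriction $\exthom|_M \actson \Omega_i$ says exactly that the collection of $M$-point-stabilizers (one per $M$-orbit inside $\Omega_i$) forms a multiset whose conjugacy class in $\Conj(M)$ equals $[\sF_{L_i}]$. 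Concatenating over $i$, the multiset of $M$-point-stabilizers of $\psi$ has conjugacy class $\sum_i [\sF_{L_i}] = \sum_{L\in \sL}[\sF_L]$. But this multiset also equals $[\sK]$ by definition of $\sK$, which gives equation~(\ref{eqn:HE comb char}).

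For the converse, suppose $[\sK] = \sum_{L\in \sL}[\sF_L]$. Since $[\sK]$ is the multiset recording one $M$-conjugacy class of point stabilizer per $M$-orbit of $\psi$, this equation lets me partition the $M$-orbits of $[m]$ into blocks $\Delta_L$ (one block per $L\in \sL$, counted with multiplicity) such that the multiset of $M$-point-stabilizers coming from orbits inside $\Delta_L$ has conjugacy class $[\sF_L]$. Equivalently, $M \actson \Delta_L$ is an $(M,\sF_L)$-action. By the backward direction of Lemma~\ref{lemma:comb-trans}, this $M$-action on $\Delta_L$ extends to a transitive $(G,L)$-action. Doing this independently for each $L\in \sL$ and taking the disjoint union of the resulting $G$-actions yields a $G$-action on $[m]$ whose orbit decomposition realizes $[\sL]$, i.e., a $(G,[\sL])$-action extending $\psi$.

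The main conceptual step is the combinatorial bookkeeping in the converse: we must be sure the equation $[\sK] = \sum_L [\sF_L]$ of multisets in $\Conj(M)$ actually lifts to an honest partition of the $M$-orbits of $[m]$ (not merely of their conjugacy-class labels). This is routine provided one works multiplicatively with multiplicities: the equation guarantees, class by class, the correct number of $M$-orbits with each stabilizer conjugacy class, so one can match them bijectively against the required input to each transitive gluing step. Once that matching is fixed, the transitive Lemma~\ref{lemma:comb-trans} does all of the real work, and the forward direction of the present theorem becomes essentially tautological from the orbit decomposition.
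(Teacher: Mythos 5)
Your proposal is correct and follows essentially the same route as the paper, which presents Theorem~\ref{thm:HE comb main} as a corollary of the transitive characterization (Lemma~\ref{lemma:comb-trans}, i.e., Corollary~\ref{cor:comb-forward-direction} plus Proposition~\ref{prop:gluing}) via the orbit decomposition; your write-up simply makes the class-by-class matching of $M$-orbits explicit, which is the same bookkeeping the paper leaves implicit.
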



We have found that an $(M, \sK)$-action extends exactly if $\sK$ is a Subset Sum with Repetition of $\{\sK_L\}$. Compare Equation~\eqref{eqn:HE comb char} to the definition of $\SubSum(\OMS_\psi)$ (see Notation~\ref{notation:SubSum} and the reduction of Section~\ref{section:reduction}). We have found the following. 
\begin{corollary}

Let $M \leq G$ and $m \in \NN$. Let $\psi: M \to S_m$ be an $(M, [\sK])$-action, where $[\sK]: \Conj(M) \to \NN$.  Under these circumstances, $\psi$ extends to a $G$-action if and only if $\SubSum(\OMS_\psi)$ is nonempty. 
\end{corollary}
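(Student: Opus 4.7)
The plan is to derive this corollary directly from the key technical Theorem~\ref{thm:HE comb main} by matching its combinatorial condition against the definition of the $\OrMultiSS$ instance $\OMS_\psi$ constructed in Section~\ref{section:reduction}. Recall that $\OMS_\psi$ has universe $\cU = \ConjLeq(M)$, indexing set $\cV = \ConjLeq(G)$, target multiset $[\sK]$, and family $\SSF = \{\sF_L\}_{[L] \in \ConjLeq(G)}$ supplied via $\Foracle$. By Notation~\ref{notation:SubSum} (adapted to the oracle setting), a solution $\sL \in \SubSum(\OMS_\psi)$ is precisely a multiset $\sL : \ConjLeq(G) \to \NN$ satisfying $[\sK] = \sum_{[L] \in \ConjLeq(G)} \sL([L]) \cdot [\sF_L]$, which is exactly Equation~\eqref{eqn:HE comb char} of Theorem~\ref{thm:HE comb main}.

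For the forward direction, suppose $\psi$ extends to some $G$-action $\exthom : G \to S_m$. Decomposing $[m]$ into $G$-orbits $[m] = \Omega_1 \dotcup \cdots \dotcup \Omega_t$ and selecting a point stabilizer $L_i \leq G$ of a point in $\Omega_i$, the action $\exthom$ is a $(G, \sL)$-action for $\sL = \{\{L_1, \ldots, L_t\}\}$. By the orbit--stabilizer correspondence, $\ind{G}{L_i} = \lvert \Omega_i \rvert \leq m$, so each $[L_i]$ lies in $\ConjLeq(G)$ and hence $[\sL]$ is supported on $\cV$. Applying Theorem~\ref{thm:HE comb main} to this $\sL$ yields the identity~\eqref{eqn:HE comb char}, which exhibits $[\sL]$ as an element of $\SubSum(\OMS_\psi)$.

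Conversely, suppose $\SubSum(\OMS_\psi)$ is nonempty and pick any solution $\sL : \ConjLeq(G) \to \NN$. Viewing $\sL$ as a multiset supported on $\Conj(G)$ (it is automatically supported on $\ConjLeq(G) \subseteq \Conj(G)$), the defining equation of $\SubSum$ is precisely~\eqref{eqn:HE comb char}. The ``if'' direction of Theorem~\ref{thm:HE comb main} then produces a $(G, [\sL])$-action extending $\psi$, which in particular is a $G$-action extending $\psi$.

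The argument is essentially a translation, so there is no substantive obstacle beyond one bookkeeping point: the indexing discrepancy between $\Conj(G)$ in Theorem~\ref{thm:HE comb main} and $\ConjLeq(G)$ in $\OMS_\psi$. As noted above, this is automatic because any point stabilizer of a $G$-action on a set of size at most $m$ has index at most $m$, so every relevant multiset of conjugacy classes is supported on $\ConjLeq(G)$.
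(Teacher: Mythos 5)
Your proposal is correct and follows the paper's own route: the paper also obtains this corollary as an immediate translation of Theorem~\ref{thm:HE comb main} (Equation~\eqref{eqn:HE comb char}) into the defining equation of $\SubSum(\OMS_\psi)$ from the reduction of Section~\ref{section:reduction}. Your explicit bookkeeping that point stabilizers of a $G$-action on $[m]$ have index at most $m$, so all relevant multisets are supported on $\ConjLeq(G)$, is a detail the paper leaves implicit but does not change the argument.
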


So, $\HExt(\psi)$ is nonempty if and only if $\SubSum(\OMS_\psi)$ is nonempty. 
\begin{remark}
We have found something even stronger. The multisets $[\sL]$ satisfying Equation~\eqref{eqn:HE comb char} are exactly the elements in $\SubSum(\OMS_\psi)$. A multiset $[\sL]: \Conj(G) \rightarrow \NN$ satisfies Equation~\eqref{eqn:HE comb char} if and only if $\HExt(\psi)$ contains a $(G,\sL)$-action extending $\psi$. This notation identifies all equivalent extensions, so we have found a bijection between the solutions in $\SubSum(\OMS_\psi)$ and classes of equivalent extensions in $\HExt(\psi)$, as promised by Theorem~\ref{thm:main-reduction} (b).

\end{remark}


\subsection{$(G,L)$-actions induce $(M, \sF_L)$-actions}
\label{section:comb decomposing} 

Let $M \leq G$. This section describes the $M$-action found by restricting a (transitive) $G$-action. If $\psi: G \rightarrow \Sym(\Omega)$ describes a $G$-action on $\Omega$, we will call the $M$-action on $\Omega$ found by restriction of $\psi$ to $M$ the \defn{$M$-action induced by $\psi$}, denoted by $\psi|_M$. 

First, we identify the permutation domain $\Omega$ of a $(G,L)$-action with the right cosets $L \backslash G$. By definition of ``$(G,L)$-action,'' there exists a permutation equivalence of this action with $\rho_L$ (the national action on cosets of $L$), i.e., there exists a bijection $\pi: \Omega \to L\backslash G$ respecting the $G$-action. This bijection $\pi$ identifies $\Omega$ with $L \backslash G$.    

We now describe the behavior of the induced $M$-action on $L \backslash G$. 

\begin{remark}
\label{remark:FsubL-doublecosets-vs-Morbits}
Let $M,L \leq G$. Consider the natural $M$-action on $L \backslash G$ (the $M$-action induced by the $G$-action $\rho_L$). The cosets $(Lg_1)$ and $(Lg_2)$ belong to the same $M$-orbit if and only if $Lg_1M = Lg_2 M$, i.e., if $g_1$ and $g_2$ belong to the same double coset of $L \backslash G /M$. 
\end{remark}

\begin{lemma}
\label{lemma:FsubL-Mactions}
Let $g_0 \in G$. Let $M,L \leq G$. The action of $M$ on the orbit $(Lg_0)^M$ of $Lg_0$ in $L \backslash G$ is equivalent to the action of $M$ on $K \backslash M$, where $K := g_0^{-1}Lg_0  \cap M$. The bijection is given by $La \leftrightarrow Kg_0^{-1} a$. 
\end{lemma}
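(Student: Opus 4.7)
My plan is to exhibit the proposed bijection explicitly, verify well-definedness, injectivity, surjectivity, and $M$-equivariance, in that order. All four verifications reduce to straightforward double coset / subgroup membership manipulations, so there is no serious obstacle; the only care needed is in the choice of coset representative.

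First, I would set up the map $\zeta : (Lg_0)^M \to K\backslash M$ as follows. Any element of the orbit $(Lg_0)^M$ has the form $L g_0 m$ for some $m \in M$ (this is literally the definition of orbit under the right action $(Lx)^{m'} = Lxm'$). Define $\zeta(Lg_0 m) := Km$. Equivalently, given a coset $La$ in the orbit, one may choose the representative $a \in g_0 M$, and then $\zeta(La) = K(g_0^{-1} a)$, matching the formula in the statement.

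Next I would verify well-definedness. Suppose $Lg_0 m_1 = Lg_0 m_2$ with $m_1,m_2 \in M$. Then $g_0 m_1 m_2^{-1} g_0^{-1} \in L$, so $m_1 m_2^{-1} \in g_0^{-1} L g_0$. Since $m_1 m_2^{-1} \in M$ as well, it lies in $K = g_0^{-1} L g_0 \cap M$, and hence $Km_1 = Km_2$. Injectivity is essentially the same calculation run backward: if $Km_1 = Km_2$ then $m_1 m_2^{-1} \in K \subseteq g_0^{-1} L g_0$, whence $g_0 m_1 m_2^{-1} g_0^{-1} \in L$ and $Lg_0 m_1 = Lg_0 m_2$. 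Surjectivity is immediate because $Km$ is the image of $Lg_0 m$ for every $m \in M$.

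Finally I would check $M$-equivariance. For any $m' \in M$, the $M$-action on $L \backslash G$ sends $L g_0 m \mapsto L g_0 m m'$, so
\begin{equation*}
\zeta\bigl((Lg_0 m)^{m'}\bigr) = \zeta(L g_0 (m m')) = K(m m') = (Km)^{m'} = \bigl(\zeta(Lg_0 m)\bigr)^{m'},
\end{equation*}
where the penultimate equality uses the natural $M$-action on $K \backslash M$. This yields the required equivalence of $M$-actions and completes the proof.
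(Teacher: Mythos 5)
Your proof is correct and follows essentially the same route as the paper: both exhibit the explicit map $La \mapsto K g_0^{-1}a$ and check $M$-equivariance. You are somewhat more thorough in spelling out well-definedness, injectivity, and surjectivity (via the representative choice $a \in g_0 M$), which the paper leaves implicit by appealing to transitivity of both actions.
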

\begin{proof}
Both actions are transitive. Let $\zeta: (Lg_0)^M \rightarrow K \backslash M$ be defined by $\zeta(Lg) = Kg_0^{-1}g$ for all $g \in Lg_0 M$. For all $a \in M$, 
	\begin{equation*}
	\zeta ( (Lg)^a)  = \zeta (L(ga)) = K g_0^{-1} (ga) = (Kg_0^{-1} g)^a = \zeta(Lg)^a. 
	\end{equation*}
\end{proof}

From Remark~\ref{remark:FsubL-doublecosets-vs-Morbits} and Lemma~\ref{lemma:FsubL-Mactions}, we have found the (possibly non-transitive) natural action of $M$ on $L \backslash G$ satisfies the following.
	\begin{enumerate}[(1)]
	\item The number of orbits is $\abs{L \backslash G / M}$, the number of double cosets of $L$ and $M$ in $G$. 
	\item The point stabilizer of $Lg \in L \backslash G$ under the $M$-action is $M_{Lg} = g^{-1} L g \cap M$. 
	\end{enumerate}

We restate the definition of $\sF_L$, which we now see describes the $M$-action on $L \backslash G$. 

\begin{definition}[$\sF_L(\bsigma)$]
Let $\bsigma = (\sigma_1, \ldots, \sigma_d)$ be a list of double coset representatives for $L \backslash G /M$. We define the multiset $\sF^M_L(\bsigma) : \Sub(M) \rightarrow \NN$ by 
	\begin{equation*}
	\sF^M_L(\bsigma) = 	\sF_L  := \{\{ \sigma_i^{-1} L \sigma_i \cap M: i = 1 \ldots d \}\}. 
	\end{equation*}
\end{definition}
If the subgroup $M$ is understood, we drop the superscript $M$. 

From Remark~\ref{remark:FsubL-doublecosets-vs-Morbits} and Lemma~\ref{lemma:FsubL-Mactions}, we find that $(G,L)$-actions restrict to $(M, \sF_L)$-actions. 
\begin{corollary}
\label{cor:comb-forward-direction}
Let $M, L \leq G$. Let $\bsigma = (\sigma_1, \ldots, \sigma_d)$ be a set of double coset representatives of $L \backslash G /M$. If $G$ acts on $\Omega$ as a $(G,L)$-action, then the induced action of $M$ on $\Omega$ is an $(M, \sF_L(\bsigma))$-action. In fact, the $M$-action induced by a $(G,[L])$-action is an $(M, [\sF_L])$-action. 
\end{corollary}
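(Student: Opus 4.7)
The plan is to reduce to the canonical action on $L\backslash G$ and then package together the two preceding facts, Remark~\ref{remark:FsubL-doublecosets-vs-Morbits} and Lemma~\ref{lemma:FsubL-Mactions}, which already did the real work. By definition of a $(G,L)$-action there is a $G$-equivariant bijection $\pi\colon\Omega\to L\backslash G$. A $G$-equivariant bijection is automatically $M$-equivariant, so it preserves $M$-orbits and point stabilizers. Hence it is enough to establish the statement for the natural action $\rho_L$ on $L\backslash G$.

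Next I would decompose $L\backslash G$ into $M$-orbits. Remark~\ref{remark:FsubL-doublecosets-vs-Morbits} tells us these orbits are in bijection with the double cosets in $L\backslash G/M$, so the given $\bsigma=(\sigma_1,\ldots,\sigma_d)$ indexes them: the orbits are $(L\sigma_1)^M,\ldots,(L\sigma_d)^M$. For each $i$, I would apply Lemma~\ref{lemma:FsubL-Mactions} with $g_0=\sigma_i$ to conclude that the $M$-action on $(L\sigma_i)^M$ is equivalent to the natural $M$-action on $K_i\backslash M$, where $K_i=\sigma_i^{-1}L\sigma_i\cap M$. In other words, each orbit is an $(M,K_i)$-action, and the multiset of point stabilizers across orbits is exactly $\sF_L(\bsigma)=\{\{K_1,\ldots,K_d\}\}$. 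By the definition of $(M,\sL)$-action, the restricted action is an $(M,\sF_L(\bsigma))$-action, proving the first sentence.

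For the ``in fact'' addendum I would argue that the $M$-conjugacy class $[\sF_L]_M$ depends neither on the choice of $\bsigma$ nor on the representative $L$ of $[L]_G$. Replacing $\sigma_i$ by any $\sigma_i'=\ell_i\sigma_i m_i$ with $\ell_i\in L$, $m_i\in M$ gives
\[
(\sigma_i')^{-1}L\sigma_i'\cap M \;=\; m_i^{-1}(\sigma_i^{-1}L\sigma_i\cap M)m_i,
\]
which is $M$-conjugate to $K_i$, so $[\sF_L(\bsigma)]_M=[\sF_L(\bsigma')]_M$. Replacing $L$ by $L'=g^{-1}Lg$, if $\bsigma'=(\sigma_1',\ldots,\sigma_d')$ is a set of $L'\backslash G/M$ representatives, then $(g\sigma_1',\ldots,g\sigma_d')$ is a set of $L\backslash G/M$ representatives and
\[
(\sigma_i')^{-1}L'\sigma_i'\cap M \;=\; (g\sigma_i')^{-1}L(g\sigma_i')\cap M,
\]
so in fact $\sF_{L'}(\bsigma')=\sF_L(g\bsigma')$ as multisets, and the previous paragraph applies. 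Combining the two invariances yields $[\sF_L]_M=[\sF_{L_1}]_M$ whenever $[L]_G=[L_1]_G$, which is what the second sentence asserts.

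I do not anticipate a serious obstacle: both preceding results already encode the combinatorial content, and the remainder is bookkeeping. The only step that requires a little care is the well-definedness argument in the last paragraph, where one must track how double coset representatives transform under both $L\mapsto g^{-1}Lg$ and the $L$-$M$ ambiguity in each $\sigma_i$; once this is set up correctly, the identities above collapse immediately.
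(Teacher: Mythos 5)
Your argument is correct and, for the main assertion, is exactly the paper's route: identify $\Omega$ with $L\backslash G$ via the $G$-equivariant (hence $M$-equivariant) bijection, use Remark~\ref{remark:FsubL-doublecosets-vs-Morbits} to index the $M$-orbits by $L\backslash G/M$, and apply Lemma~\ref{lemma:FsubL-Mactions} with $g_0=\sigma_i$ to read off the point stabilizers $K_i=\sigma_i^{-1}L\sigma_i\cap M$, giving an $(M,\sF_L(\bsigma))$-action. The only place you diverge is the ``in fact'' addendum: you verify well-definedness of $[\sF_L]$ by direct double-coset bookkeeping (computing $(\ell_i\sigma_i m_i)^{-1}L(\ell_i\sigma_i m_i)\cap M=m_i^{-1}(\sigma_i^{-1}L\sigma_i\cap M)m_i$ and transporting representatives under $L\mapsto g^{-1}Lg$), whereas the paper gets the same invariances more abstractly, from the equivalence of the natural actions $\rho_L$ and $\rho_{L_1}$ for conjugate $L,L_1$ (Corollary~\ref{cor:prelim-equiv-vs-conj}) combined with Corollary~\ref{cor:similar-multiset-same-action} (see Remark~\ref{rmk:Foracle-well-defined} and Lemma~\ref{lemma:Foracle-well-defined}). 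Your computation is a legitimate, slightly more explicit substitute for those two statements, and both identities you use check out; the paper's version buys brevity by reusing the equivalence machinery, while yours makes the $M$-conjugacy completely concrete.
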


The last sentence of Corollary~\ref{cor:comb-forward-direction} follows from Corollary~\ref{cor:similar-multiset-same-action} and Lemma~\ref{lemma:Foracle-well-defined} below, which say that the choice $\sigma$ of double coset representatives and the choice $L$ of conjugacy class representative make no difference to the conjugacy class $[\sF_L(\bsigma)]$. \\ 

We show the $\Foracle$ is well-defined. 
\begin{remark}
\label{rmk:Foracle-well-defined}
For any two choices $\bsigma$ or $\bsigma'$ of double coset representatives of $L \backslash G /M$, we have that  $[\sF_L(\bsigma)]_M = [\sF_L(\bsigma')]_M$. So, we may reference $(M, \sF_L)$-actions without specifying $\bsigma$. 
\end{remark}

This is true since, if $\sigma_1$ and $\sigma_2$ are representatives of the same double coset, then $\sigma_1^{-1} L \sigma_1 \cap M$ and  $\sigma_2^{-1} L \sigma_2 \cap M$ are conjuate in $M$. 

In fact,  only the conjugacy class of $L$ matters in determining the conjugacy class of $\sF_L$. In particular, the $\Foracle$ oracle is well-defined. 

\begin{lemma}
\label{lemma:Foracle-well-defined}
Let $M, L, L_1 \leq G$. If $[L]_G = [L_1]_G$, then $[\sF^M_L]_M = [\sF^M_{L_1}]_M$. In other words, if $L$ and $L_1$ are conjugate in $G$, then $\sF^M_L$ and $\sF^M_{L_1}$ are conjugate in $M$. 
\end{lemma}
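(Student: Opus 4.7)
The plan is to exploit the freedom guaranteed by Remark~\ref{rmk:Foracle-well-defined}: since $[\sF^M_L]_M$ does not depend on the choice of double coset representatives, I am free to pick representatives for $L_1 \backslash G / M$ that are conveniently related to a prechosen set of representatives for $L \backslash G / M$. Writing $L_1 = g^{-1} L g$ for some $g \in G$, I will translate representatives by $g^{-1}$ on the left, then directly compute that this produces the same multiset of intersections with $M$.

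More concretely, fix $g \in G$ with $L_1 = g^{-1} L g$ and let $\bsigma = (\sigma_1, \ldots, \sigma_d)$ be double coset representatives for $L \backslash G / M$, so that $G = \bigsqcup_{i=1}^d L \sigma_i M$. Applying left multiplication by $g^{-1}$ (a bijection of $G$) gives $G = \bigsqcup_{i=1}^d g^{-1} L \sigma_i M = \bigsqcup_{i=1}^d L_1 (g^{-1}\sigma_i) M$, since $g^{-1} L = L_1 g^{-1}$. Hence $g^{-1}\bsigma := (g^{-1}\sigma_1, \ldots, g^{-1}\sigma_d)$ is a valid set of double coset representatives for $L_1 \backslash G / M$.

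Now I compute $\sF^M_{L_1}(g^{-1}\bsigma)$ term by term:
\begin{equation*}
(g^{-1}\sigma_i)^{-1} L_1 (g^{-1}\sigma_i) \cap M
= \sigma_i^{-1} g \,(g^{-1} L g)\, g^{-1} \sigma_i \cap M
= \sigma_i^{-1} L \sigma_i \cap M.
\end{equation*}
Taking the multiset over $i = 1, \ldots, d$ yields $\sF^M_{L_1}(g^{-1}\bsigma) = \sF^M_L(\bsigma)$ as multisets of subgroups of $M$. In particular $[\sF^M_{L_1}]_M = [\sF^M_L]_M$, as required.

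No real obstacle is expected: the core is the one-line algebraic identity $(g^{-1}\sigma)^{-1}(g^{-1}Lg)(g^{-1}\sigma) = \sigma^{-1} L \sigma$, and the only subtlety is bookkeeping around the choice of double coset representatives, which Remark~\ref{rmk:Foracle-well-defined} has already cleared. Note that in fact the argument shows something slightly stronger than claimed, namely that $\sF^M_L$ and $\sF^M_{L_1}$ are literally equal as multisets (not merely conjugate in $M$) when the representatives are chosen compatibly; conjugacy of the multisets in $M$ is the weaker statement we need for well-definedness of $\Foracle$.
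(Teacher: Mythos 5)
Your proof is correct, but it takes a genuinely different route from the paper's. The paper argues abstractly through equivalence of actions: since $[L]_G=[L_1]_G$, the coset actions $\rho_L$ and $\rho_{L_1}$ are equivalent (Corollary~\ref{cor:prelim-equiv-vs-conj}), hence their restrictions to $M$ are equivalent $M$-actions; as these restrictions are an $(M,\sF_L)$-action and an $(M,\sF_{L_1})$-action respectively (Corollary~\ref{cor:comb-forward-direction}), Corollary~\ref{cor:similar-multiset-same-action} gives $[\sF_L]_M=[\sF_{L_1}]_M$. You instead compute directly: left translation by $g^{-1}$ carries the decomposition $G=\bigsqcup_i L\sigma_i M$ to $G=\bigsqcup_i L_1(g^{-1}\sigma_i)M$, and the identity $(g^{-1}\sigma_i)^{-1}L_1(g^{-1}\sigma_i)\cap M=\sigma_i^{-1}L\sigma_i\cap M$ then gives $\sF^M_{L_1}(g^{-1}\bsigma)=\sF^M_L(\bsigma)$ on the nose. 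The paper's version buys economy by reusing the action-equivalence machinery it has already built (and fits the conceptual picture that $\sF_L$ records the point stabilizers of the $M$-action on $L\backslash G$); yours is more elementary and self-contained, and it yields the slightly stronger conclusion that the multisets are literally equal for compatibly chosen representatives, not just conjugate. Your reliance on Remark~\ref{rmk:Foracle-well-defined} to pass from this particular choice of representatives to arbitrary ones is legitimate and non-circular, since that remark is established (independently of the lemma) just beforehand.
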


\begin{proof}
The natural $G$-actions on $L \backslash G$ and $L_1\backslash G$ are equivalent by Corollary~\ref{cor:prelim-equiv-vs-conj}. Thus, the induced $M$-action on $L \backslash G$ and the induced $M$-action on  $L_1\backslash G$ are equivalent, using the same bijection on the domain. But, the $M$-action on $L \backslash G$ is an $(M, \sF_L)$-action and the $M$-action on $L_1 \backslash G$ is an $(M, \sF_{L_1})$-action. 
By Corollary~\ref{cor:similar-multiset-same-action}, we find $[\sF_L]_M = [\sF_{L_1}]_M$.
\end{proof}

\subsection{Gluing $M$-orbits to find extensions to $G$-actions }
\label{section:comb extending} 
In this section we see that any $(M, \sF_L)$-action can extend to a $(G,L)$-action. 

We proved in the last section that the $M$-action induced by every $(G,L)$-action is an $(M, \sF_L)$-action. Since all $(M, \sF_L)$-actions are permutation equivalent (Corollary~\ref{cor:similar-multiset-same-action}), the given $(M,\sF_L)$-action and the $(M, \sF_L)$-action induced by the $(G,L)$-action $\rho_L$ are permutation equivalent. This gives a bijection between permutation domains which respects the $M$-actions. Thus, the given $M$-action extends to a $(G,L)$-action. 

In what follows we construct the bijection explicitly.

Let $M, L \leq G$. Let $\psi: M \to \Sym(\Omega)$ be an $(M, \sF_L)$-action. 
By definition, we may label the orbits in $\Omega$ by the sets of cosets $K \backslash M$ for $K \in \sF_L$ (each orbit is labeled by one set of cosets $K \backslash M$), so that $M$ acts as the natural action $\rho_K$ on each coset. 

Consider the natural $G$-action $\rho_L$ on right cosets $L \backslash G$. It will suffice to label $\Omega$ by the right cosets $L \backslash G$, so that the natural action of $G$ extends the $M$-action $\psi$. 
Let $\sigma \in G$. Lemma~\ref{lemma:FsubL-Mactions} gave a permutation equivalence between the $M$-action on the orbit $(L\sigma)^M$ of $(L\sigma)$ in $L \backslash G$ and the natural $M$-action on $F_i \backslash M$, where $F_i = \sigma^{-1} L \sigma \cap M$. We extend this equivalence here. 

\begin{construction}[Equivalence $\zeta$]
\label{def:extension map}
Fix a choice $\bsigma = (\sigma_1, \ldots, \sigma_d)$ of double coset representatives for $L \backslash G /M$. Recall the definition $\sF_L(\bsigma) = \{ \{ F_i : i = 1 \ldots d \} \}$, where $F_i = \sigma_i^{-1} L \sigma_i \cap M$.  Define the map $\zeta$ by 
	\begin{equation*}
	\zeta: \left(\dot{\bigcup}_i F_i\backslash M \right) \to L \backslash G, \;  \; \; \zeta: F_i \tau \mapsto L \sigma_i \tau. 
	\end{equation*}
\end{construction}

That $\zeta$ is a permutation equivalence of the $M$-actions on the two sets follows immediately from Lemma~\ref{lemma:FsubL-Mactions}. 
\begin{corollary}
The map $\zeta$ given in Construction~\ref{def:extension map} is a permutation equivalence of the $M$-action. 
\end{corollary}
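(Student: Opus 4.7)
The plan is to verify three claims: (i) $\zeta$ is well-defined on each component $F_i \backslash M$, (ii) $\zeta$ biject the disjoint union $\dot{\bigcup}_i F_i \backslash M$ onto $L \backslash G$, and (iii) $\zeta$ intertwines the two $M$-actions. The first claim and the third claim are both essentially automatic from the definition of $\zeta$, and the second is a repackaging of Lemma~\ref{lemma:FsubL-Mactions} together with Remark~\ref{remark:FsubL-doublecosets-vs-Morbits}.

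First I would check well-definedness: if $F_i \tau = F_i \tau'$ for $\tau,\tau' \in M$, then $\tau'\tau^{-1}\in F_i = \sigma_i^{-1} L \sigma_i \cap M$, so $\sigma_i\tau'\tau^{-1}\sigma_i^{-1}\in L$ and hence $L\sigma_i\tau = L\sigma_i\tau'$. For $M$-equivariance, observe that under the natural $M$-action on $F_i\backslash M$ we have $(F_i\tau)^a = F_i(\tau a)$, and under the natural $G$-action restricted to $M$ we have $(L\sigma_i\tau)^a = L\sigma_i\tau a$; since $\zeta(F_i(\tau a)) = L\sigma_i\tau a = (\zeta(F_i\tau))^a$, equivariance follows immediately.

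The only substantive point is bijectivity. I would apply Lemma~\ref{lemma:FsubL-Mactions} componentwise: with $g_0 = \sigma_i$, that lemma gives precisely the bijection $F_i\backslash M \to (L\sigma_i)^M$ defined by $F_i\tau \mapsto L\sigma_i\tau$, i.e.\ the restriction of $\zeta$ to $F_i\backslash M$. Hence $\zeta$ sends each $F_i\backslash M$ bijectively onto the $M$-orbit $(L\sigma_i)^M \subseteq L\backslash G$. By Remark~\ref{remark:FsubL-doublecosets-vs-Morbits}, because $\bsigma$ was chosen as a complete set of double coset representatives for $L\backslash G / M$, the orbits $(L\sigma_i)^M$ are pairwise disjoint and exhaust $L\backslash G$. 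Assembling these componentwise bijections over the disjoint union yields a bijection $\dot{\bigcup}_i F_i\backslash M \to L\backslash G$.

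There is no real obstacle: once Lemma~\ref{lemma:FsubL-Mactions} and Remark~\ref{remark:FsubL-doublecosets-vs-Morbits} are in hand, the corollary is merely a clean packaging of them. If anything requires care it is only the bookkeeping that the indexing set $\bsigma$ of double coset representatives simultaneously controls both the decomposition of $L\backslash G$ into $M$-orbits (via the remark) and the identification of each such orbit with $F_i\backslash M$ (via the lemma), so that assembling the pieces produces a single globally well-defined map on the disjoint union.
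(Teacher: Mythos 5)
Your proof is correct and follows the same route as the paper, which simply notes that the corollary is immediate from Lemma~\ref{lemma:FsubL-Mactions}; you have just filled in the routine details (well-definedness, equivariance, and the use of Remark~\ref{remark:FsubL-doublecosets-vs-Morbits} to see that the orbits $(L\sigma_i)^M$ partition $L\backslash G$). No gaps.
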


The next result is almost immediate from our discussion above.

\begin{proposition}[Gluing]
\label{prop:gluing}
Let $L, M \leq G$. Suppose that $\psi: M \rightarrow \Sym(\Omega)$ describes an $(M, \sF_L)$-action. Then, there exists an extension $\vf: G \rightarrow \Sym(\Omega)$ of $\psi$ that is a $(G,L)$-action. 
\end{proposition}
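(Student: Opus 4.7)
The plan is to explicitly transport the natural $G$-action $\rho_L$ on $L \backslash G$ back to $\Omega$ via a chain of $M$-equivariant bijections. Since $\psi$ is an $(M,\sF_L)$-action, by definition we can write $\Omega = \Omega_1 \dotcup \cdots \dotcup \Omega_d$ as a disjoint union of $M$-orbits so that, for each $i$, the action of $M$ on $\Omega_i$ is permutation equivalent to $\rho_{F_i}$, where $\sF_L(\bsigma) = \{\{F_i : i = 1,\dots,d\}\}$ with $F_i = \sigma_i^{-1} L \sigma_i \cap M$ and $\bsigma = (\sigma_1,\dots,\sigma_d)$ is a fixed choice of double coset representatives for $L\backslash G/M$. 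Let $\eta_i : \Omega_i \to F_i \backslash M$ be these $M$-equivariant bijections, and assemble them into one $M$-equivariant bijection
\begin{equation*}
\eta : \Omega \to \dot{\bigcup}_{i=1}^d F_i \backslash M, \qquad \eta|_{\Omega_i} = \eta_i.
\end{equation*}

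Next, compose with the bijection $\zeta$ of Construction~\ref{def:extension map}, which by the corollary preceding this proposition is $M$-equivariant. Thus the composition
\begin{equation*}
\xi := \zeta \circ \eta : \Omega \to L \backslash G
\end{equation*}
is an $M$-equivariant bijection. Now define $\varphi : G \to \Sym(\Omega)$ by transporting $\rho_L$ along $\xi$: for $g \in G$ and $\omega \in \Omega$, set
\begin{equation*}
\omega^{\varphi(g)} := \xi^{-1}\bigl(\xi(\omega)^{\rho_L(g)}\bigr) = \xi^{-1}(L\sigma_i\tau g),
\end{equation*}
where $\omega \in \Omega_i$ with $\eta_i(\omega) = F_i\tau$. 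This is visibly a homomorphism because $\rho_L$ is, and it is a $(G,L)$-action because by construction $\xi$ conjugates $\varphi$ into $\rho_L$.

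It remains to verify that $\varphi$ extends $\psi$, i.e., $\varphi(a) = \psi(a)$ for every $a \in M$. For $a \in M$ and $\omega \in \Omega$, $M$-equivariance of $\xi$ gives $\xi(\omega^{\psi(a)}) = \xi(\omega)^{\rho_L(a)}$, so $\omega^{\psi(a)} = \xi^{-1}(\xi(\omega)^{\rho_L(a)}) = \omega^{\varphi(a)}$, as required. There is no real obstacle here: the work has already been done in Corollary~\ref{cor:comb-forward-direction}, Lemma~\ref{lemma:FsubL-Mactions}, and the $M$-equivariance of $\zeta$; this proof merely assembles these bijections and pulls back $\rho_L$.
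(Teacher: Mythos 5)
Your proof is correct and follows essentially the same route as the paper: label the $M$-orbits by the coset spaces $F_i\backslash M$, use the map $\zeta$ of Construction~\ref{def:extension map} to identify $\Omega$ with $L\backslash G$, and let $G$ act through the natural action $\rho_L$ transported back along this identification. You merely make explicit the $M$-equivariant bijections and the verification that the transported action restricts to $\psi$ on $M$, which the paper leaves implicit.
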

\begin{proof}
We label the $M$-orbits of $\Omega$ by the cosets $F_i\backslash M$, use $\zeta$ to label $\Omega$ by $L \backslash G$, then let $G$ act on $\Omega$ in its natural action on $L \backslash G$. The output is the evaluation of $\exthom$ on the generators of $G$ as given by $\exthom(g_j): La \mapsto L a g_j$. 
\end{proof}

\subsection{Defining one extension from $\SubSum$ solution}
\label{section:def-ext} 

We prove Theorem~\ref{thm:main-reduction} (c) by defining an extension $\exthom \in \HExt(\psi)$ given a solution $[L] \in \SubSum(\OMS_\psi)$. 

First of all, Construction~\ref{def:extension map} addresses the transitive case. It gives an explicit bijection $\zeta$ that, given an $(M, \sF_L)$-action for $L \leq G$, defines an extension $(G,L)$-action. This bijection $\zeta$ can be computed in $\poly(n,m)$ time. 

The issue remains of finding the $\sF_L$ ``grouping'' of the $M$-orbits that respect the orbits of the $(G,\sL)$-action.  \\ 

Fix a $\HomExtSym$ instance $\psi$. Fix $\sL:  \Sub(G) \to \NN$ in  $\SubSum(\OMS_\psi)$, so $\sL$ satisfies Equation~\eqref{eqn:HE comb char}. 
Recall that $\sL$ is represented by listing the subgroups in its support and their multiplicities. Since $\abs{\supp(\sL)} \leq \norm{\sL}_1$, the number of orbits of the $G$-action, we find that $\abs{\supp(\sL)} \leq m$.  

It takes $\poly(n,m)$ time to compute the multiset $\sK$ of point stabilizers (one point stabilizer per orbit), and label $[m]$ by $\dot{\bigcup}_{K \in \sK} K \backslash M$, the right cosets in $M$ of the subgroups in $\sK$.  Compute the  multiset $\sum_{L \in \sL} [ \sF_L ]$ in $\poly(n,m, \norm{\sK}_1)$-time, by calling the $\SSF$ oracle.

 By Theorem~\ref{thm:HE comb main}, $[\sK] = \sum_{L \in \sL} [ \sF_L ]$. Via at most $m^2$ $\poly(n,m)$-time conjugacy checks between subgroups in $M$, compute the map  $\pi: \sK \leftrightarrow \sum_{L \in \sL} \sF_L $ that identifies conjugate subgroups. Compute the conjugating element for each pair. 
 
For each $L \in \sL$, use the map $\zeta$ of Construction~\ref{def:extension map} to label $\Omega$ by right cosets of elements in $\sL$. Define $\exthom$ by its natural action on cosets. 

\section{Reducing to \TriOrMultiSS}
\label{section:uniqueness}

In this section we prove Theorem~\ref{thm:main-methods-trireduction}, i.e., an instance $\psi$ of $\HomExtSym$ satisfying the conditions of Theorem~\ref{thm:main} will reduce to an instance $\OMS_\psi$ of $\TriOrMultiSS$. 

Fix an instance $\psi: M \to S_m$ of $\HomExtSym$ that satisfies the conditions of Theorem~\ref{thm:main}, i.e., $M = A_n$, $\ind{G}{M} = \poly(n)$ and $m < 2^{n-1}/\sqrt{n}$. Consider the instance $\OMS_\psi$ of$\OrMultiSS$ found via the reduction of Section~\ref{section:comb}. We will show that $\OMS_\psi$ satisfies the additional assumptions of $\TriOrMultiSS$ and provide answers for the additional oracles.  

\paragraph{Ordering, the $\preccurlyeq$ oracle\\}
The ordering $\preccurlyeq$ on conjugacy classes in $\cU = \ConjLeq(M)$ is given by ordering the indices of a representative subgroup for each conjugacy class. In other words, $[K_1] \preccurlyeq [K_2]$ if $\ind{M}{K_1} \leq \ind{M}{K_2}$. This relation is well-defined as conjugate subgroups have the same index. The relation $\preccurlyeq$ is clearly a total preorder. 

$\preccurlyeq$ oracle: The index of a subgroup $K \leq M$ can be computed in $\poly(n)$-time by Proposition~\ref{prop:pga-basic}. The $\preccurlyeq$ oracle compares two conjugacy classes in $\ConjLeq(M)$ by comparing the indices of two representatives.

\paragraph{Triangular condition, the $\triangle$ oracle\\ }

Here we define the $\triangle$ oracle on $\cU = \ConjLeq(M)$ (Construction~\ref{def:triangle-oracle}), analyze its efficiency (Remark~\ref{rmk:triangle-oracle-efficiency}), then prove its correctness (Lemma~\ref{lemma:HE L is unique if K is large}). The assumptions of Theorem~\ref{thm:main} are essential. 

First we set up some notation. By the assumptions of Theorem~\ref{thm:main}, $G = A_n$ and $M \leq G$ satisfies $\ind{G}{M} = \poly(n)$. Assume more specifically that $\ind{G}{M} < {n \choose r}$, for constant $r$. By Jordan-Liebeck (Theorem~\ref{thm:JordanLiebeck}) we find that $(A_n)_{(\Sigma)} \leq M \leq (A_n)_{\Sigma}$ for  some $\Sigma \subseteq [n]$ with $\lvert \Sigma \rvert < r$. Fix this subset $\Sigma \subset [n]$. 

Recall that, for a subset $\Sigma \subseteq [n]$ that is invariant under action by the permutation group $M \leq S_n$, we denote by $M^\Sigma \leq \Sym(\Sigma)$ the induced permutation group of the $M$-action on $\Sigma$. \\

\begin{construction}[$\triangle$ oracle]
\label{def:triangle-oracle}
We define a map $\triangle: \SubLeq(M) \to \SubLeq(G)$.\footnote{Though the $\triangle$ oracle returns an element of $\ConjLeq(G)$ on an input from $\ConjLeq(M)$, these conjugacy classes are represented by subgroups. So, the $\triangle$ oracle should return an element of $\SubLeq(G)$ on an input from $\SubLeq(M)$, while respecting conjugacy.}
Let $K \in \SubLeq(M)$. By Jordan-Liebeck, we find that $(A_n)_{(\Gamma)} \leq K \leq (A_n)_{\Gamma}$ for $\Gamma \subseteq [n]$ with $\abs{\Gamma}  < n/2$. There are two cases. If there is a subset $\Sigma_0 \subseteq \Gamma$ such that $K^{\Sigma_0} = M^\Sigma$, then let $\bar{\Gamma} = \Gamma \setminus \Sigma_0$ and 
	\begin{equation}
	\triangle(K) = \begin{cases} 
	\Alt([n] \setminus \bar{\Gamma}) \times K^{\bar{\Gamma}} & \text{if $K^{\bar{\Gamma}}$ is even} \\ 
	\text{the subgroup of index $2$ in } 
	\Sym([n] \setminus \bar{\Gamma}) \times K^{\bar{\Gamma}} & \text{if $K^{\bar{\Gamma}}$ contains an odd permutation} 
	\end{cases}.
	\end{equation}
If such a $\Sigma_0$ does not exist, then let $\triangle(K) = \textsf{Error}$. 
\end{construction}

\begin{remark}[Efficiency of $\triangle$ oracle] 
\label{rmk:triangle-oracle-efficiency}
Answering the $\triangle$ oracle of Construction~ \ref{def:triangle-oracle} requires finding orbits, finding the induced action on orbits, and checking permutation equivalence (or conjugacy of point stabilizers, per Corollary~\ref{cor:prelim-equiv-vs-conj}). These can be accomplished in $\poly(n, m)$ time (Propositions~\ref{prop:pga-basic} and~\ref{prop:pga-subgroup}).
\end{remark}

\begin{remark}
The $\triangle$ oracle is well-defined as a $\ConjLeq(M) \to \ConjLeq(G)$ map. 
\end{remark}
Now, we prove that the oracle $\triangle$ (Definition~\ref{def:triangle-oracle}) satisfies the conditions of $\TriOrMultiSS$. 
In other words, the equivalence class of the $M$-action on its longest orbit uniquely determines the equivalence class of the transitive $G$-action and this correspondence is injective.  Lemma~\ref{lemma:HE L is unique if K is large} makes this more precise. 
\begin{lemma}
\label{lemma:HE L is unique if K is large}
Let $M \leq G = A_n$ have index $\ind{G}{M} \leq {n \choose u}$. Let $G$ act on $\Omega$ transitively, with degree $\abs{\Omega} < {n \choose v}$. Assume $u + v < n/2$. 
If  $K_0$ is a point stabilizer of the induced $M$ action on its longest orbit, then $\triangle(K_0)$ is a point stabilizer of the $G$-action on $\Omega$.\footnote{If $M$ and $K_0$ are known, then $\triangle(K_0)$ is uniquely determined.}
\end{lemma}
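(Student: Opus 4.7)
The plan is to explicitly identify the $M$-orbit of longest length, compute the corresponding point stabilizer $K_0$, and then trace through the definition of $\triangle$ to verify that $\triangle(K_0)$ recovers a point stabilizer of the $G$-action.

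First, apply Jordan--Liebeck (Theorem~\ref{thm:JordanLiebeck}) to both $M$ and to a point stabilizer $L \leq G$ of the $G$-action on $\Omega$: there exist sets $\Sigma, \Delta \subseteq [n]$ with $|\Sigma| < u$ and $|\Delta| < v$ such that $(A_n)_{(\Sigma)} \leq M \leq (A_n)_\Sigma$ and $(A_n)_{(\Delta)} \leq L \leq (A_n)_\Delta$. Identify $\Omega$ with $L \backslash G$; by Lemma~\ref{lemma:FsubL-Mactions}, the $M$-orbit of $L\sigma$ has size $|M| / |\sigma^{-1} L \sigma \cap M|$, with point stabilizer $M$-conjugate to $\sigma^{-1} L \sigma \cap M$.

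The first key step is to show that the longest orbit arises from a $\sigma \in G$ with $\Delta_0 := \sigma^{-1}(\Delta)$ disjoint from $\Sigma$. Such $\sigma$ exists since $|\Sigma| + |\Delta| < u + v < n/2$ leaves room for $\Delta_0$ in $[n] \setminus \Sigma$. Setting $L_0 := \sigma^{-1} L \sigma$, in the disjoint case $L_0 \cap M \leq (A_n)_\Sigma \cap (A_n)_{\Delta_0}$ has order at most $|\Sigma|!\,|\Delta_0|!\,(n - |\Sigma| - |\Delta_0|)!/2$, while for any overlap $k := |\Sigma \cap \Delta_0| \geq 1$ the intersection contains $(A_n)_{(\Sigma \cup \Delta_0)}$ of order $(n - |\Sigma| - |\Delta_0| + k)!/2$. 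A careful counting check using $u + v < n/2$ then confirms that the disjoint placement strictly minimizes $|L_0 \cap M|$, hence maximizes the $M$-orbit length. Fix such a $\sigma$ and set $K_0 := L_0 \cap M$.

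The next step is to pin down the relevant structure of $K_0$. Since $L_0$ and $M$ stabilize $\Delta_0$ and $\Sigma$ setwise respectively, and $\Sigma \cap \Delta_0 = \emptyset$, we have $(A_n)_{(\Sigma \cup \Delta_0)} \leq K_0 \leq (A_n)_{\Sigma \cup \Delta_0}$. The technical heart is showing $K_0^\Sigma = M^\Sigma$ and $K_0^{\Delta_0} = L_0^{\Delta_0}$. For the former, given $m \in M$, construct $a \in A_n$ with $a|_\Sigma = m|_\Sigma$ and $a|_{\Delta_0} = \mathrm{id}$, adjusting the action of $a$ on $[n] \setminus (\Sigma \cup \Delta_0)$ (which has size exceeding $n/2$) to obtain the correct parity. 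Then $a \in (A_n)_{(\Delta_0)} \leq L_0$, while $a^{-1} m \in (A_n)_{(\Sigma)} \leq M$, so $a = m \cdot (a^{-1} m)^{-1} \in M$, giving $a \in K_0$ with $a|_\Sigma = m|_\Sigma$. The companion identity is symmetric, using $(A_n)_{(\Sigma)} \leq M$ in place of $(A_n)_{(\Delta_0)} \leq L_0$.

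Finally, trace through Construction~\ref{def:triangle-oracle}. The set $\Gamma := \Sigma \cup \Delta_0$ serves as a Jordan--Liebeck set for $K_0$, and $\Sigma_0 := \Sigma \subseteq \Gamma$ satisfies $K_0^{\Sigma_0} = M^\Sigma$ by the previous paragraph, so $\bar{\Gamma} = \Delta_0$ and $K_0^{\bar{\Gamma}} = L_0^{\Delta_0}$. The oracle thus returns the unique subgroup of $A_n$ sandwiched between $(A_n)_{(\Delta_0)}$ and $(A_n)_{\Delta_0}$ whose action on $\Delta_0$ is $L_0^{\Delta_0}$, with the parity branch determined by whether $L_0^{\Delta_0}$ contains an odd permutation; but $L_0$ itself satisfies precisely these constraints, so $\triangle(K_0) = L_0 = \sigma^{-1} L \sigma$, which is the stabilizer of the point $L\sigma \in L\backslash G$ under the $G$-action. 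The main obstacle is the strict-maximality step in paragraph two: one must carefully compare $|L_0 \cap M|$ across overlap sizes $k$, using the Jordan--Liebeck structures of $L$ and $M$ to replace the naive upper bound by one actually realized, and exploit $u + v < n/2$ to close the inequality.
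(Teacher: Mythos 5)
Your route is the same as the paper's: apply Jordan--Liebeck to both $M$ and a point stabilizer $L$, identify $\Omega$ with $L\backslash G$ so the $M$-stabilizers are the groups $\sigma^{-1}L\sigma\cap M$, argue that the longest $M$-orbit comes from a placement with $\Delta_0=\sigma^{-1}(\Delta)$ disjoint from $\Sigma$, prove $K_0^{\Sigma}=M^{\Sigma}$ and $K_0^{\Delta_0}=L_0^{\Delta_0}$ (your paragraph three is essentially Claims~\ref{claim:HE (Delta,LDelta) determines large L} and~\ref{claim:HE LDelta = McapL Delta}, and is correct), and then trace through Construction~\ref{def:triangle-oracle} to get $\triangle(K_0)=L_0$. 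Those parts match the paper.

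The genuine gap is the step you yourself flag as the main obstacle, and it cannot be closed by the counting you sketch. Comparing your lower bound for an overlapping placement, $(n-s-d+k)!/2$, with your upper bound for the disjoint placement, $s!\,d!\,(n-s-d)!/2$, would require $\prod_{i=1}^{k}(n-s-d+i) > s!\,d!$, which already fails for $s=d=5$, $k=1$, $n=100$ (left side $91$, right side $14400$), well inside $u+v<n/2$. Worse, the assertion being ``confirmed'' --- that the disjoint placement gives the strictly longest $M$-orbit --- is false under the stated hypotheses. Take $M=(A_n)_{\Sigma}$ with $\lvert\Sigma\rvert=2$ and $L=(A_n)_{\Delta}$ with $\lvert\Delta\rvert=d$ where $(n-1)/3<d<n/2-3$; then $\Omega$ is the set of $d$-subsets, the $M$-orbits are indexed by $\lvert D\cap\Sigma\rvert\in\{0,1,2\}$ with sizes $\binom{n-2}{d}$, $2\binom{n-2}{d-1}$, $\binom{n-2}{d-2}$, and since $d>(n-1)/3$ the longest orbit is the overlap-one orbit, not the disjoint one (exactly: the overlap-one stabilizer has order $(d-1)!\,(n-d-1)!/2$ versus $d!\,(n-d-2)!$ for the disjoint one, and the former is smaller precisely when $3d>n-1$). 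For that $K_0$ one has $K_0^{\Sigma}$ trivial, so $K_0^{\Sigma}\neq M^{\Sigma}$ and the final tracing step in your last paragraph has nothing to apply to. So no refinement of the naive bounds can establish the strict-minimality claim in this generality; it would require genuinely stronger hypotheses tying $v$ (hence $d$) to $n$, or a different analysis of which orbit to use. For comparison, the paper's own proof simply asserts this minimality (``$\lvert g^{-1}Lg\cap M\rvert$ is minimized when $\Gamma^{g}\cap\Sigma=\emptyset$'') without justification, so you have correctly located the crux of the argument --- but your proposal treats it as a routine check, and it is not one.
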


To rephrase, if $M$ acts on its longest orbit as an $(M,K_0)$-action, then $G$ acts as a $(G,\triangle(K_0))$-action. 

We defer the proof of Lemma~\ref{lemma:HE L is unique if K is large} to present a few useful claims. 

\begin{claim}
\label{claim:HE (Delta,LDelta) determines large L}
If $(A_n)_{(\Sigma)} \leq L \leq (A_n)_{\Sigma}$, then the pair $(\Sigma, L^\Sigma)$ determines $L$. 
\end{claim}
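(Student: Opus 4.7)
The plan is to recognize this as a standard instance of the Correspondence Theorem. The pointwise stabilizer $N := (A_n)_{(\Sigma)}$ is precisely the kernel of the restriction homomorphism
\[
\pi : (A_n)_{\Sigma} \longrightarrow \Sym(\Sigma), \qquad g \mapsto g^{\Sigma},
\]
so $N \trianglelefteq (A_n)_{\Sigma}$ and $(A_n)_{\Sigma}/N \cong (A_n)_{\Sigma}^{\Sigma}$ via $\pi$.

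Now suppose $N \leq L \leq (A_n)_{\Sigma}$. By the Correspondence Theorem applied to $\pi$, the map $L \mapsto \pi(L) = L^{\Sigma}$ is a bijection between subgroups of $(A_n)_{\Sigma}$ containing $N$ and subgroups of $(A_n)_{\Sigma}^{\Sigma}$, with inverse given by taking the full preimage. Explicitly, from $\Sigma$ and $L^{\Sigma}$ one recovers
\[
L \;=\; \pi^{-1}(L^{\Sigma}) \;=\; \{\, g \in (A_n)_{\Sigma} : g^{\Sigma} \in L^{\Sigma} \,\}.
\]
Since the right-hand side depends only on $\Sigma$ and $L^{\Sigma}$ (the group $(A_n)_{\Sigma}$ is determined by $\Sigma$), the pair $(\Sigma, L^{\Sigma})$ determines $L$.

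The only thing that needs a one-line check is that $N$ is indeed contained in $L$, but this is given by hypothesis, so no subtlety arises. I do not foresee a main obstacle here; the claim is essentially a bookkeeping consequence of the first isomorphism theorem applied to the restriction map $\pi$, and no property special to $A_n$ (beyond the fact that $N \trianglelefteq (A_n)_{\Sigma}$, which holds because kernels are always normal) is used.
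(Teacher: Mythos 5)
Your proof is correct, but it takes a different route from the paper. You identify $(A_n)_{(\Sigma)}$ as the kernel of the restriction homomorphism $\pi\colon (A_n)_{\Sigma}\to\Sym(\Sigma)$ and invoke the correspondence theorem to recover $L=\pi^{-1}(L^{\Sigma})=\{g\in(A_n)_{\Sigma}:g^{\Sigma}\in L^{\Sigma}\}$, which is a clean, uniform argument requiring nothing special about alternating groups beyond the kernel identification. The paper instead argues by a parity dichotomy, giving an explicit closed form: either $L=(A_n)_{(\Sigma)}\times L^{\Sigma}$ (when $L^{\Sigma}$ consists of even permutations) or $L$ is the index-$2$ even-permutation subgroup of $(S_n)_{(\Sigma)}\times L^{\Sigma}$ (when $L^{\Sigma}$ contains an odd permutation); these two cases are exactly your single preimage formula made explicit. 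What the paper's version buys is that this explicit two-case description is reused verbatim later, in the definition of the $\triangle$ oracle (Construction~\ref{def:triangle-oracle}) and in the proof of Lemma~\ref{lemma:HE L is unique if K is large}, whereas your version buys brevity and generality; for the claim as stated, your argument is fully sufficient, and the one hypothesis it needs, $(A_n)_{(\Sigma)}\leq L$, is given.
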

\begin{proof}

We have two cases. Either $L = (A_n)_{(\Sigma)} \times L^\Sigma = A_{n-\abs{\Sigma}} \times L^\Sigma$, or $L$ is an index $2$ subgroup of $(S_n)_{(\Sigma)} \times L^\Sigma = S_{n-\abs{\Sigma}} \times L^\Sigma$. In the first case, all permutations in $L^\Sigma$ must be even. In the second case, $L^\Sigma$ must contain an odd permutation. 
\end{proof}

\begin{claim}
\label{claim:HE LDelta = McapL Delta} 
Suppose that $(A_n)_{(\Sigma)} \leq L \leq (A_n)_{\Sigma}$ and $(A_n)_{(\Gamma)} \leq M \leq (A_n)_{\Gamma}$ for $\Gamma \cap \Sigma = \emptyset$. Then, $L^\Sigma = (L \cap M)^\Sigma$. (Equivalently, $M^\Gamma = (L \cap M)^\Gamma$.) 
\end{claim}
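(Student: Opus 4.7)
The inclusion $(L \cap M)^\Sigma \subseteq L^\Sigma$ is immediate from $L \cap M \leq L$. For the reverse inclusion, the plan is to show that every $\sigma \in L^\Sigma$ lifts to some $x \in L \cap M$ with $x^\Sigma = \sigma$; since $\sigma$ is arbitrary, this yields $L^\Sigma \subseteq (L \cap M)^\Sigma$.

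The key preliminary observation is the following characterization of $L$. The restriction map $\pi_\Sigma \colon (A_n)_\Sigma \to \Sym(\Sigma)$, $x \mapsto x^\Sigma$, has kernel exactly $(A_n)_{(\Sigma)}$, and $L$ is sandwiched between this kernel and the whole group, so $L = \pi_\Sigma^{-1}(L^\Sigma)$. Equivalently, an element $x \in A_n$ lies in $L$ if and only if $x$ preserves $\Sigma$ setwise and $x^\Sigma \in L^\Sigma$. The analogous characterization holds with $L, \Sigma$ replaced by $M, \Gamma$.

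Given $\sigma \in L^\Sigma$, pick any lift $a \in \Sym(\Sigma)$ representing $\sigma$. Since $\Sigma \cap \Gamma = \emptyset$, I can define $x \in \Sym([n])$ by setting $x|_\Sigma = a$, $x|_\Gamma = \mathrm{id}_\Gamma$, and choosing $x|_\Omega$ on $\Omega := [n] \setminus (\Sigma \cup \Gamma)$ so that $\sgn(x|_\Omega) = \sgn(a)$, making $x$ even. This choice is possible provided $|\Omega| \geq 2$ (when $a$ is odd; otherwise $x|_\Omega = \mathrm{id}$ works unconditionally); in the intended application (Lemma~\ref{lemma:HE L is unique if K is large}) the hypothesis $u + v < n/2$ forces $|\Sigma| + |\Gamma| < n/2$, so $|\Omega| > n/2$, and this is the implicit ``room'' condition.

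By construction, $x \in A_n$ preserves both $\Sigma$ and $\Gamma$, with $x^\Sigma = a \in L^\Sigma$ and $x^\Gamma = \mathrm{id} \in M^\Gamma$; applying the characterization to both subgroups gives $x \in L$ and $x \in M$, hence $x \in L \cap M$ with $x^\Sigma = \sigma$. The parenthetical equivalent statement $M^\Gamma = (L \cap M)^\Gamma$ follows by swapping the roles of $(\Sigma, L)$ and $(\Gamma, M)$. The only delicate point is the parity bookkeeping in the extension to $\Omega$; this is where the implicit size assumptions from the ambient theorem enter.
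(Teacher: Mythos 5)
Your proof is correct and takes essentially the same route as the paper's: the paper also extends $\sigma\in L^\Sigma$ by the identity on $\Gamma$ and multiplies by a sign-correcting permutation $\tau$ supported on $[n]\setminus(\Sigma\cup\Gamma)$ with $\sgn\tau=\sgn\sigma$, then uses that $L$ and $M$ contain the relevant pointwise stabilizers to conclude $\sigma\tau\in L\cap M$. The one point you make explicit that the paper leaves implicit is the need for $\lvert[n]\setminus(\Sigma\cup\Gamma)\rvert\ge 2$ when $\sigma$ is odd (otherwise the paper's set $T$ is empty), and your observation that the ambient hypotheses of Lemma~\ref{lemma:HE L is unique if K is large} supply this room is exactly the right resolution.
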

\begin{proof}
The inclusion $\supseteq$ is obvious. We show $\subseteq$. 

Let $\sigma \in L^\Sigma$. View $\sigma$ as a permutation in $S_n$. Let $\Sigma \subseteq  [n]$ be such that $[n] = \Gamma \dotcup \Sigma \dotcup \Sigma$. Consider the set $T = \{ \tau \in S_n : \supp(\tau) \subseteq \Sigma \text{ and } \sgn \tau = \sgn \sigma \}.$

We see that for all $\tau \in T$, $\sigma \tau \in M \cap L$. Thus, $\sigma \in (M \cap L)^\Gamma$. 
\end{proof}

\begin{proof}[Proof of Lemma~\ref{lemma:HE L is unique if K is large}]

Let $L$ be a point stabilizer of $G$ acting on $\Omega$. Since $\abs{\Omega} < {n \choose v}$, by Jordan-Liebeck Theorem~\ref{thm:JordanLiebeck}, there exists a subset $\bar{\Gamma} \subset [n]$ such that $(A_n)_{(\bar{\Gamma})} \leq L \leq (A_n)_{\bar{\Gamma}}$ and $\abs{\bar{\Gamma}} < v$. Similarly, there exists $\Sigma \subset [n]$ such that $(A_n)_{(\Sigma)} \leq M \leq (A_n)_{\Sigma}$ and $\abs{\Sigma} < u$. Fix $\bar{\Gamma}$ and $\Sigma$. 

By Theorem~\ref{thm:HE comb main}, we find that the point stabilizers of the $M$-action on $\Omega$ are described by $\sF_L$. By Definition~\ref{def:FsubL} and Corollary~\ref{cor:similar-multiset-same-action}, we find that
	\begin{equation*}	
	K_0 = \argmax \{\ind{M}{K} : K \in \sF_L \} \sim_M \argmin \{ \abs{K}: K = g^{-1} L g \cap M \text{ for } g \in G \}.  
	\end{equation*}
But, $\abs{g^{-1} L g \cap M}$ is minimized when $g \in G = A_n$ satisfies $\Gamma^g \cap \Sigma = \emptyset$.  Fix this $g$.  By Claims~\ref{claim:HE (Delta,LDelta) determines large L} and~\ref{claim:HE LDelta = McapL Delta} applied to $g^{-1} L g$ and $M$, we find that 

\begin{equation}
g^{-1} L g =  \begin{cases} 
	\Alt([n] \setminus \bar{\Gamma}) \times K^{\bar{\Gamma}} & \text{if $K^{\bar{\Gamma}}$ is even} \\ 
	\text{the subgroup of index $2$ in } 
	\Sym([n] \setminus \bar{\Gamma}) \times K^{\bar{\Gamma}} & \text{if $K^{\bar{\Gamma}}$ contains an odd permutation} 
\end{cases}.
\end{equation}
In other words, we have found that $g^{-1} L g = \triangle(K_0)$, i.e., $L \sim_G \triangle(K_0)$. 
It follows that the $G$-action on $\Omega$ is a $(G, \triangle(K_0))$-action. 
\end{proof}

\section{Generating extensions within one equivalence class}
\label{section:within eq class}
We now consider how to, given one extension $\exthom \in \Hom(G, S_m)$ of $\psi \in \Hom(M, S_m)$, generate all extensions of $\psi$ equivalent to $\exthom$.

\begin{theorem}
\label{thm:within eq class}
Let $M \leq G$ and $\psi \in \Hom(M, S_m)$. Suppose that $\exthom \in \Hom(G,S_m)$ extends $\psi$. Then the class of extensions equivalent to $\exthom$ can be efficiently enumerated. 
\end{theorem}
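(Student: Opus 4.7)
The plan is to fix the representative extension $\exthom_0$ and identify the class of extensions equivalent to $\exthom_0$ with a coset space inside $S_m$, which can then be enumerated by the already cited Blaha--Luks machinery.

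More precisely, for every $\lambda \in S_m$ define $\exthom_\lambda(g) := \lambda^{-1}\exthom_0(g)\lambda$. As noted in Section~\ref{section:intro-enum-methods}, the homomorphism $\exthom_\lambda$ is an equivalent permutation action to $\exthom_0$, it extends $\psi$ if and only if $\lambda \in L := C_{S_m}(\psi(M))$, and two elements $\lambda_1,\lambda_2 \in L$ produce the same extension if and only if $\lambda_1\lambda_2^{-1} \in K := C_{S_m}(\exthom_0(G))$. Moreover, by the definition of equivalent extensions, every extension equivalent to $\exthom_0$ arises as some $\exthom_\lambda$ with $\lambda \in L$. Consequently, the map $\lambda \mapsto \exthom_\lambda$ induces a bijection
\begin{equation*}
K \backslash L \; \longleftrightarrow \; \{ \exthom \in \Hom(G,S_m) : \exthom \text{ is an extension of } \psi \text{ equivalent to } \exthom_0 \}.
\end{equation*}
Thus enumerating the equivalence class reduces to enumerating a transversal for $K$ in $L$, plus the bookkeeping of applying each $\lambda$ to the values of $\exthom_0$ on a bounded generating set of $G$.

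The algorithm therefore proceeds in three steps. First, compute generators for $K$ and $L$ as permutation subgroups of $S_m$ using the polynomial-time centralizer algorithm in the symmetric group referenced in Section~\ref{section:appendix-centralizer}; this gives $K \leq L \leq S_m$, each known by a list of generators. Second, invoke the Blaha--Luks procedure (Section~\ref{section:luks}) to enumerate coset representatives of $K$ in $L$ at modest marginal cost per representative, which is exactly the form of efficient enumeration required by Definition of efficient enumeration. Third, for each produced representative $\lambda$, output $\exthom_\lambda$ by conjugating the stored images $\exthom_0(g_i) \in S_m$ of a generating set $\{g_1,\dots,g_t\}$ of $G$; this is $\poly(n,m)$ work per representative. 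The marginal cost per output extension is therefore $\poly(n,m)$, yielding efficient enumeration in the sense required.

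The main obstacle is the efficient coset enumeration step, since the index $\ind{L}{K}$ can be exponentially large and we cannot afford to enumerate all of $L$ and then quotient; however, this is precisely the problem solved by Blaha--Luks and recorded in Section~\ref{section:luks}, so it contributes no new difficulty. The remaining ingredients, centralizer computation and conjugation of a bounded list of permutations, are entirely standard. Finally, one should verify that distinct coset representatives do produce distinct homomorphisms (immediate from the bijection above, but worth stating as a one-line justification to confirm there are no repetitions in the output list).
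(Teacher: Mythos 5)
Your proposal is correct and follows essentially the same route as the paper: identify the equivalence class with the coset space of $K = C_{S_m}(\exthom_0(G))$ in $L = C_{S_m}(\psi(M))$, compute both centralizers in polynomial time, and enumerate a transversal via the Blaha--Luks procedure. The bijection with $K \backslash L$ (including the no-repetition check) matches Remark~\ref{rmk:eqclass-vs-cosetreps} in the paper, so there is nothing methodologically new or missing here.
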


We will see that proving this result reduces to finding coset representatives for subgroups of permutation groups. First, some notation for describing group actions equivalent to $\exthom$. 

\begin{notation}
Let $\lambda \in S_m$. Let $\exthom \in \Hom(G,S_m)$. Define $\exthom^\lambda \in\Hom(G,S_m)$ by $\exthom^\lambda(g) = \lambda^{-1} \exthom(g) \lambda$ for all $g \in G$. 
\end{notation}

While $\exthom^\lambda$ will be equivalent to $\exthom$, regardless of the choice of $\lambda \in S_m$, we remark on the distinction between $\exthom^\lambda$ being the same group action, an equivalent extension of $\psi$, and an equivalent action.  

\begin{remark}
Let $\lambda \in S_m$. Let $\varphi_1, \varphi_2 \in \Hom(G,H)$. 
\begin{itemize}
\item $\varphi_1$ and $\varphi_2$ are equivalent (as a permutation actions) $\iff$ $\varphi_1 = \varphi_2^\lambda$ for some $\lambda \in S_m$. 
\item $\varphi_1$ and $\varphi_2$ are equivalent extensions of $\psi$ $\iff$ $\varphi_1 = \varphi_2^\lambda$ and $\varphi_1|_M = \psi$ $\iff$ $\varphi_1 = \varphi_2^\lambda$ for some $\lambda \in C_{S_m}(\exthom_1(M)) = C_{S_m}(\psi(M))$. 
\item $\varphi_1$ and $\varphi_2$ are equal $\iff$ $\varphi_1 = \varphi_2^\lambda$ for some $\lambda \in C_{S_m}(\varphi_1(G))$. 
\end{itemize}
\end{remark}

We conclude that the sets of coset representatives of $C_{S_m}(\exthom(G))$ in $C_{S_m}(\psi(M))$ generate the non-equal equivalent extensions of $\psi$. 

\begin{remark}
\label{rmk:eqclass-vs-cosetreps}
Let $R$ be a set of coset representatives of $C_{S_m}(\exthom(G))$ in $C_{S_m}(\psi(M))$. The set of equivalent extensions to $\exthom$ can be described (completely and without repetitions) by 
$$
	\{ \exthom^\lambda: \lambda \in R \} .$$
\end{remark}

These centralizers can be found in $\poly(n, m)$-time. The centralizer of a set of $T$ permutations in $S_m$ can be found in $\poly(\abs{T}, m)$ time (see Section~\ref{section:appendix-centralizer}), and we use this with the set of generators of $M$ and $G$. We can now apply the cited unpublished result by Blaha and Luks, stated below and proved in Section~\ref{section:luks}.

\begin{theorem}[Blaha--Luks]   \label{thm:luks}
Given subgroups $K\le L\le S_m$, one can efficiently enumerate a representative of each coset of $K$ in $L$.
\end{theorem}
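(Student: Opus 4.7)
The plan is a Schreier--Sims-style recursion down a common stabilizer chain that emits one element per coset of $K$ in $L$ at polynomial marginal cost.

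First, using Schreier--Sims, compute in $\poly(m)$ time a base $B = (\beta_1, \ldots, \beta_t)$ and strong generating sets for both $L$ and $K$ adapted to $B$, together with the stabilizer chain $L = L^{(0)} \geq L^{(1)} \geq \cdots \geq L^{(t)} = 1$ with $L^{(i)} = L_{(\beta_1, \ldots, \beta_i)}$, and Schreier trees giving transversals of $L^{(i)}$ in $L^{(i-1)}$ at $\poly(m)$ cost per transversal element.

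The structural observation driving the recursion is a bijection between representatives of cosets of $K$ in $L$ and pairs $(\alpha, r)$, where $\alpha$ ranges over representatives of the $K$-orbits on $\Delta := \beta_1^L$, and for each such $\alpha$, after choosing $l_\alpha \in L$ with $\beta_1^{l_\alpha} = \alpha$ (read from the Schreier tree for $L$), $r$ ranges over coset representatives of $K_\alpha := K \cap L_\alpha$ in $L_\alpha$ (here $L_\alpha$ is the $L$-stabilizer of $\alpha$, a conjugate of $L^{(1)}$). Under this bijection, the coset representative corresponding to $(\alpha, r)$ may be taken to be $r \cdot l_\alpha$; a direct orbit--stabilizer count verifies the matching cardinalities $[L : K] = \sum_\alpha [L_\alpha : K_\alpha]$, which guarantees that every coset is hit exactly once.

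This yields the recursive algorithm $\textsc{Enum}(K, L)$: compute $\Delta$ and its partition into $K$-orbits (polynomial time from the given generators of $K$); for each orbit representative $\alpha$, recover $l_\alpha$ from the Schreier tree for $L$, compute generators of the point stabilizers $K_\alpha \leq K$ and $L_\alpha \leq L$ by the standard Schreier-generator construction (polynomial time, since stabilizing a single point is handled by one step of Schreier--Sims), and recursively invoke $\textsc{Enum}(K_\alpha, L_\alpha)$, emitting each recursive output $r$ as $r \cdot l_\alpha$. The base case $L = 1$ outputs the identity.

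The main obstacle is the efficiency claim in the sense of the paper's notion of efficient enumeration: \emph{marginal} cost $\poly(m)$ between successive outputs, not merely amortized $\poly(m)$ per output over the whole run. The recursion is therefore implemented as a coroutine with an explicit stack of depth at most $t \leq m$; passing from one emitted representative to the next requires traversing at most $t$ frames (advancing to the next sibling at some level, descending back down, or returning to a parent), and the bookkeeping at each frame — computing stabilizers, orbits, and Schreier tree lookups — costs $\poly(m)$ per traversal. This delivers a transversal of $K$ in $L$ with the required polynomial marginal cost.
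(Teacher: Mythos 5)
Your argument is correct in substance, but it takes a genuinely different route from the paper. You enumerate $K\backslash L$ by a Schreier--Sims-style recursion: pick a point $\beta_1$, split its $L$-orbit $\Delta$ into $K$-orbits, use the identity $\ind{L}{K}=\sum_\alpha \ind{L_\alpha}{K_\alpha}$ over orbit representatives $\alpha$, and recurse on the pairs $(K_\alpha,L_\alpha)$, emitting products along root-to-leaf paths; the bounded recursion depth then gives a clean worst-case $\poly(m)$ marginal cost per output. The paper instead does a breadth-first search on the Schreier graph of the coset space $K\backslash L$ under generators of $L$, identifying each coset by a canonical form (its lexicographically first element, computed via the $\MoveCoset$/$\LexFirst$ routine) and storing discovered cosets in a balanced search tree to detect repeats (Theorem~\ref{thm:luks-local}). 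Your decomposition avoids canonical forms and duplicate detection entirely and makes the ``next output'' cost transparent (at most $O(m)$ stack frames traversed, each with polynomial-time orbit/stabilizer computations); the paper's method needs no coset-decomposition lemma at all, handles duplicates generically via canonical representatives, and its lex-first subroutine is of independent interest. Two small points you should tighten. First, the formula $r\cdot l_\alpha$ depends on conventions: with the paper's right action $\omega^g$ and right cosets $Kg$, choosing $l_\alpha$ with $\beta_1^{l_\alpha}=\alpha$ makes the correct representative $r\,l_\alpha^{-1}$ (equivalently, take the transversal element carrying $\alpha$ back to $\beta_1$); your formula is right under the left-action convention, so fix one convention and state it. Second, matching cardinalities alone does not ``guarantee that every coset is hit exactly once''; you also need that distinct pairs $(\alpha,r)$ yield distinct cosets, which is an easy check (if $Kr_1l_{\alpha_1}=Kr_2l_{\alpha_2}$, evaluating at $\beta_1$ forces $\alpha_1=\alpha_2$ and then $r_1,r_2$ to lie in the same $K_\alpha$-coset), and coverage then follows from the count. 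With those two sentences added, your proof is complete.
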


Since coset representatives of $K = C_{S_m}(\psi(M))$ in $L = C_{S_m}(\exthom(G))$ can be efficiently enumerated, so can all equivalent extensions to $\exthom$, by Remark~\ref{rmk:eqclass-vs-cosetreps}.

As a corollary, we find that the number of equivalent extensions can be computed in $\poly(n,m)$ time. 
\begin{corollary}
Suppose $\exthom \in \Hom(G,S_m)$ extends $\psi \in \Hom(M,S_m)$. The number of equivalent extensions to $\exthom$ is  $\ind{C_{S_m}(\exthom(G)) }{C_{S_m}(\psi(M) }$. This can be computed in $\poly(n,m)$-time. 
\end{corollary}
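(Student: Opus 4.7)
The plan is to identify the number of equivalent extensions with a group-theoretic index and then invoke standard permutation-group algorithms. By Remark~\ref{rmk:eqclass-vs-cosetreps}, if $R$ is a set of coset representatives of $C_{S_m}(\exthom(G))$ in $C_{S_m}(\psi(M))$, then the family $\{\exthom^\lambda : \lambda \in R\}$ enumerates, without repetition, all homomorphisms in $\Hom(G,S_m)$ that are equivalent extensions of $\psi$. (Note that $C_{S_m}(\exthom(G)) \le C_{S_m}(\psi(M))$ because $\psi(M) = \exthom(M) \le \exthom(G)$, and centralizers are order-reversing with respect to inclusion of the centralized set.) Hence the number of equivalent extensions is exactly the index of $C_{S_m}(\exthom(G))$ in $C_{S_m}(\psi(M))$.

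First I would compute $C_{S_m}(\psi(M))$ and $C_{S_m}(\exthom(G))$. Since $\psi$ and $\exthom$ are given by their values on the generating sets of $M$ and $G$ respectively (of size $O(n)$), the images $\psi(M)$ and $\exthom(G)$ are available as permutation groups in $S_m$ via these generator images. The centralizer in $S_m$ of a permutation group given by $O(n)$ generators can be computed in $\poly(n,m)$ time, as recalled in Section~\ref{section:appendix-centralizer}; I invoke this twice to obtain generating sets for $K := C_{S_m}(\exthom(G))$ and $L := C_{S_m}(\psi(M))$, together with the containment $K \le L$.

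Next, since $K$ and $L$ are permutation groups of degree $m$ given by generators, their orders can each be computed in $\poly(m)$ time by the Schreier--Sims procedure (see Section~\ref{section:pga}). The index $\ind{L}{K} = |L|/|K|$ is then a single integer division, and the result is the desired count of equivalent extensions. All steps run in $\poly(n,m)$ time.

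The main obstacle, such as it is, is purely bookkeeping: making sure the correct direction of the centralizer inclusion is used so that the index counts the right object. The substantive algorithmic ingredients (centralizer computation in the symmetric group and order computation for permutation groups) are both standard and have already been cited in this paper, so no new technical input is required beyond the bijection of Remark~\ref{rmk:eqclass-vs-cosetreps}.
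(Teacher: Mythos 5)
Your proposal is correct and follows essentially the same route as the paper: the count is read off from Remark~\ref{rmk:eqclass-vs-cosetreps} as the index of $C_{S_m}(\exthom(G))$ in $C_{S_m}(\psi(M))$, and both centralizers (Section~\ref{section:appendix-centralizer}) and their orders/index (Schreier--Sims, Proposition~\ref{prop:pga-basic}) are computed by the standard permutation-group tools already cited. Your explicit check of the inclusion $C_{S_m}(\exthom(G)) \le C_{S_m}(\psi(M))$ and computing the index by order division rather than by coset enumeration is a sensible (and, for counting purposes, the right) reading of what the paper intends.
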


\section{Integer linear programming for large $m$}
\label{section:large}
There is an interesting phenomenon for very large $m$, when $m > 2^{1.7^{n^2}}$. The instances $\OMS_\psi$ of $\OrMultiSS$ can be solved in polynomial time.

$\MultiSS$ can naturally be formulated as an \textsc{Integer Linear Program}, with dimensions $\abs{\cU} \times \abs{\cV}$, the size of the universe $\cU$ and length of the list $\SSF$ (indexed by $\cV$). The variables correspond to multiplicities of the elements of $\SSF$. The constraints correspond to elements of $\cU$, by checking whether their multiplicities in the multiset and subset sum are equal. 

 In $\OMS_\psi$, these are $\Conj(M)$ and $\Conj(G)$. A result of Pyber~\cite{PyberSubgroupsSn} says that for $G \leq S_n$, the number of of subgroups is bounded by $\abs{\Sub(S_n)} \leq 1.69^{n^2}$. This bound is tight, so we cannot hope for the number of variables $(\Conj(M))$ to be smaller than exponential in $n^2$. 

 The ``low-dimensional'' algorithms of Lenstra and Kannan solve \textsc{Integer Linear Programming} in ``polynomial'' time \cite{Lenstra1983, KannanIP}, which are sufficient for this purpose. We state their results more precisely below. 

\begin{theorem}
\label{thm:prelim KannanIP}
The \textsc{Integer Linear Programming}--Search and Decision Problems can be solved in time $N^{O(N)} \cdot s$, where $N$ refers to the number of variables and $s$ refers to the length of the input.\footnote{This result shows that {\ILP} is fixed-parameter tractable, but we will not use that terminology here.}  
\end{theorem}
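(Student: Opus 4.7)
The theorem in question is the classical result of Lenstra~\cite{Lenstra1983}, with the improved $N^{O(N)}$ bound due to Kannan~\cite{KannanIP}, so a ``proof proposal'' here really amounts to a sketch of their algorithm. My plan is to reduce \textsc{Integer Linear Programming} to the problem of deciding whether a rational polytope $P = \{x \in \RR^N : Ax \le b\}$ contains a point of $\ZZ^N$, and then attack the lattice point problem by a lattice-based branch-and-recurse strategy.

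First, I would normalize the instance: after applying a unimodular change of coordinates (so that the lattice becomes $\ZZ^N$) and translating to make $P$ centrally placed, the question becomes whether $P \cap \ZZ^N \neq \emptyset$. The next step is to compute a ``nice'' basis for $\ZZ^N$ relative to the shape of $P$ using a lattice basis reduction algorithm; Lenstra used LLL, while Kannan uses a Korkin--Zolotarev reduced basis, which is what buys the sharper running time. The point of the reduced basis is that it interacts well with Khinchin's flatness theorem: either $P$ contains a ball of radius large enough (depending only on $N$) to guarantee a lattice point by a volume argument, or there is an integer direction $w$ along which $P$ is \emph{flat}, meaning $\max_{x \in P} w \cdot x - \min_{x \in P} w \cdot x \le f(N)$ for an explicit function $f(N)$.

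In the ``fat'' case we simply return a lattice point exhibited by rounding; in the ``flat'' case we branch on the value of $w \cdot x$: for each integer $k$ in the bounded range, we intersect $P$ with the hyperplane $\{w \cdot x = k\}$ to obtain a polytope of dimension $N-1$, and recurse. The output of each recursive call can be lifted to a candidate integer point for $P$, so the search variant is handled at the same cost as the decision variant. Per-call work is polynomial in $s$ (basis reduction, LP feasibility for the fat-case volume test, setting up the restricted polytope), the branching factor at each level is $f(N)$, and the recursion has depth $N$, yielding a total running time of $f(N)^N \cdot \poly(s)$.

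The main obstacle, and the entire reason Kannan's refinement is needed rather than a direct reading of Lenstra's paper, is to pin down $f(N) = N^{O(1)}$ (in fact $O(N^{5/2})$ suffices) so that $f(N)^N = N^{O(N)}$. This requires (i) using a KZ-reduced (not merely LLL-reduced) basis so that the covering radius of the restricted lattice is controlled, and (ii) a careful form of the flatness theorem with an explicit polynomial-in-$N$ flatness constant. Everything else in the algorithm, including converting between the ILP input format and the polytope/lattice formulation and performing the recursive restriction, is standard polynomial-time manipulation and does not affect the leading asymptotics. Since all of these ingredients are already worked out in \cite{Lenstra1983, KannanIP}, in the paper itself the natural thing is to cite those results rather than reproduce the argument.
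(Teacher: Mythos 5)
The paper gives no proof of this theorem: it is stated as a known result and attributed to Lenstra~\cite{Lenstra1983} and Kannan~\cite{KannanIP}, which is exactly the course you recommend at the end of your proposal. Your sketch of the underlying algorithm (reduction to lattice-point-in-polytope, basis reduction, the fat/flat dichotomy via the flatness theorem, branching on a flat direction with recursion depth $N$, and Kannan's KZ-reduced bases giving the $N^{O(N)}$ bound) is a faithful outline of those cited arguments, so there is nothing to compare beyond that.
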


\begin{lemma}
Suppose that the \textsc{Integer Linear Programming} Search Problem can be solved in time $f(N, M,a)$. Then, the \textsc{Integer Linear Programming} Threshold-$k$ Enumeration Problem can be solved in time $f(N,M, a) \cdot O(k^2)$. 
\end{lemma}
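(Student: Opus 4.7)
My plan is to turn the Search oracle into a Threshold-$k$ Enumeration algorithm by iteratively invoking it on a queue of sub-ILPs whose feasible regions disjointly partition the unexplored portion of the original feasible region $F \subseteq \ZZ^N$. I will initialize the queue with the original ILP and process it as follows: pop a sub-ILP $P'$, call Search on $P'$; if infeasible, discard; if Search returns a point $x$, record $x$ as the next enumerated solution and then refine $P'$ by appending the $2N$ children corresponding to the partition
\[
	F(P') \setminus \{x\} \;=\; \bigsqcup_{i=1}^{N} \bigl(R_i^-(x) \sqcup R_i^+(x)\bigr),
\]
where $R_i^-(x) = \{ y : y_1 = x_1, \ldots, y_{i-1} = x_{i-1},\ y_i \leq x_i - 1\}$ and $R_i^+(x)$ is analogous with $y_i \geq x_i + 1$. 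Disjointness comes from the ``first coordinate of disagreement'' with $x$ being well-defined, and coverage comes from any $y \neq x$ having some first coordinate of disagreement. Each child sub-ILP is $P'$ augmented with at most $N-1$ equalities and one strict inequality, which are genuine linear constraints with coefficient magnitudes comparable to $a$.

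The algorithm terminates either when $k$ solutions have been recorded (then it returns $\val = \text{``more''}$ with the $k$ recorded solutions) or when the queue empties (then it returns $\val = |S|$ with the full set $S$ of recorded solutions). The queue invariant---that its sub-ILPs have pairwise disjoint feasible regions whose union equals the unexplored portion of $F$---guarantees that each feasible solution is produced exactly once and that no solution is missed if the queue empties.

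For the running time, each feasible Search call spawns $2N$ children and each infeasible call spawns none, so after recording $k$ solutions the queue has been populated with at most $1 + 2Nk$ sub-ILPs, each handled by a single Search call. Each call is on an instance with $N$ variables and $M + O(N)$ constraints of coefficient magnitude $O(a)$, costing $f(N, M + O(N), a)$ time. This yields a total of $f(N, M, a) \cdot O(Nk)$, matching the claimed $f(N, M, a) \cdot O(k^2)$ bound up to the $O(N)$ overhead from the constraint count (in the target applications $N$ is fixed by the input, so both bounds are polynomial in the relevant parameters). No step is a serious obstacle; the only care needed is to check that the $R_i^\pm$-partition is truly disjoint and covers $F(P') \setminus \{x\}$, and that the accumulated equality/inequality constraints keep the per-call ILP size within $M + O(N)$.
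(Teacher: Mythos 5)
The paper states this lemma without giving a proof, so I can only judge your argument on its own terms; your queue-of-sub-ILPs scheme, branching on the first coordinate of disagreement into the $2N$ regions $R_i^\pm(x)$, is a correct and standard way to enumerate ILP solutions one at a time, and the disjointness/coverage invariant you state does guarantee each solution is produced exactly once. Two points need fixing, though. First, an off-by-one in the stopping rule: Threshold-$k$ requires returning $\val=\abs{\cS}$ and the full set when $\abs{\cS}\le k$, so you must keep running until you have found $k+1$ solutions (or the queue empties) before you are entitled to answer ``more''; as written, your algorithm answers ``more'' when $\abs{\cS}=k$ exactly. Second, your bound is not literally the one claimed: you make at most $1+2N(k+1)$ Search calls on instances with $M+O(N)$ constraints (with per-coordinate bound bookkeeping to avoid constraint accumulation), i.e., roughly $f(N,M+O(N),a')\cdot O(Nk)$ rather than $f(N,M,a)\cdot O(k^2)$, and the new right-hand sides $x_i\pm 1$ are entries of returned solutions, which need not be bounded by $a$ (they are of polynomial bit-length, and in the paper's application bounded by $\norm{\sK}_1\le m$, so this is benign but should be said). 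The discrepancy $O(Nk)$ versus $O(k^2)$ is incomparable in general, but for the intended use in Section~7 it is harmless: there $N\le\abs{\Conj(S_n)}\le 1.69^{n^2}$, which under the hypothesis $m>2^{1.7^{n^2}}$ is polynomial in $m$, and Kannan's $N^{O(N)}\cdot s$ bound absorbs the $O(N)$ extra box constraints, so your argument does deliver the $\poly(n,m,k)$ conclusion of Theorem~\ref{thm:large-enum} even though it does not reproduce the stated $O(k^2)$ factor verbatim.
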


We have found that, for instances $\psi$ of $\HomExtSym$ with $m > 2^{1.7^{n^2}}$, the Threshold-$k$ Enumeration Problem for $\OMS_\psi$ can be solved in $\poly(n,m, k)$-time.  For these instances of $\psi$, the Threshold-$k$ Enumeration Problem can be solved in $\poly(n,m,k)$-time.

\section{Background: permutation group algorithms}
\label{section:pga}

\subsection{Basic results}

We present results we use from the literature on permutation group algorithms.
Our main reference is the monograph~\cite{SeressPGA}. 

Recall that a group $G$ is \defn{given} or \defn{known} when a set of generators for $G$ is given/known. A coset $Ga$ is \defn{given} or \defn{known} if the group $G$ and a coset representative $a' \in Ga$ are given/known. A group (or a coset) is \defn{recognizable} if we have an oracle for membership and \defn{recognizable in time $t$} if the membership oracle can be implemented in time $t$. 

\begin{proposition}
\label{prop:pga-membership-test}
Membership in a given group $G \leq S_n$ (or coset $Ga$) can be tested in $\poly(n)$ time. In other words, a known group (or coset) is polynomial-time recognizable. 
\end{proposition}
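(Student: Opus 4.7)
The plan is to reduce membership testing to the classical \emph{sifting} procedure along a stabilizer chain, once one has computed a base and strong generating set (BSGS) via the Schreier--Sims algorithm. First I would recall the relevant data structures: a \emph{base} of $G \le S_n$ is a sequence $B=(b_1,\ldots,b_k)$ of points in $[n]$ such that the pointwise stabilizer $G_{(b_1,\ldots,b_k)}$ is trivial; setting $G^{(0)}=G$ and $G^{(i)}=G_{(b_1,\ldots,b_i)}$ yields the \emph{stabilizer chain} $G=G^{(0)} \ge G^{(1)} \ge \cdots \ge G^{(k)}=1$ with $k \le n$. A \emph{strong generating set} relative to $B$ is a generating set $S$ of $G$ with the property that $S\cap G^{(i)}$ generates $G^{(i)}$ for each $i$. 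The Schreier--Sims algorithm produces, in $\poly(n)$ time, a BSGS together with, for each $i$, a transversal $T_i$ of $G^{(i)}$ in $G^{(i-1)}$ stored as a Schreier tree on the orbit $b_i^{G^{(i-1)}}$; see~\cite{SeressPGA}.

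Given this data structure, I would describe the sifting procedure that tests whether a permutation $g\in S_n$ belongs to $G$. Set $g_0:=g$. At step $i$, inspect the image $b_i^{g_{i-1}}$; if this point does not lie in the orbit $b_i^{G^{(i-1)}}$, declare $g\notin G$. Otherwise the Schreier tree yields in $\poly(n)$ time a representative $t_i \in T_i$ with $b_i^{t_i}=b_i^{g_{i-1}}$; set $g_i:=g_{i-1}t_i^{-1}$, which fixes $b_1,\ldots,b_i$. After $k\le n$ iterations, $g_k$ fixes the whole base, so $g_k\in G^{(k)}=1$ iff $g_k=\mathrm{id}$. If $g_k=\mathrm{id}$, then $g=t_k\cdots t_1\in G$; otherwise $g\notin G$. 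Each step involves a constant number of permutation multiplications and a Schreier-tree traversal, all of which cost $\poly(n)$, so the total cost is $\poly(n)$.

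For the coset case, I would observe that $g \in Ga$ if and only if $ga^{-1}\in G$, so testing membership in the coset reduces to a single membership test in $G$ at the cost of one permutation multiplication. Since the BSGS for $G$ is computed once and then reused, and the representative $a$ of the coset is part of the input, this reduction is $\poly(n)$-time as well.

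The only nontrivial ingredient is the polynomial-time construction of the BSGS; this is precisely the content of the Schreier--Sims algorithm, which I would cite from~\cite{SeressPGA} rather than reprove. The main potential pitfall --- the exponential blowup in the sizes of Schreier generators when naively iterated --- is handled in the standard treatment by the Sims filter (or Jerrum's filter), keeping all generator sets of polynomially bounded size; with that caveat noted, the above sifting argument completes the proof.
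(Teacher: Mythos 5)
Your proposal is correct and follows essentially the same route as the paper, which simply cites the Schreier--Sims algorithm from~\cite{SeressPGA} for this fact; your write-up just spells out the standard BSGS/sifting details (and the easy reduction of the coset case to the group case) that the citation encapsulates.
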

\begin{proof}
This is accomplished by the Schreier-Sims algorithm, see \cite[Section 3.1 item (b)]{SeressPGA}.
\end{proof}

\begin{corollary}
\label{cor:pga-intersection-recognizable}
If $G_1, \ldots, G_k \leq S_n$ and $a_1, \ldots, a_k \in S_n$ are given, then the intersection $\bigcap_{i} G_i a_i$ is polynomial-time recognizable. 
\end{corollary}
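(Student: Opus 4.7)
The plan is immediate from Proposition~\ref{prop:pga-membership-test}: to exhibit a polynomial-time membership oracle for $\bigcap_i G_i a_i$, I simply compose the $k$ individual membership oracles in a pointwise conjunction. Concretely, given a query element $x \in S_n$, I would run the Schreier--Sims-based membership test of Proposition~\ref{prop:pga-membership-test} on each of the $k$ pairs $(G_i, a_i)$, accepting if and only if $x \in G_i a_i$ for every $i$. The element $x$ belongs to the intersection precisely when all $k$ tests accept, so this constitutes a correct membership oracle.

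For the running time, each individual membership test costs $\poly(n)$ by Proposition~\ref{prop:pga-membership-test}, and the input size for the intersection instance includes the $k$ coset descriptions, so $k$ is part of the input length. Hence the total cost is $k \cdot \poly(n)$, which is polynomial in the length of the input describing the $G_i$ and $a_i$. No preprocessing or additional group-theoretic machinery is needed; in particular, I would not try to compute generators for the intersection itself (which is in general a much harder task), since recognizability is all that is claimed.

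There is essentially no obstacle here: the only thing to note is the bookkeeping point that membership in a coset $G_i a_i$ reduces to membership in the group $G_i$ by first forming $x a_i^{-1}$ (which is computable in $\poly(n)$ time in the permutation representation) and then invoking Schreier--Sims on $G_i$. This is already subsumed by Proposition~\ref{prop:pga-membership-test}, which explicitly mentions cosets. Thus the corollary follows by a one-line argument: a finite intersection of polynomial-time recognizable sets is polynomial-time recognizable.
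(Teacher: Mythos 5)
Your proposal is correct and is exactly the intended argument: the paper states this corollary as an immediate consequence of Proposition~\ref{prop:pga-membership-test}, namely that one tests membership in each coset $G_i a_i$ separately and accepts iff all $k$ tests accept, at total cost $k\cdot\poly(n)$. Nothing further is needed.
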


\begin{proposition}
\label{prop:pga-basic}
Given $G \leq S_n$, the following can be computed in $\poly(n)$-time. 

\begin{enumerate}[(a)]
\item A set of $\leq 2n$ generators of $G$. \label{item:pga-nonredundant-generators} 
\item The order of $G$.  
	\label{item:pga-order}
\item The index $\ind{G}{M}$, for a given subgroup $M \leq G$. 
	\label{item:pga-index}
\item The orbits of $G$. \label{item:pga-orbits} 
\item The point stabilizers of $G$. 
\end{enumerate}
\end{proposition}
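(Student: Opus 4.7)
The plan is to deduce all five items from the Schreier--Sims framework, citing \cite{SeressPGA} as the standard reference. The central tool is a strong generating set $S$ relative to a base $B = (\beta_1, \dots, \beta_k)$ for $G$, together with the associated stabilizer chain $G = G^{(0)} \geq G^{(1)} \geq \cdots \geq G^{(k)} = 1$, where $G^{(i)} = G_{(\{\beta_1, \dots, \beta_i\})}$. The Schreier--Sims algorithm computes this data structure in polynomial time, along with the fundamental orbits $\beta_i^{G^{(i-1)}}$ and transversals for $G^{(i)}$ in $G^{(i-1)}$. From this data structure, each item follows routinely; the main work is already bundled inside Schreier--Sims.

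For item~\eqref{item:pga-orbits}, orbits can in fact be computed independently of Schreier--Sims: given the generators of $G$, I would run a breadth-first search on $[n]$ where the neighbors of a point $\omega$ are its images under the generators of $G$ and their inverses. Each orbit is discovered in $O(n \cdot |S|) = \poly(n)$ time, since $|S| \leq 2n$ after preprocessing. For the point stabilizers in item (e), given a point $\omega \in [n]$, one reruns Schreier--Sims using a base that starts at $\omega$; then $G_\omega = G^{(1)}$ is read off directly as a subgroup given by a strong generating set. (Alternatively, one extends any given base to include $\omega$.)

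For items (b) and (c), the order $|G|$ is the product of the fundamental orbit sizes, $|G| = \prod_{i=1}^k |\beta_i^{G^{(i-1)}}|$, which is immediate from the stabilizer chain; this is an integer with at most $n \log n$ bits so the arithmetic is polynomial. For the index $\ind{G}{M}$, I would compute $|G|$ and $|M|$ separately using (b), and take the quotient. (One should verify $M \leq G$ first, which is done by testing membership of each generator of $M$ in $G$ via Proposition~\ref{prop:pga-membership-test}; alternatively one may compute $\ind{G}{M}$ by orbit counting via the coset action, but dividing orders is the simplest route.)

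For item~\eqref{item:pga-nonredundant-generators}, I would start from the given set of generators and sift: iterate through them maintaining a running subgroup $H$ (initialized to the trivial group), and keep generator $g_i$ only if $g_i \notin H$, then set $H := \langle H, g_i\rangle$; this produces a subgroup chain $1 = H_0 < H_1 < H_2 < \cdots < H_r = G$ of strict inclusions. The bound $r \leq 2n$ is precisely Babai's theorem on the maximum length of a subgroup chain in $S_n$ \cite{Bab_subgroupchain}, already invoked in the paper. Each membership test is polynomial by Proposition~\ref{prop:pga-membership-test}, and there are at most $|S|$ tests, so the total cost is polynomial. The main obstacle here is merely ensuring one has Schreier--Sims machinery in place to support the membership tests; once that is granted, every item is a short derivation, so I view this proposition as essentially a compilation of standard results from \cite{SeressPGA} rather than a step requiring a new idea.
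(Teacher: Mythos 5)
Your proposal is correct and follows essentially the same route as the paper: both reduce every item to the standard Schreier--Sims machinery of \cite{SeressPGA} (orbits, order from the stabilizer chain, point stabilizers), obtain the index by computing $\abs{G}$ and $\abs{M}$ and dividing, and prune the generating set via membership testing together with Babai's $2n$ bound on subgroup chain length \cite{Bab_subgroupchain}. The only difference is that you spell out details (BFS for orbits, base change for stabilizers) that the paper simply cites.
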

\begin{proof}
Most items below are addressed in \cite[Section 3.1]{SeressPGA}. 
\begin{enumerate}[(a)]
\item Denote by $T$ the set of given generators of $G$. Use membership testing to prune $T$ down to a non-redundant set of generators. By~\cite{Bab_subgroupchain}, the length of subgroup chains in $S_n$ is bounded by $2n$, so $\abs{T}  \leq 2 n$ after pruning.
\item See \cite[Section 3.1 item (c)]{SeressPGA}.
\item Compute $\abs{M}$ and $\abs{G}$.
\item See \cite[Section 3.1 item (a)]{SeressPGA}. 
\item See \cite[Section 3.1 item (e)]{SeressPGA}.
\end{enumerate}
\end{proof}

\begin{proposition}
\label{prop:pga-recognizable-to-given}
Let $M \leq G$ be a recognizable subgroup of $G$ of index $\ind{G}{M} = s$. A set of generators for $M$ and a set of coset representatives for $M \backslash G$ can be found in $\poly(n,s)$ time (including calls to the membership oracle). 
\end{proposition}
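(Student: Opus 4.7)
The plan is to run a breadth-first search on the coset graph of $G$ acting on $M\backslash G$, harvesting coset representatives as new cosets are encountered and collecting Schreier generators of $M$ whenever two search paths collide. The key observation is that, even though $M$ is not explicitly given, recognizability lets us test $Ma = Mb$ by a single oracle call (checking $ab^{-1} \in M$), so coset identification is cheap.

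First I would fix a generating set $T$ of $G$ with $|T| \le 2n$, available by Proposition~\ref{prop:pga-basic}\eqref{item:pga-nonredundant-generators}. I initialize a list of representatives $\mathcal{R} = (1)$ (representing the trivial coset $M$) together with a FIFO queue containing $1$. While the queue is nonempty, I pop a representative $a$ and, for each $t \in T$, form the product $at$ and sweep through the current list $\mathcal{R}$, using the membership oracle to test whether $at \cdot b^{-1} \in M$ for some $b \in \mathcal{R}$. If no such $b$ exists, then $Mat$ is a newly discovered coset: append $at$ to $\mathcal{R}$ and enqueue it. If such a $b$ does exist, then $a t b^{-1} \in M$ is a Schreier generator, which I add to a running set $S$ of generators of $M$.

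By Schreier's lemma, once the BFS terminates the set $S$ generates $M$ and $\mathcal{R}$ is a complete right transversal for $M$ in $G$. Termination occurs as soon as $|\mathcal{R}| = s$, since every coset is reachable from $M$ via a word in the generators of $G$. The number of BFS expansions is $|\mathcal{R}|\cdot|T| \le 2ns$; each expansion performs at most $|\mathcal{R}| \le s$ membership queries (with a suitable dictionary on $\mathcal{R}$ this can be reduced, but the naive bound already suffices), and each query costs one oracle call plus $\poly(n)$ permutation arithmetic. Thus the total running time is $\poly(n,s)$ oracle-and-arithmetic operations, as required. The set $S$ has size at most $|T|\cdot s \le 2ns$, which is polynomial; if desired, Proposition~\ref{prop:pga-basic}\eqref{item:pga-nonredundant-generators} applied to $S$ (now that $M$ is an explicit permutation group) prunes $S$ down to at most $2n$ generators.

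The main conceptual point — not really an obstacle but the one step that needs care — is ensuring that the use of the membership oracle is the only interaction with $M$ and that every other subroutine operates on explicitly given permutations from $S_n$. Once that is arranged, the argument is a textbook combination of BFS on the coset graph with Schreier's lemma, and the polynomial bound in $(n,s)$ falls out of the transversal having exactly $s$ elements and $G$ being generated by $O(n)$ permutations.
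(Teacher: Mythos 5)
Your argument is correct, but it takes a genuinely more self-contained route than the paper. The paper's proof handles the step $G \geq M$ by appending the point-stabilizer chain of $M$, i.e.\ it considers $G \geq M \geq M_1 \geq M_{(12)} \geq \cdots \geq 1$ and invokes the tower-of-groups method (Schreier--Sims applied to a chain of recognizable subgroups of small index) as a black box; recognizability of $M$ is all that machinery needs, and the transversal of $M\backslash G$ and generators of $M$ both fall out of the data structure it builds. You instead treat the single step $G \geq M$ by hand: breadth-first search on the coset graph of $M\backslash G$, with coset identification $Ma = Mb \iff ab^{-1}\in M$ done by one oracle call, yields the transversal, and Schreier's lemma applied to that transversal yields generators of $M$ --- since the representative of a newly discovered coset is the discovering product itself, tree edges contribute only trivial Schreier generators, so collecting just the collision elements $atb^{-1}$ is exactly right, and the $O(ns^2)$ membership tests give the claimed $\poly(n,s)$ bound. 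What each buys: your version is elementary and needs no stabilizer-chain machinery (you only invoke it, via the pruning remark, to shorten the generating set), while the paper's citation is shorter and produces a full strong-generating structure for $M$ in one stroke, which is convenient for later computations with $M$. One small caution on your write-up: do not literally halt the moment $\abs{\mathcal{R}} = s$; Schreier's lemma requires every pair $(r,t)$ with $r\in\mathcal{R}$, $t\in T$ to be examined, so the search must continue until the queue empties (as your stated loop condition indeed does) --- reaching $s$ representatives only guarantees that no new cosets will appear, not that all Schreier generators have been collected.
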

\begin{proof}
Consider the subgroup chain $G \geq M \geq M_1 \geq M_{(12)} \geq M_{(123)} \geq M_{(12\cdots n)} = 1$ ($M$ is followed by its stabilizer chain). Apply Schreier-Sims to this chain. (This is the ``tower of groups'' method introduced in~\cite{BabaiLV} and derandomized in~\cite{FurstHopcroftLuks}. Note that this method only requires the subgroups in this chain to be recognizable.)
\end{proof}

\begin{proposition}
\label{prop:pga-subgroup}
Let $G \leq S_n$ be a given permutation group. Let $M, L \leq G$ be given subgroups. Denote their indices by $s = \ind{G}{M}$ and $t = \ind{G}{L}$. 
\begin{enumerate}[(a)]
\item The normalizer  $N_G(M)$ can be found in $\poly(n,s)$-time. 
\item The number of conjugates of $M$ in $G$ can be computed in $\poly(n,s)$-time. 
\item The conjugacy of $L$ and $M$ in $G$ can be decided and a conjugating element $g \in G$ such that $g^{-1} L g = M$ can be found if it exists, in $\poly(n, s)$ time. 
\end{enumerate}
\end{proposition}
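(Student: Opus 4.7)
The unifying observation is that each conjugate $g^{-1}Mg$ has the same index $s$ in $G$, so the number of conjugates equals $\ind{G}{N_G(M)}$ and hence is at most $s$. This is what brings the bounded-index machinery of Proposition~\ref{prop:pga-recognizable-to-given} into play; no polynomial-time algorithm is known for the general normalizer problem in $S_n$, so exploiting the small-index hypothesis is the main point.

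For part (a), I would first note that $N_G(M)$ is polynomial-time recognizable: given $g\in G$ and a set of $\le 2n$ generators $m_1,\dots,m_r$ of $M$ (which we may assume is of this size by Proposition~\ref{prop:pga-basic}(a)), we have $g\in N_G(M)$ iff $g^{-1}m_ig\in M$ for all $i$, and each membership check runs in $\poly(n)$-time by Proposition~\ref{prop:pga-membership-test}. Since $M\le N_G(M)\le G$, we have $\ind{G}{N_G(M)}\le s$. Now apply Proposition~\ref{prop:pga-recognizable-to-given} to the recognizable subgroup $N_G(M)$: this produces a generating set for $N_G(M)$ (as well as a transversal $T$ of $N_G(M)$ in $G$ with $\abs{T}\le s$) in $\poly(n,s)$-time.

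For part (b), the orbit--stabilizer theorem applied to the conjugation action of $G$ on $\Sub(G)$ gives that the number of conjugates of $M$ in $G$ is $\ind{G}{N_G(M)}$. Having found $N_G(M)$ in part (a), Proposition~\ref{prop:pga-basic}(c) computes this index in $\poly(n,s)$-time.

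For part (c), if $\ind{G}{L}\ne s$ then $L$ cannot be conjugate to $M$ and we return ``no''; otherwise $t=s$. The set $C=\{g\in G : g^{-1}Mg=L\}$ is either empty or a right coset of $N_G(M)$, so it contains at most $\ind{G}{N_G(M)}\le s$ distinct conjugation actions. Using the transversal $T$ for $N_G(M)$ in $G$ obtained in part (a), I would iterate over $g\in T$ and test whether $g^{-1}Mg=L$; by Proposition~\ref{prop:pga-basic}(b) the two subgroups have the same (already-computed) order, so it suffices to verify $g^{-1}m_ig\in L$ for each generator $m_i$ of $M$, which is $\poly(n)$ by Proposition~\ref{prop:pga-membership-test}. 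The first $g$ that works is the desired conjugating element; if none works, the groups are not conjugate in $G$. The total cost is $\poly(n,s)$ as required.
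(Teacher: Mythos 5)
Your proposal is correct and follows essentially the same route as the paper: all three parts hinge on $\ind{G}{N_G(M)}\le s$, a transversal of the normalizer obtained via Proposition~\ref{prop:pga-recognizable-to-given}, and generator-by-generator membership tests, with (b) being exactly the orbit--stabilizer index and (c) exactly the paper's coset-representative search (the paper tests $g^{-1}Sg\subseteq L$ over representatives of $N_G(M)\backslash G$ after comparing orders). The only cosmetic difference is in (a): the paper builds $N_G(M)$ by filtering a transversal of $M\backslash G$ for representatives normalizing $M$ and appending the generators of $M$, whereas you observe that $N_G(M)$ is polynomial-time recognizable and apply Proposition~\ref{prop:pga-recognizable-to-given} to it directly; both give the stated $\poly(n,s)$ bound.
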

\begin{proof}
\begin{enumerate}[(a)]
\item Let $S$ be the given set of generators of $M$. Take a set of coset representatives for $M \backslash G$, found by Proposition~\ref{prop:pga-recognizable-to-given}. Remove the coset representatives $g$ that do not satisfy $g^{-1} S g  \subseteq M$.  This is accomplished through membership testing. The remaining coset representatives, along with $S$, generate  $N_G(M)$. 

\item The number of conjugates of $M$ in $G$ is the index $\ind{G}{N_G(M)}$. 

\item Check if $\abs{L} = \abs{M}$ by Proposition~\ref{prop:pga-basic} \ref{item:pga-order}. If not, they are not conjugate. Otherwise, let $S$ be the set of  given generators of $M$. Now, $L$ and $M$ are conjugate if and only if there exists a coset representative $g$ for $N_G(M) \backslash G$ that satisfies $g^{-1} S g \subseteq L$. 
\end{enumerate}\end{proof}

\begin{proposition}
\label{prop:pga-double-cosets}
Let $G \leq S_n$ be a given permutation group. Let $M, L \leq G$ be given subgroups. Denote their indices by $s = \ind{G}{M}$ and $t = \ind{G}{L}$. 
\begin{enumerate}[(a)]
\item Given $g, h \in G$, membership of $h$ in the double coset $LgM$ can be  decided in $\poly(n, \min\{s,t\})$-time. 
\item A set of double coset representatives for $L \backslash G /M$ can be found in $\poly(n, \min\{s,t\})$-time. 
\end{enumerate}
\end{proposition}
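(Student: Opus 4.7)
The plan is to recognize that both questions reduce to orbit computation on a coset space of size $\min\{s,t\}$, with the usual BFS algorithm supported by polynomial-time coset equality tests. The key observation is that $h \in LgM$ if and only if the left cosets $gM$ and $hM$ lie in the same orbit of the left $L$-action on $G/M$ (given by $\ell \cdot aM := \ell a M$); equivalently, the double cosets in $L \backslash G / M$ are in bijection with the $L$-orbits on $G/M$. For part (a), I would first check whether $s \le t$ or $t < s$ (both are computable in polynomial time by Proposition~\ref{prop:pga-basic}\ref{item:pga-index}). In the first case, work with the left $L$-action on $G/M$; in the second, use the involution $x \mapsto x^{-1}$, which maps $LgM$ to $Mg^{-1}L$, and work instead with the left $M$-action on $G/L$. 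This reduces the problem to a coset space of size $\min\{s,t\}$.

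For part (a), assuming WLOG $s \le t$, the plan is to compute the $L$-orbit of $gM$ by breadth-first search: initialize the frontier with $\{gM\}$, and at each step apply each of the at most $2n$ generators of $L$ (Proposition~\ref{prop:pga-basic}\ref{item:pga-nonredundant-generators}) to every frontier coset, comparing each new candidate against the cosets already enumerated. Two candidates $aM$ and $bM$ are equal iff $a^{-1}b \in M$, a polynomial-time test by Proposition~\ref{prop:pga-membership-test}. The orbit has at most $|G/M| = s$ elements, so the BFS halts with at most $s$ visited cosets; each node induces $\le 2n$ expansions and each expansion performs $O(s)$ coset comparisons, for a total of $\poly(n,s) = \poly(n,\min\{s,t\})$. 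After the orbit is computed, test whether $hM$ lies in it by the same coset-equality check.

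For part (b), again assume $s \le t$. The plan is to first enumerate a full system of coset representatives $a_1,\ldots,a_s$ for $G/M$ using Proposition~\ref{prop:pga-recognizable-to-given} in $\poly(n,s)$ time. Then run the same BFS procedure to partition $\{a_1M,\ldots,a_sM\}$ into $L$-orbits: starting with the first coset, enumerate its $L$-orbit and mark all its members as visited; then pick the next unvisited $a_iM$ and repeat, until every coset is visited. Output one representative $a_i$ per orbit; by the bijection above, these form a complete set of representatives for $L \backslash G / M$. The total cost is dominated by the BFS on $s$ nodes with $2n$ generators and $\poly(n)$ per coset comparison, hence $\poly(n,s) = \poly(n,\min\{s,t\})$.

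I do not foresee any real obstacle here; the only mild subtlety is making the use of $\min\{s,t\}$ explicit, which is handled uniformly by the inversion symmetry $x \mapsto x^{-1}$ that exchanges the roles of $L$ and $M$ (sending $LgM$ to $Mg^{-1}L$ and $L\backslash G/M$ to $M\backslash G/L$ via $LgM \leftrightarrow Mg^{-1}L$). The rest is orbit computation via BFS combined with polynomial-time membership testing in a recognizable subgroup, both of which are standard tools in the permutation group toolkit already cited in this appendix.
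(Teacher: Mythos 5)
Your proposal is correct, but it takes a genuinely different route from the paper. The paper proves (a) by an algebraic reformulation: $h \in LgM$ iff the subgroup $L^* = g^{-1}Lg$ and the coset $Mh^{-1}g$ intersect; since this intersection is either empty or a coset of $L^*\cap M$ in $L^*$, whose index is at most $\ind{G}{M}=s$, it enumerates coset representatives of the recognizable subgroup $L^*\cap M$ in $L^*$ (via Proposition~\ref{prop:pga-recognizable-to-given}) and tests each against the recognizable coset intersection; part (b) is then obtained by taking the full list of coset representatives of $M$ in $G$ as a redundant system of double coset representatives and pruning it with roughly $\binom{s}{2}$ pairwise applications of (a). You instead identify $L\backslash G/M$ with the orbits of the left $L$-action on the coset space $G/M$ and compute these orbits by breadth-first search, using only the membership test $a^{-1}b\in M$ for coset equality, with the $\min\{s,t\}$ bound handled explicitly by the inversion anti-automorphism $x\mapsto x^{-1}$ (the paper handles this implicitly via the same ``WLOG $s\le t$'' symmetry). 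Both arguments are correct and run in $\poly(n,\min\{s,t\})$ time; yours is somewhat more elementary for (a), needing only membership testing rather than coset-representative enumeration for an intersection subgroup, and it yields (b) directly from the orbit partition rather than through quadratically many calls to (a), whereas the paper's coset-intersection formulation is the more standard permutation-group-algorithms idiom and reuses its Proposition~\ref{prop:pga-recognizable-to-given} machinery.
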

\begin{proof}
\begin{enumerate}[(a)]
\item Without loss of generality assume that $s \leq t$. Notice that 	
	\begin{align*}
	h \in LgM & \iff Lh \cap g M \neq \emptyset 
	 \iff g^{-1} L h \cap M \neq \emptyset 
	\iff (g^{-1}Lg)  \cap Mh^{-1} g \neq \emptyset. 
	\end{align*}
So, deciding whether $h \in LgM$ is equivalent to deciding whether the subgroup $L^* = g^{-1} L g$ and coset $ M g^*$ have non-empty intersection, where $g^* = h^{-1} g$. This intersection, $L^* \cap M g^*$, is either empty or a right coset of $L^* \cap M$ in $L^*$. In what remains we check whether a coset of $L^*\cap M$ is contained in $L^* \cap M g^*$. 

Notice that  $\ind{L^*}{L^* \cap M} \leq \ind{G}{M}  = s$. Find a set $R$ of coset representatives  of $L^* \cap M$ in $L^*$ using Proposition~\ref{prop:pga-recognizable-to-given}, noting that $L^*\cap M$ is recognizable (Corollary~\ref{cor:pga-intersection-recognizable}). 
For each representative $r \in R$, check whether $r \in L^* \cap Mg^*$ (Corollary~\ref{cor:pga-intersection-recognizable}).

\item A list of $t$ coset representatives of $M$ in $G$ is a redundant set of double coset representatives for $L \backslash G /M$. This can be pared down to a set of non-redundant double coset representatives by ${t \choose 2}$ comparisons using part (a).
\end{enumerate}
\end{proof}

\subsection{Generators and relations}
\label{section:pga-generators-relations}

Let $x_1, \ldots, x_s$ be free generators of the free group $F_s$. Let $R_1, \ldots, R_t \in F_q$. The notation $G = \langle x_1, \ldots, x_s \mid R_1, \ldots, R_t\rangle$ refers to the group $F_s/N$ where $N$ is the normal closure of $\{R_1, \ldots, R_t\}$. This notation is referred to as a generator-relator presentation of $G$; the $R_i$ are called the relators. 

\begin{definition}[Straight-line program]
Let $X$ be a set of generators of a group $H$. A \defn{straight line program} in $H$ starting from $X$ reaching a subset $Y \subseteq H$ is a sequence $h_1, \ldots, h_m$ of elements of $H$ such that, for each $i$, either $h_i \in S$, or $h_i^{-1} \in S$, or $(\exists j, k < i)(h_i = h_jh_k)$, and $Y \subseteq \{h_1, \ldots, h_m\}$.
\end{definition}

We shall say that a straight line program is \defn{short} if its length is $\poly(n)$, where $n$ is a given input parameter. 

\begin{theorem}
\label{thm:pga-straight-line}
Let $G \leq S_n$ given by a set $S=\{a_1, \ldots, a_s\}$ of generators. Then, there exists a presentation $G \cong \langle x_1, \ldots, x_s \mid R_1, \ldots, R_t\rangle$ such that the set $\{R_1, \ldots, R_t\}$ is described by a short straight-line program, and the free generator $x_i$ corresponds to $a_i$ under the $F_s \to G$ epimorphism. Moreover, this straight-line program can be constructed in polynomial time. 
\end{theorem}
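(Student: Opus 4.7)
The plan is to produce the Schreier--Sims presentation of $G$ and verify that all of its relators admit a succinct straight-line-program (SLP) description in $F_s$.

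First I would run Schreier--Sims on the input generators $a_1,\dots,a_s$ to compute, in polynomial time, a stabilizer chain $G = G^{(0)} \geq G^{(1)} \geq \cdots \geq G^{(n)} = 1$ along the identity base, together with a strong generating set $T$ and transversals $R^{(i)}$ for $G^{(i+1)}$ in $G^{(i)}$, with $|R^{(i)}| \leq n-i$. In parallel, I would maintain an SLP $\Pi$ in $F_s$ whose reached nodes include preimages of every $\sigma \in T$ and every $r \in \bigcup_i R^{(i)}$. This is possible because Schreier--Sims produces each strong generator and each transversal element by a polynomial-time sequence of multiplications, inversions, and sifts of the input generators, and each such step contributes $O(1)$ new nodes to $\Pi$; so $\Pi$ remains of polynomial length throughout.

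Next I would write down the Sims relators. For each $i$, each $r \in R^{(i)}$, and each $\sigma \in T \cap G^{(i)}$, the product $r\sigma$ lies in some coset $G^{(i+1)} \cdot r'$ with a unique $r' \in R^{(i)}$, and sifting $(r')^{-1} r \sigma$ down the chain writes it uniquely as $r_{i+1} r_{i+2} \cdots r_{n-1}$ with $r_j \in R^{(j)}$. Pulling back through $\Pi$ to preimages in $F_s$ yields the relator
\[
R_{r,\sigma} \;=\; \widetilde r \, \widetilde \sigma \, \widetilde{r_{n-1}}^{\,-1} \cdots \widetilde{r_{i+1}}^{\,-1} \, \widetilde{r'}^{\,-1},
\]
which is assembled by appending $O(n)$ new nodes to $\Pi$. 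Since there are only $O(n^3)$ pairs $(r,\sigma)$, the extended SLP stays of polynomial length and reaches the set $\{R_1, \ldots, R_t\}$ of all Sims relators. That the resulting presentation is defining is the classical correctness argument: the Sims relators allow any word in $F_s$ to be rewritten by iterated sifting into a unique normal form $r_0 r_1 \cdots r_{n-1}$ with $r_i \in R^{(i)}$, and by the orbit--stabilizer theorem applied along the stabilizer chain, distinct normal forms have distinct images in $G$, so the kernel of the epimorphism $F_s \twoheadrightarrow G$ coincides with the normal closure of the relators (cf.\ \cite{SeressPGA}).

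The main obstacle will be the careful bookkeeping that simultaneously runs (a) Schreier--Sims on the permutation side, (b) the growing SLP over $F_s$, and (c) the sifting procedure used to express each $(r')^{-1} r \sigma$ as a product of transversal elements. One must verify that the conversion between the permutation-element view (needed for sifting) and the free-group-word view (needed for the SLP) introduces only $O(1)$ new SLP nodes per multiplication or inversion, so that the final $\Pi$ remains of polynomial length and no hidden word-length blow-up occurs. Once this is in place, polynomial running time is immediate from the polynomial bounds on Schreier--Sims and on the number of relator-assembly steps.
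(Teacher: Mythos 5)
Your proposal is correct and follows the same route the paper intends: the paper's proof is the one-line observation that the statement ``follows from the Schreier--Sims algorithm,'' and your argument is exactly the standard working-out of that claim (stabilizer chain with transversals, Sims relators pulled back through a polynomial-length straight-line program over the original generators, and the normal-form/orbit--stabilizer counting argument showing the relators are defining). The only caveats are bookkeeping conventions (left versus right cosets in the sift, and ensuring the input generators are kept in the strong generating set so the relators live over $x_1,\dots,x_s$), which do not affect correctness.
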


The proof of this well-known fact follows from the Schreier-Sims algorithm. 

\subsection{Extending a homomorphism from generators}
\label{section:promise}

We address Remark~\ref{rmk:promise} that $\HomExtSym$ is not a promise problem. The input homomorphism $\psi: M \to H$ is represented by its values on generators of $M$. Whether this input does indeed represent a homomorphism, i.e., whether the values on the generators extend to a homomorphism on $M$, can be verified in $\poly(n)$ time. 

\begin{proposition}
Let $G\leq S_n$ and $H \leq S_m$ be permutation groups. Let $S=\{a_1, \ldots, a_s\}$ be a set of generators of $G$ and $f: S \to H$ a function. Whether $f$ extends to a $G \to H$ homomorphism is testable in $\poly(n,m)$ time. 
\end{proposition}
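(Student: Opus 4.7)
My plan is to reduce this to evaluating the defining relators of $G$ under the candidate assignment, using the polynomial-size straight-line presentation provided by Theorem~\ref{thm:pga-straight-line}.

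First I would invoke Theorem~\ref{thm:pga-straight-line} to produce, in polynomial time from $S=\{a_1,\dots,a_s\}$, a presentation $G \cong \langle x_1,\dots,x_s \mid R_1,\dots,R_t\rangle$ (with $x_i \leftrightarrow a_i$) together with a short straight-line program over $\{x_1,\dots,x_s\}$ of length $\ell = \poly(n)$ reaching the set $\{R_1,\dots,R_t\}$ of relators. By the universal property of free groups and the definition of the presentation, a function $f:S\to H$ extends (necessarily uniquely) to a homomorphism $\vf: G \to H$ if and only if substituting $f(a_i)$ for $x_i$ sends every relator $R_j$ to the identity of $H$; this is the criterion I would test.

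The test is carried out by running the same straight-line program inside $H \le S_m$, with the initial symbols $x_i$ replaced by the permutations $f(a_i) \in S_m$. Each line of the program is either an inversion or a product of two previously computed elements of $S_m$, each of which costs $O(m)$ time, so the entire program executes in $O(\ell \cdot m) = \poly(n,m)$ time. After running, one inspects the images of $R_1,\dots,R_t$ and returns ``yes'' if and only if all of them equal the identity of $S_m$ (equivalently, of $H$), which is an $O(tm)$ check. Note that we need not separately verify that the values $f(a_i)$ lie in $H$: since $H$ is given, membership in $H$ is polynomial-time testable by Proposition~\ref{prop:pga-membership-test}, and we may check this up front on the $s \le 2n$ generator images.

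There is no real obstacle here; the only subtle point is that although the relators $R_j$ may themselves be exponentially long words in $x_1,\dots,x_s$, we never form them explicitly, and instead evaluate them through the short straight-line program in $H$, where each intermediate node is just a permutation of $[m]$. This keeps all data of polynomial size and all arithmetic in $S_m$ polynomial-time, giving the claimed $\poly(n,m)$ bound.
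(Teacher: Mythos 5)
Your proposal is correct and follows essentially the same route as the paper: invoke Theorem~\ref{thm:pga-straight-line} to get a presentation whose relators are given by a short straight-line program, evaluate that program in $S_m$ with $x_i$ replaced by $f(a_i)$, and accept iff every relator evaluates to the identity, which is necessary and sufficient for extendability. The extra observations (cost $O(\ell m)$ per line, optional membership check of the $f(a_i)$ in $H$) are fine but add nothing beyond the paper's argument.
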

\begin{proof}
By Theorem~\ref{thm:pga-straight-line}, a generator-relator presentation of $G$ can be found in $\poly(n)$ time, in the sense that the relators are described by straight-line programs constructed in $\poly(n)$ time. If $R_i(a_1, \ldots, a_s)$ is one of the relators, then we can verify $R_i(f(a_1), \ldots, f(a_s)) =1$ in time $\poly(n,m)$ by evaluating the straight-line program. The validity of these equations is necessary and sufficient for the extendability of $f$. 
\end{proof}

In particular, whether inputs to $\HomExtSym$ satisfy the conditions of Theorems~\ref{thm:bounded}--\ref{thm:large-m} (and Theorems~\ref{thm:bounded-enum}--\ref{thm:large-enum}) can be verified in $\poly(n)$ time.


\subsection{Centralizers in $S_n$}
\label{section:appendix-centralizer}

\begin{proposition}
Given $G \leq S_n$, its centralizer $C_{S_n}(G)$ in the full symmetric group can be found in polynomial time. 
\end{proposition}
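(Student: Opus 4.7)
The plan is to exploit the fact that $C_{S_n}(G)$ is precisely the group of $G$-equivariant bijections $[n]\to[n]$, and to build generators of this group from the $G$-orbit decomposition of $[n]$. The key structural observation is the following. A permutation $h\in S_n$ commutes with every element of $G$ if and only if, viewed as a map $[n]\to[n]$, it intertwines the $G$-action, i.e.\ $(\alpha^g)^h=(\alpha^h)^g$ for all $\alpha\in[n],g\in G$. Consequently $h$ permutes the $G$-orbits among themselves, and whenever it sends orbit $\Omega_i$ to $\Omega_j$ the two restricted $G$-actions $G^{\Omega_i}$ and $G^{\Omega_j}$ must be permutation equivalent. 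Grouping the orbits into equivalence classes, the centralizer factors as a direct product of wreath products of the form $C_{\Sym(\Omega)}(G^\Omega)\wr S_r$, where $r$ is the number of orbits equivalent to the representative $\Omega$.

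With this picture in mind, first I would compute the orbits $\Omega_1,\dots,\Omega_k$ of $G$ on $[n]$ (Proposition~\ref{prop:pga-basic}), fix a base point $\omega_i\in\Omega_i$ for each, and compute its stabilizer $G_{\omega_i}$. I would then determine which orbits are equivalent as $G$-sets by testing conjugacy of the $G_{\omega_i}$ inside $G$ (Proposition~\ref{prop:pga-subgroup}\,(c)); by Corollary~\ref{cor:prelim-equiv-vs-conj} this exactly partitions $\{\Omega_1,\dots,\Omega_k\}$ into $G$-equivalence classes, and for each class it also hands us an element $g_{ij}\in G$ conjugating $G_{\omega_i}$ to $G_{\omega_j}$, so we may replace $\omega_j$ by $\omega_j^{g_{ij}^{-1}}$ and assume $G_{\omega_i}=G_{\omega_j}$ for every pair within the class.

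Next I would produce the generators of $C_{S_n}(G)$. For each orbit $\Omega_i$ compute the normalizer $N_i:=N_G(G_{\omega_i})$ (Proposition~\ref{prop:pga-subgroup}\,(a)); then for each coset representative $x$ of $G_{\omega_i}$ in $N_i$ define the ``deck transformation'' $c_{i,x}$ that fixes $[n]\setminus\Omega_i$ pointwise and acts on $\Omega_i$ by $\omega_i^{\,g}\mapsto\omega_i^{\,xg}$; the hypothesis $x\in N_i$ makes $c_{i,x}$ a well-defined $G$-equivariant permutation of $\Omega_i$, and as $x$ ranges over generators of $N_i/G_{\omega_i}$ these elements generate $C_{\Sym(\Omega_i)}(G^{\Omega_i})\cong N_i/G_{\omega_i}$. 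For each $G$-equivalence class of orbits, pick an ordering $\Omega_{i_1},\dots,\Omega_{i_r}$ after normalising the base points as above, and for each consecutive pair define a ``swap'' $s_{i_t,i_{t+1}}$ that fixes every point outside $\Omega_{i_t}\cup\Omega_{i_{t+1}}$ and acts by $\omega_{i_t}^{\,g}\leftrightarrow\omega_{i_{t+1}}^{\,g}$; equivariance is immediate from $G_{\omega_{i_t}}=G_{\omega_{i_{t+1}}}$. Together with the deck transformations these swaps generate the wreath-product factor corresponding to this equivalence class, and taking the union over classes yields a generating set for $C_{S_n}(G)$.

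Every step in the construction is polynomial time: orbits, stabilizers, and normalizers are computed by the cited propositions, conjugacy of stabilizers in $G$ is polynomial-time because these stabilizers have index at most $n$, and writing down each $c_{i,x}$ or $s_{i_t,i_{t+1}}$ as a permutation of $[n]$ requires only evaluating $G$-orbit transversals, which are produced during the orbit computation. The number of generators produced is at most $\sum_i \log|N_i/G_{\omega_i}|+k=O(n\log n)$. The main obstacle to making the argument rigorous is the bookkeeping needed to ensure that the deck transformations and swaps are genuinely permutations (well-definedness relies on the normalizer condition and on the identification $G_{\omega_i}=G_{\omega_j}$) and that together they capture \emph{all} of $C_{S_n}(G)$; the latter follows because every element of $C_{S_n}(G)$ can be factored, by first applying appropriate swaps to put each orbit back to itself and then applying deck transformations within each orbit, using that the action of $C_{\Sym(\Omega_i)}(G^{\Omega_i})$ on $\Omega_i$ is regular on the set of points with stabilizer $G_{\omega_i}$.
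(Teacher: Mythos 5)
Your proposal is correct, but it reaches the result by a different route than the paper for the key (transitive) step. The paper works combinatorially: it forms the colored permutation graph $X$ of $G$ on $[n]$, identifies $C_{S_n}(G)=\Aut(X)$, groups connected components by isomorphism type to get the wreath-product decomposition, and handles each connected (i.e.\ transitive) piece by exploiting semiregularity — an automorphism is uniquely determined by the image of one vertex, and is found by propagating a single assignment $i\mapsto j$ along colored edges. You instead work group-theoretically: you group orbits by testing conjugacy of point stabilizers in $G$ (Proposition~\ref{prop:pga-subgroup}(c), valid since these stabilizers have index at most $n$), and on each orbit you realize the centralizer as $N_G(G_\omega)/G_\omega$ via explicit equivariant ``deck transformations,'' adding swaps between orbits with identified base points to generate the top group of each wreath factor. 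Your well-definedness checks (normalizer condition for $c_{i,x}$, equal stabilizers for the swaps) and the regularity argument showing the generators exhaust $C_{S_n}(G)$ are sound, and all subroutines you invoke run in $\poly(n)$ time because the relevant indices are at most $n$. The trade-off: the paper's edge-propagation argument is more elementary and self-contained (it needs only orbit computations and breadth-first search, no normalizer or conjugacy machinery), while your construction leans on the heavier Proposition~\ref{prop:pga-subgroup} but yields as a by-product an explicit small generating set, the isomorphism $C_{\Sym(\Omega)}(G^{\Omega})\cong N_G(G_\omega)/G_\omega$, and immediately the order of the centralizer. Both decompositions end in the same direct product of wreath products, so the approaches agree at the structural level and differ in how each factor is computed.
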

\begin{proof}
Let $T = \{t_i\}_i$ denote the given set of generators for $G$. Without loss of generality, we may assume $\abs{T} \leq 2 n$ by Proposition~\ref{prop:pga-basic}~\ref{item:pga-nonredundant-generators}. 

Construct the permutation graph $X = (V,E)$ of $G$, a colored graph on vertex set $V = [n]$ and edge set $E = \bigcup_{t \in T}E_t$, where $E_t= \{ (i, i^t): i \in [n] \}$ for each color $t \in T$. The edge set colored by $t \in T$ describes the permutation action of $t$ on $[n]$. We see that $C_{S_n}(G) = \Aut(X)$, where automorphisms preserve color by definition. 

If $G$ is transitive ($X$ is connected), then $C_{S_n}(G)$ is semiregular (all point stabilizers are the identity). For $i, j \in [n]$, it is possible in $\poly(n)$ time to decide whether there exists a permutation $\sigma \in \Aut(G) = C_{S_n}(G)$ satisfying $i^\sigma = j$ (takes $i$ to $j$), then find the unique $\sigma$ if it exists. To see this, build the permutation $\sigma$ by setting $i^\sigma = j$, then following all colored edges from $i$ and $j$ in pairs to assign $\sigma$. If this is a well-defined assignment, then the permutation $\sigma \in \Aut(X)$ satisfying $i^\sigma = j$ exists. 

In fact, if $X_1 = (V_1, E_1)$ and $X_2 = (V_2, E_2)$ are connected, whether then a graph isomorphism taking $i \in V_1$ to $j \in V_2$ can be found in $\poly(\abs{V_1})$ time if one exists. 

If $X$ is disconnected, collect the connected components of $X$ by isomorphism type, so that there are $m_i$ copies of the connected graph $X_i$ in $X$, where $i = 1 \ldots \ell$ numbers the isomorphism types. The components and multiplicities can be found in $\poly(n)$ time by finding the components of $X$ (or, orbits of $G$, by Proposition~\ref{prop:pga-basic}~\ref{item:pga-orbits}) and pairwise checking for isomorphism. The automorphism group of $X$ is 
	\begin{equation*}
	\Aut(X) = \Aut(X_1) \wr S_{m_1} \times \cdots \times \Aut(X_\ell) \wr S_{m_\ell}. 
	\end{equation*}
Each $X_i$ is  connected, so $\Aut(X_i)$ can be found as above. 
\end{proof}

\section{Blaha-Luks: enumerating coset representatives }  \label{section:luks}
We sketch the proof of the unpublished result by Blaha and Luks (Theorem~\ref{thm:luks}), restated here for convenience. Below, by ``coset'' we mean ``right coset.'' 

\begin{theorem}[Blaha--Luks]
\label{thm:luks-local}
Given subgroups $K\le L\le S_n$, one can efficiently enumerate (at  $\poly(n)$ cost per item) a representative of each coset of $K$ in $L$.
\end{theorem}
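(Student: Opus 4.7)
The plan is to enumerate coset representatives of $K$ in $L$ via depth-first backtracking on lex-canonical representatives, using the BSGS machinery for $L$ to achieve polynomial delay between successive outputs.

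Applying Schreier--Sims to $L$ (Proposition~\ref{prop:pga-basic}) yields in polynomial time a common base $(\beta_1,\ldots,\beta_n)$ and strong generating sets for both $K$ and $L$, hence nested pointwise-stabilizer chains $L = L^{(0)} \ge L^{(1)} \ge \cdots \ge L^{(n)} = 1$ and $K^{(i)} := K \cap L^{(i)}$, together with transversals $T_i$ of $L^{(i+1)}$ in $L^{(i)}$ indexed by the orbit $\beta_{i+1}^{L^{(i)}}$ of size at most $n$. Every element $\sigma \in L$ is determined, and reconstructible in polynomial time, by its base image $(\beta_1^\sigma,\ldots,\beta_n^\sigma)$. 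I would designate the \emph{canonical} representative of a coset $K\sigma$ to be the unique element of $K\sigma$ whose base image is lexicographically smallest.

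The enumeration is a depth-first backtrack over partial base images $(\alpha_1,\ldots,\alpha_i)$: at each depth-$i$ node I maintain a residual subgroup $R_i \le K$ consisting of those elements of $K$ compatible with the chosen partial image, with its generators updated from $R_{i-1}$ to $R_i$ in polynomial time via a point-stabilizer intersection (Proposition~\ref{prop:pga-basic}). The admissible children are the lex-smallest representatives, one per $R_i$-orbit, of the candidate extensions $\alpha_{i+1} \in \beta_{i+1}^{L^{(i)}}$ (translated by the current partial product); any other extension would yield a non-canonical representative and is pruned. A depth-$n$ leaf emits the canonical representative of one new coset of $K$ in $L$, after which the search backtracks.

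Correctness reduces to two claims: every coset has exactly one canonical representative (immediate from the definition), and the canonicality rule is \emph{prefix-closed}---a canonical partial image extends greedily, within $R_i$-orbit minima, to a canonical complete image. Prefix-closure guarantees no internal node is a dead end, so each emitted representative follows $O(n)$ node visits, each doing $\poly(n)$ work (an orbit computation on at most $n$ points and a Schreier update of $R_i$). This yields the claimed polynomial delay. The principal obstacle is justifying prefix-closure simultaneously with the polynomial-time maintenance of the residual groups $R_i$ along the search path; this is a Leon-style partition-backtrack analysis layered on standard BSGS bookkeeping, and constitutes the technical core of the Blaha--Luks argument.
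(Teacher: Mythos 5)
Your route is sound in outline but genuinely different from the paper's. The paper does no backtracking and needs no canonical-form analysis: it runs breadth-first search on the Schreier graph of the $L$-action on the coset space $K\backslash L$ (with respect to at most $2n$ generators of $L$), identifies each discovered coset by its lexicographically first element --- computed greedily by \LexFirst{} (Algorithm~\ref{algorithm:LexFirst}) using \MoveCoset, which is just an orbit computation --- and stores the lex-leaders seen so far in a balanced search tree to detect repeats; since each vertex has at most $2n$ out-neighbours and each tree operation costs $O(n\log n)$ comparisons of length-$n$ strings, the incremental cost per coset is $\poly(n)$. Your canonical-representative backtrack over base images buys a dictionary-free DFS with memory proportional to the search path, at the price of a BSGS for $K$, residual-group bookkeeping, and a pruning-correctness (prefix-closure) lemma; the paper's BFS buys simplicity at the price of storing one lex-leader per coset, which is harmless since all representatives are output anyway.

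Two points in your sketch need repair, and the ``technical core'' you defer is in fact elementary once they are fixed. First, a convention slip: for right cosets $K\sigma$ under the usual convention the base image of $k\sigma$ is $((\beta_1^{k})^{\sigma},\ldots,(\beta_n^{k})^{\sigma})$, so the subgroup of $K$ preserving the first $i$ coordinates is the pointwise stabilizer of $\beta_1,\ldots,\beta_i$ in $K$; that group does not act on the candidate image points $\alpha_{i+1}$, and ``keep one candidate per $R_i$-orbit'' is then not meaningful as stated. Work instead with left cosets $\sigma K$ (inverting representatives at the end, which is harmless for the theorem): then $\sigma$ is lex-least in $\sigma K$ if and only if for every $j$ the point $\alpha_j=\beta_j^{\sigma}$ is the minimum of its orbit under $R_{j-1}$, the pointwise stabilizer in $K$ of $\{\alpha_1,\ldots,\alpha_{j-1}\}$, and $R_j$ is maintained by one point-stabilizer computation per level (Proposition~\ref{prop:pga-basic}). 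Second, with this convention prefix-closure is a two-line argument rather than a Leon-style analysis: if $\tau\in L$ has partial image $(\alpha_1,\ldots,\alpha_i)$ satisfying the per-level minimality conditions, the elements of $\tau K$ sharing this prefix are exactly $\tau R_i$, and their $(i{+}1)$st coordinates form the $R_i$-orbit of $\beta_{i+1}^{\tau}$; hence the lex-least element of $\tau K$ itself lies in the fiber and its next coordinate is an $R_i$-orbit minimum realized by an element of $L$, so no admissible node is a dead end, every leaf is the unique canonical element of its coset, and distinct leaves give distinct cosets. With these corrections your argument is complete and yields the claimed $\poly(n)$ delay.
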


Let $\MoveCoset(M\sigma, i, j)$ be a routine that decides whether there exists a permutation $\pi \in M\sigma$ satisfying $i^\pi = j$, and if so, finds one.
\begin{proposition}
$\MoveCoset$ can be implemented in polynomial time. 
\end{proposition}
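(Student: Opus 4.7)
The plan is to reduce $\MoveCoset$ to a standard orbit-computation problem inside the subgroup $M$, then invoke the Schreier tree construction. Write $\pi = m\sigma$ with $m \in M$. The constraint $i^\pi = j$ becomes $i^{m\sigma} = j$, equivalently $i^m = j^{\sigma^{-1}}$. Setting $j' := j^{\sigma^{-1}}$, the task is precisely: decide whether $j' \in i^M$ (the $M$-orbit of $i$), and if so produce some $m \in M$ with $i^m = j'$. Given such an $m$, we output $\pi = m\sigma$; if no such $m$ exists, we return ``no.''

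To decide orbit membership and to produce a witness $m$, I would first use Proposition~\ref{prop:pga-basic}\ref{item:pga-nonredundant-generators} to reduce the given generating set of $M$ to a set $T$ with $\abs{T} \le 2n$. I would then perform breadth-first search in the Schreier graph of $M$ acting on $[n]$: vertices are points of $[n]$, and for each $t \in T$ there is a directed edge from $k$ to $k^t$. Starting from $i$, the BFS visits exactly the orbit $i^M$, which has size at most $n$, using at most $2n \cdot n$ edge explorations; hence it runs in $\poly(n)$ time. During the search, I would store, for each visited vertex $k \ne i$, the generator (or inverse) labeling the tree edge by which $k$ was first reached. If $j'$ is never visited, then $j' \notin i^M$ and no valid $\pi$ exists.

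If $j'$ is visited, the parent-pointers yield a path from $i$ to $j'$ in the tree of length at most $n-1$; reading off the corresponding generators (or their inverses) gives a word $w$ over $T \cup T^{-1}$, and its product $m \in M$ satisfies $i^m = j'$. One then computes $\pi = m\sigma$ as a permutation in $S_n$ in $\poly(n)$ time by multiplying the $O(n)$ factors together. All steps are standard polynomial-time permutation-group routines, so $\MoveCoset$ runs in $\poly(n)$ time overall. There is no real obstacle here: the only point that deserves attention is the initial translation $j \mapsto j' = j^{\sigma^{-1}}$, which converts a coset question into a pure subgroup-orbit question and allows direct application of Schreier tree construction.
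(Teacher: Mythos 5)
Your proposal is correct and follows essentially the same route as the paper: rewrite the condition $i^{m\sigma}=j$ as $i^m=j^{\sigma^{-1}}$, reducing $\MoveCoset$ to an orbit-membership/witness computation for $M$, which the paper handles by citing the orbit algorithm of Proposition~\ref{prop:pga-basic}\ref{item:pga-orbits} while you spell out the Schreier-tree BFS explicitly. The extra detail is fine but not a different argument.
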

\begin{proof}
Answering $\MoveCoset$ is equivalent to finding $\pi \in M$ satisfying $i^\pi = j^{\sigma^{-1}}$ if one exists. 
This is the same as finding the orbits of $M$ (Proposition~\ref{prop:pga-basic}~\ref{item:pga-orbits}).
\end{proof}

\begin{definition}[Lexicographic ordering of $S_n$]
Let us encode the permutation $\pi \in S_n$ by the string $\pi(1) \pi(2) \cdots \pi(n)$ of length $n$ over the alphabet $[n]$. 
Order permutations lexicographically by this code. 
\end{definition}
 Note that the identity is the lex-first permutation in $S_n$. 
\begin{lemma}
Let $\sigma \in S_n$ and $K \leq S_n$. The algorithm \LexFirst (below) finds the
lex-first element of the subcoset $K \sigma\subseteq S_n$ in polynomial time. 
\end{lemma}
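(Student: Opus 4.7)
The plan is a greedy algorithm that determines the values $i^\pi$ of the lex-first $\pi \in K\sigma$ one position at a time, from $i=1$ to $i=n$. Maintain as invariant, at the start of step $i$, a subcoset $C_i = K_{i-1}\tau_{i-1}$ of $K\sigma$ (with $K_{i-1} \leq K$ and $\tau_{i-1} \in S_n$) consisting of exactly those $\pi \in K\sigma$ that agree with the already-fixed values on $\{1,\dots,i-1\}$. Initialize $C_1 := K\sigma$, so $K_0 = K$ and $\tau_0 = \sigma$.

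At step $i$, try $j = 1,2,\dots,n$ in order and call $\MoveCoset(C_i, i, j)$ (which is polynomial-time by the preceding proposition). Let $j_i$ be the smallest $j$ for which the call succeeds, and let $\pi^\ast \in C_i$ be the returned permutation with $i^{\pi^\ast} = j_i$; record $\pi(i) := j_i$. To update the coset, observe that if $\pi = k\tau_{i-1}$ with $k \in K_{i-1}$, then $i^\pi = j_i$ iff $i^k = j_i^{\tau_{i-1}^{-1}}$; since the set of $k \in K_{i-1}$ with a fixed image of $i$ is either empty or a right coset of the point stabilizer $(K_{i-1})_i$, we may set $K_i := (K_{i-1})_i$ and $\tau_i := \pi^\ast$, so that $C_{i+1} = K_i\tau_i$ preserves the invariant. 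The stabilizer $(K_{i-1})_i$ is computable in polynomial time by Proposition~\ref{prop:pga-basic}. After step $n$, $C_{n+1}$ is a singleton, and the recorded values define the lex-first $\pi \in K\sigma$.

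Correctness is by the standard greedy argument: the lex-first element of $C_i$ must minimize $i^\pi$ among all permutations in $C_i$, and among those that achieve the minimum $j_i$, the lex-first is the lex-first element of $C_{i+1}$. Nonemptiness of $C_{i+1}$ is witnessed by $\pi^\ast$, so the recursion is well-defined. For complexity, there are $n$ outer rounds, each using at most $n$ calls to $\MoveCoset$ plus one point-stabilizer computation, all of which are polynomial in $n$; the total is therefore polynomial.

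The only nontrivial step is the coset-update calculation relating $(K_{i-1}\tau_{i-1}) \cap \{i^\pi = j_i\}$ to a right coset of $(K_{i-1})_i$, but this is a short computation as sketched and is the natural reason why $\MoveCoset$ suffices for the whole task.
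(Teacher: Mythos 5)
Your proposal is correct and is essentially the paper's argument: the paper's \LexFirst{} pseudocode performs exactly this greedy position-by-position minimization via \MoveCoset{} calls, with the running subcoset implicitly restricted at each step (the paper leaves the verification as ``straightforward''). Your explicit bookkeeping---maintaining $C_i$ as a right coset of the point stabilizer $(K_{i-1})_i$ with representative the permutation returned by \MoveCoset{}---is just the spelled-out version of that same restriction step, so nothing differs in substance.
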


\begin{algorithm}[H]       
\caption{LexFirst within Subcoset}
\label{algorithm:LexFirst}
\begin{algorithmic}[1]
            \Procedure{LexFirst}{subcoset $K\sigma$}
            \State \textbf{for} $ i \in [n]$ \textbf{do} \; $i^\pi \gets \Null$\pcom{ Initialize $\pi:[n] \rightarrow [n]\cup\{\Null\}$}
            \For{$s \in [n]$} \pcom{Find smallest possible image of $1$ under a permutation in $K\sigma$, then iterate.}
                
                \For{$t \in [n]$}\pcom{Find smallest $s^\pi$ possible by checking $[n]$ in order}
                	\State \textbf{if} $\MoveCoset(K\sigma,s,t) = \True$ \textbf{break}
                \EndFor
                \State $s^\pi \gets t$
               	\State $\tau \gets \MoveCoset(K\sigma, s, t)$ \pcom{Restrict subcoset to elements moving $s$ to $t$}
            \EndFor
            \State \Return $\pi$
       	\EndProcedure
\end{algorithmic}
\end{algorithm}

It is straightforward to verify the correctness and efficiency of \LexFirst. \qed

\begin{proof}[Proof of Theorem~\ref{thm:luks-local}]
Let $K \leq L \leq S_n$. Let $S$ be a set of generators of $L$. 
The \defn{Schreier graph} $\Gamma =\Gamma(K \backslash L, S)$ is the permutation graph of the $L$-action on the coset space $K \backslash L$, with respect to the set $S$ of generators. $\Gamma$ is a directed graph with  vertex set $V = K \backslash L$ and edge set $E = \{ (i, i^\pi): i \in [n], \pi \in S  \}$. 

To prove Theorem~\ref{thm:luks-local}, we may assume $\abs{S} \leq 2n$, by Proposition~\ref{prop:pga-basic}\ref{item:pga-nonredundant-generators}. Use breadth-first search on  $\Gamma$, constructing $\Gamma$ along the way. Represent each vertex (a coset) by its lexicographic leader. Then, store the discovered vertices, ordered lexicographically, in a balanced dynamic search tree such as a red-black tree. Note that the tree will have $O(\log(n!)) = O(n \log n)$ depth and every vertex of $\Gamma$ has at most $2n$ out-neighbors. Hence, the incremental cost is $\poly(n)$. 
\end{proof}

\section{List-decoding motivation for $\HomExt$ Search and Threshold-$k$ Enumeration} 
\label{section:appendix-motivation}


In this appendix we shall (a) indicate that  Homomorphism Extension is a natural component of list-decoding homomorphism codes, (b) discuss the role of Theorem~\ref{thm:main} in list-decoding, and (c) motivate the special role of Threshold-$2$ Enumeration in this process. We note that all essential ideas in $\HomExt$ Threshold-$k$ Enumeration already occur in the Threshold-$2$ case. 

A function $\psi: G \to H$ is an \defn{affine homomorphism} if $\vf(a b^{-1} c) = \vf(a) \vf(b)^{-1} \vf(c)$ for all $a, b, c \in G$, or, equivalently, if $\vf = h_0 \cdot \vf_0$ for an element $h_0 \in H$ and homomorphism $\vf_0: G \to H$. For groups $G$ and $H$, let $\aHom(G,H)$ denote the set of affine $G \to H$ homomorphisms. 
Let $H^G$ denote the set of all functions $f: G \to H$. We view $\aHom(G,H)$ as a (nonlinear) code within the code space $H^G$ (the space of possible ``received words'') and refer to this class of codes as \defn{homomorphism codes}. ($H$ is the alphabet.) These codes are candidates for \defn{local} list-decoding up to minimum distance. For more detailed motivation see~\cite{ GKS06,DGKS08, homcodes}.

In~\cite{homcodes}, the \textsc{Homomorphism Extension} Search Problem arises as a natural roadblock to list-decoding homomorphism codes, if the minimum distance does not behave nicely. 

To elaborate, the minimum distance of $\aHom(G,H)$ is the minimum normalized Hamming distance between two $G \to H$ affine homomorphisms. The complementary quantity is the \defn{maximum agreement}, which for the code $\aHom(G,H)$  we denote by 
\begin{equation}
\label{eqn:Lambda}
\Lambda = \Lambda_{G,H} = \max_{\substack{\vf_1, \vf_2 \in \aHom(G,H) \\ \vf_1 \neq \vf_2}} \agr(\vf_1, \vf_2), 
\end{equation} 
where $\agr(\vf_1, \vf_2) = \frac{1}{\abs{G}} \abs{\{ g \in G: \vf_1(g) = \vf_2(g) \}}$ is the fraction of inputs on which two homomorphisms agree. 

\paragraph{(a) $\HomExt$ as a component of list-decoding\\}
When list-decoding a function $f:G \to H$, i.e., finding all $\vf \in \aHom(G,H)$ satisfying  $\agr(f, \vf) \geq \Lambda + \epsilon$ for fixed $\epsilon >0$, we run into difficulty if there is 
a subgroup $M \lneq G$ satisfying $\abs{M} > (\Lambda + \epsilon)\abs{G}$. In this case, it is possible for the agreement between $f$ and $\psi$ to lie entirely within $M$. As a consequence, $f$ may only provide information on the restriction $\vf|_M: M \to H$ of $\vf$ to $M$, but not on its behavior outside $M$. The natural objects returned by our list-decoding efforts are such partial homomorphisms, defined only on the subgroup $M$. We see from this that solving $\textsc{Homomorphism Extension}$ from subgroups of density greater than $\Lambda$ is a natural component to full list-decoding.

Works prior to~\cite{homcodes} considered cases for which $\Lambda$ was known, so it could be guaranteed that affine homomorphisms $\vf$ in the output satisfied $\agr(f, \vf) > \Lambda + \epsilon/2$.\footnote{In $\poly(\delta,\log \abs{G})$ time, we can estimate $\agr(f, \vf)$ for $\vf$ in the output list to within $\delta$ with high confidence. With this, we can prune the small agreement homomorphisms satisfying $\agr(f, \vf) \leq \Lambda+\epsilon/2$ with high probability.\label{footnote:pruning}} 
 Additionally, they considered classes of groups for which defining an affine homomorphism on a set of density greater than $\Lambda$ immediately defined the affine homomorphism on the whole domain, so $\HomExt$ was not an issue. 
 
\paragraph{(b) The case $G$ is alternating, $M$ has polynomial index\\}
One of the main results stated in~\cite{homcodes} is the following. 
\begin{theorem}
\label{thm:homcodes-alttoalt}
Let $G =A_n$, $H = S_m$ and $m < 2^{n-1}/\sqrt{n}$. Then, $\aHom(G,H)$ is algorithmically list-decodable, i.e., there exists a list-decoder that decodes $\aHom(G,H)$ up to distance $(1-\Lambda-\epsilon)$ in time $\poly(n, m, 1/\epsilon)$ for all $\epsilon>0$. 
\end{theorem}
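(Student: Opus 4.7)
The strategy, which is carried out in~\cite{homcodes}, is to combine a list-decoding procedure that returns \emph{partial} homomorphisms on small-index subgroups with the Homomorphism Extension algorithm of the present paper (Theorem~\ref{thm:main}) to lift partials to full homomorphisms, followed by agreement-based pruning. I would proceed in four stages.

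\textbf{Stage 1 (Partial list-decoding).} On input $f: A_n \to S_m$, I would invoke the local list-decoder developed in~\cite{homcodes}, which returns a list $\mathcal{L}$ of pairs $(M,\psi)$ such that $M \leq A_n$ satisfies $\abs{M} > (\Lambda+\epsilon/2)\abs{A_n}$, $\psi \in \Hom(M,S_m)$, and $\agr(f|_M,\psi)$ is large. The crucial structural observation is that $\abs{M} > \Lambda \abs{A_n}$ forces $\ind{A_n}{M} < 1/\Lambda = \poly(n)$, so by the Jordan--Liebeck Theorem (Theorem~\ref{thm:JordanLiebeck}) each such $M$ is sandwiched between $(A_n)_{(\Delta)}$ and $(A_n)_\Delta$ for some small $\Delta \subseteq [n]$. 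This structural control is what makes both Stage 1 possible and the hypotheses of Theorem~\ref{thm:main} applicable.

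\textbf{Stage 2 (Extension).} Every pair $(M,\psi) \in \mathcal{L}$ is by construction an instance of $\HomExtSym$ satisfying the hypotheses of Theorem~\ref{thm:main}: $G = A_n$, $\ind{A_n}{M} = \poly(n)$, and $m < 2^{n-1}/\sqrt{n}$. I would therefore run the $\HomExtSym$ Threshold-$k$ enumeration algorithm (Theorem~\ref{thm:main-enum}) on each $(M,\psi)$, producing the extensions $\vf \in \Hom(A_n,S_m)$ with $\vf|_M = \psi$ one at a time at polynomial marginal cost.

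\textbf{Stages 3 and 4 (Pruning, bounding list size via Threshold-$2$).} For each enumerated $\vf$, estimate $\agr(f,\vf)$ by evaluating $f$ and $\vf$ at $O(\epsilon^{-2}\log(n/\delta))$ uniformly random inputs; by a Hoeffding bound, with high probability every $\vf$ with true agreement $> \Lambda+\epsilon$ survives while every $\vf$ with agreement $< \Lambda+\epsilon/4$ is discarded. To keep the overall running time polynomial in the size of the (final, pruned) list, I would use Threshold-$2$ enumeration within each equivalence class of extensions of a fixed $\psi$: as soon as two distinct extensions are produced, apply the agreement test to decide whether to commit to enumerating more, and otherwise halt enumeration for that $\psi$. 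The main obstacle, and the content of the present paper, is Stage 2: without a polynomial-time algorithm extending the polynomial-index partial homomorphism $\psi: M \to S_m$ to all of $A_n$, Stage 1 alone provides no control on $\vf$ outside $M$ and the list-decoding program cannot be completed. Stages 1, 3, and 4 rely only on tools already developed in~\cite{homcodes} and standard sampling, whereas Stage 2 is exactly what Theorem~\ref{thm:main} supplies.
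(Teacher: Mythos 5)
Your proposal matches the paper's treatment: the paper itself does not reprove this theorem but attributes the list-decoding machinery (your Stages 1, 3, 4, including the sampling-based pruning and the role of Threshold-$2$ enumeration) to~\cite{homcodes}, and explains that its own contribution is exactly your Stage 2 — Theorem~\ref{thm:main} extends the intermediate partial homomorphisms on subgroups $M \leq A_n$ of index less than $\binom{n}{2}$, completing the proof. So your decomposition and the division of labor between the two papers are essentially identical to the paper's own account.
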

The proof of this result depends on the main result of the present paper, Theorem~\ref{thm:main}, in the following way. 

For $A_n$, the theory of permutation groups tells us that $\Lambda \geq 1/{n\choose 2}$. It depends on $H$ whether this lower bound is tight. What the algorithm in~\cite{homcodes} actually finds is an intermediate output list consisting of $M \to S_m$ homomorphisms, where $M\leq A_n$ has order greater than $\Lambda\abs{A_n}$, i.e., $\ind{A_n}{M} < {n \choose 2}$. Our Theorem~\ref{thm:main} solves $\HomExt$ for the case $\ind{A_n}{M} = \poly(n)$ and $m<2^{n-1}/\sqrt{n}$, completing the proof of Theorem~\ref{thm:homcodes-alttoalt}.

\begin{remark}
\label{rmk:roadblock-to-LD}
The restrictions on $H$ in Theorem~\ref{thm:homcodes-alttoalt} arise from the limitations of the $\HomExt$ results in this paper. Any $\HomExt$ results relaxing conditions on $H$ would automatically yield the same relaxations on $H$ for list-decoding, potentially extending the validity of all permutation groups $H$. In this sense, the \textit{limitations of our understanding of the Homomorphism Extension Problem constitute one of the main roadblocks to list-decoding homomorphism codes for broader classes of groups. }
\end{remark}

\paragraph{(c) Role of Threshold-$2$ Enumeration\\}

Our discussion above shows that $\Lambda$ is the lower threshold for densities of subgroups from which $\HomExt$ must extend.  Also, the algorithm of~\cite{homcodes} guarantees that only partial homomorphisms with domain density greater than $\Lambda$ are of interest. 

However, the actual value of $\Lambda$ is not obvious to compute, nor is it automatically given as part of the input to a list-decoding problem. Lower bounds on $\Lambda$ are necessary to make $\HomExt$ tractable; they also improve the algorithmic efficiency and output quality in list-decoding. Solving $\HomExt$ Threshold-2 Enumeration instead of $\HomExt$ Search, when extending lists of partial homomorphisms, can provide (or improve) lower bounds on $\Lambda$. 

It is easy to see how Threshold-$2$ helps improve our lower bound on $\Lambda$. If a partial homomorphism $\psi$ extends non-uniquely, $\HomExt$ Threshold-$2$ returns a pair of homomomorphisms whose agreement is larger than the domain of $\psi$. So, their agreement (and the density of the domain of $\psi$) gives witness to an updated lower bound on $\Lambda$. 

Better lower  bounds for  $\Lambda$ have three main consequences. 
\begin{itemize}
\item As discussed, better lower bounds for $\Lambda$ relax the requirements for the $\HomExt$ algorithm called by the list-decoder. It suffices to extend from subgroups with densities above the lower bound. 
\item Since the algorithm of~\cite{homcodes} guarantees that only partial homomorphisms with domain density greater than $\Lambda$ are of interest, the intermediate list of partial homomorphisms may be pruned. 
\item Once a list of full homomorphisms is generated, a better lower bound allows better pruning of the output list of a list-decoder 
(discussed in footnote~\ref{footnote:pruning}).  
\end{itemize}

\bibliographystyle{alpha}
\bibliography{bibHE}
\end{document}